\newcommand{\calD}{\mathcal{D}}
\newcommand{\zz}{\mathbb Z}
\newcommand{\calM}{\mathcal M}
\newcommand{\calG}{\mathcal G}
\newcommand{\calB}{\mathcal B}
\newcommand{\calH}{\mathcal H}
\newcommand{\calS}{\mathcal S}
\newcommand{\calC}{\mathcal C}
\newcommand{\calL}{\mathcal L}
\newcommand{\calZ}{\mathcal Z}
\newcommand{\calK}{\mathcal K}
\renewcommand{\phi}{\varphi}
\newcommand{\norm}[1]{\left\lVert#1\right\rVert}
\newcommand{\Tr}{\operatorname {Tr}}
\newcommand{\bra}[1]{\ensuremath{\left\langle#1\right|}}
\newcommand{\ket}[1]{\ensuremath{\left|#1\right\rangle}}
\newcommand{\interior}[1]{%
	{\kern0pt#1}^{\mathrm{o}}%
}
\theoremstyle{plain}
\newtheorem{thm}{Theorem}
\theoremstyle{plain}
\newtheorem{lem}{Lemma}
\theoremstyle{definition}
\newtheorem{defn}{Definition}
\theoremstyle{remark}
\begin{document}
\title{Symmetry-protected self-correcting quantum memories}
\author{Sam Roberts and Stephen D. Bartlett}%
\affiliation{Centre for Engineered Quantum Systems, School of Physics, The University of Sydney, Sydney, NSW 2006, Australia}%

\date{20 August 2020}

\begin{abstract} 
A self-correcting quantum memory can store and protect quantum information for a time that increases without bound with the system size, without the need for active error correction.  We demonstrate that symmetry can lead to self-correction in 3D spin lattice models.  In particular, we investigate codes given by 2D symmetry-enriched topological (SET) phases that appear naturally on the boundary of 3D symmetry-protected topological (SPT) phases. We find that while conventional onsite symmetries are not sufficient to allow for self-correction in commuting Hamiltonian models of this form, a generalized type of symmetry known as a 1-form symmetry is enough to guarantee self-correction.  We illustrate this fact with the 3D `cluster state' model from the theory of quantum computing. This model is a self-correcting memory, where information is encoded in a 2D SET ordered phase on the boundary that is protected by the thermally stable SPT ordering of the bulk. We also investigate the gauge color code in this context. Finally, noting that a 1-form symmetry is a very strong constraint, we argue that topologically ordered systems can possess \textit{emergent} 1-form symmetries, i.e., models where the symmetry appears naturally, without needing to be enforced externally. 
\end{abstract}

\maketitle

\section{Introduction}

Quantum error correcting codes can be used to protect information in a noisy quantum computer.  While most quantum codes require complex active error correction procedures to be performed at regular intervals, it is theoretically possible for a code to be \emph{self-correcting}~\cite{Dennis,alicki2010thermal,BrownRev}.  That is, the energetics of a self-correcting quantum memory (SCQM) can suppress errors for a time that increases without bound in the system size, without the need for active control.  Such a memory is typically envisioned as a many-body spin system with a degenerate ground space. Quantum information can then be stored in its degenerate ground space for an arbitrarily long time provided that the system is large enough and the temperature is below some critical value. 

In seeking candidate models for self-correction, inspiration has been drawn from recent advances in our understanding of topologically ordered spin lattice models. The simplest example of a two-dimensional topologically ordered model is Kitaev's toric code \cite{Kitaev2003}, one of the most studied and pursued quantum error correcting codes. With active error correction, the toric code has a lifetime that grows exponentially with the number of qubits. However it is not self-correcting, as without active error correction the lifetime of encoded information is independent of the number of qubits. On the other hand, the four-dimensional generalization of the toric code \cite{Dennis} provides a canonical example of a self-correcting quantum memory. 

Encouraged by the capabilities of the 4D toric code, there has been a substantial effort to find self-correcting quantum memories that meet more physically realistic constraints and, in particular, exist in three or fewer spatial dimensions.  A number of no-go results make this search very challenging~\cite{BravTerh, PastYosh, YoshidaTopo, BravyiHaah, CardPoul,  CanLongRange,Komar2016}. While there has been considerable progress with proposals that attempt to circumvent these constraints in various ways~\cite{Haah,BravyiHaah,CanLongRange,Welded1,Welded2,ToricBoson,Pedrocchi,Brellcode,BrownEntropy}, none have yet provided a complete answer to the problem. 

Symmetry can provide new directions in the search for self-correcting quantum memories, as the landscape of ordered spin lattice models becomes even richer when one considers the interplay of symmetry and topology.  If a global symmetry is imposed on a model, a system can develop new quantum phases under the protection of this symmetry.   The properties that distinguish such symmetry-protected phases from more conventional phases persist only when these symmetries are not broken.  This has led to new types of phases protected by symmetry, including symmetry-protected topological (SPT) phases \cite{WenTopo,GroupCoho,chen2011classification,chen2011complete,schuch2011classifying,you2018subsystem} (phases with no intrinsic topological order) and symmetry-enriched topological (SET) \cite{WittenSPT,DefectsGauging,SET1,SET2,Fidkowski17,Wang13,Burnell14,Teo15,Mesaros13,Hermele14,Tarantino16,Vishwanath13,BulkBoundary1,BulkBoudnary2} phases (those including both intrinsic topological order and symmetry).  These phases have found many applications in quantum computing~\cite{darmawan2012,miller2016,bombin2010topological,YoshidaCCSPT,DBcompPhases,Miy10,else2012symmetryPRL,else2012symmetry,raussendorf2018computationally,bartlett2010haldanePRL,renes2013,MillerMiyake15,YoshidaHigher,NWMBQC,williamson2015symmetry,barkeshli2013twist,webster2018locality,brown2017poking,cong2017universal,ThermalSPT,zhu2018quantum,bartlett2017robust}. 

In this paper, we show that such phases can support self-correcting quantum memories in three-dimensions, provided an appropriate symmetry is enforced.  We argue that the generic presence of point-like excitations in commuting Hamiltonian models protected by an onsite symmetry precludes thermal stability (mirroring the instability of the 2D toric code), and so we are naturally led to consider higher-form symmetries.  Models with higher-form symmetries have excitations that are higher-dimensional objects, such as strings or membranes, rather than point-like excitations that are typical in models with onsite symmetries.  With such exotic excitations, we can seek models with the type of energetics believed to be needed for self-correction.  Focussing on models with symmetries that are not spontaneously broken, we consider models that have an SPT ordered bulk.  We then give two examples of 3D models that are self-correcting when a 1-form symmetry is enforced. The first example is based on the 3D `cluster state' model of Raussendorf, Bravyi and Harrington (RBH)~\cite{RBH}; this model with a 1-form symmetry has a bulk that remains SPT-ordered at non-zero temperature~\cite{ThermalSPT}.  We show that a self-correcting quantum memory can be encoded in a 2D SET boundary of this 3D model, and is protected by the thermally-stable SPT ordering of the bulk.  The second example is based on the 3D gauge color code~\cite{Bombin15}, which is conjectured to be self-correcting; we show that a commuting variant of this model is self-correcting when subject to a 1-form symmetry. 

Finally, we consider whether 1-form symmetries that lead to self-correction can be \emph{emergent}, rather than enforced. We say that a symmetry is emergent if the low-energy effective theory of a model strictly obeys this symmetry, rather than being required explicitly in the microscopic model. The analogy here is to the charge-parity symmetry that emerges in the effective anyon theory that describes the low-energy theory of many topologically ordered models, such as the toric code; such symmetries need not be externally enforced, as they are intrinsic to the model and stable under perturbations.  We give evidence that the 1-form symmetry used in the 3D gauge color code example may be emergent, arising as a result of emergent charge-parity symmetries on topologically-ordered codimension-1 submanifolds of the 3D bulk.  In the gauge color code, this symmetry is the `color flux conservation' identified by Bombin~\cite{SingleShot}. 

We would like to emphasise upfront an important subtlety in defining a symmetry-protected self-correcting quantum memory.  Enforcing symmetries can be extremely powerful, and along with potentially providing protection against errors, a poor choice in symmetry may be so strong as to render the system useless as a quantum memory.  In particular, one must be careful that the symmetry still allows for the encoding of logical information and implementation of logical operators using `local moves', i.e., sequences of local, symmetric operators. This requirement of a symmetry-protected SCQM will rule out some choices of strong symmetries.  For example, in the case of topological stabilizer codes, this removes the possibility of enforcing the entire stabilizer group as the symmetry (or for example, all of the vertex terms of a 3D toric code). We will revisit this subtle issue along with other rules in more detail in Sec.~\ref{sec:Thermal}. 

The paper is structured as follows.  In Sec.~\ref{sec:Background}, we review self-correction and the conditions required for it, as well as phases of matter protected by symmetry.  We analyse the effect of coupling symmetry-protected models to a thermal bath in Sec.~\ref{sec:Thermal}, and argue that onsite symmetries are insufficient to offer thermal stability of a symmetry-protected phase.  In Sec.~\ref{sec:RBH} we present our first example of a self-correcting quantum memory protected by a higher (1-form) symmetry:  a thermally-stable 3D SPT-ordered model with a protected 2D SET-ordered boundary.  A second example, based on the 3D gauge color code, is analyzed in Sec.~\ref{sec:GCC}.  We discuss the possibility of such 1-form symmetries being emergent in 3D topological models in Sec.~\ref{secEmergentHigherForm}, based around the gauge color code.  We discuss some implications of these results and open questions in Sec.~\ref{sec:Discussion}.

\section{Background}
\label{sec:Background}

In this section we briefly review self correcting quantum memories, topological phases with symmetry, and finally discuss how symmetries may play a role in self-correction.

\subsection{Self-correcting quantum memories}
The requirements of a self-correcting quantum memory have been formalized in the so-called `Caltech rules'~\cite{CanLongRange,Brellcode} (also see Ref.~\cite{BrownRev} for a review).  Specifically, a self-correcting quantum memory (SCQM) in $d$ spatial dimensions is a quantum many body spin system with the following four properties: (i) the Hilbert space consists of a finite density of finite-dimensional spins in $d$ spatial dimensions; (ii) the Hamiltonian $H$ has local terms with bounded strength and range, such that each spin is in the support of only a constant number of terms; (iii) the ground space of $H$ is degenerate (in the large size limit) such that a qubit can be encoded in the ground space and that this ground space is perturbatively stable; (iv) the lifetime of the stored information after coupling the system to a thermal bath must grow without bound in the system size. Typically, it is required that the lifetime grows exponentially in the system size, however there are situations where polynomial growth may be sufficient. Another desirable feature for a practical SCQM is the existence of an efficient decoder: a classical algorithm that corrects for errors in the system that have accrued over time. 

While the four-dimensional toric code meets all of the above requirements, there is currently no model that has been shown to meet these conditions in three-dimensions or fewer. The search for such a model has been encumbered by an assortment of no-go results for models consisting of commuting Pauli terms, known as stabilizer models~\cite{BravTerh,PastYosh,YoshidaTopo,BravyiHaah, CardPoul, CanLongRange}. These no-go results are typically centered around the idea that a SCQM must have a macroscopic energy barrier, meaning any sequence of errors that are locally implemented must incur an energy cost that diverges with the size of the system. (Note we will define the energy barrier more concretely in the following subsection.)  If a code has a macroscopic energy barrier then, naively, one may expect that logical faults can be (Boltzmann) suppressed by increasing the system size. This is indeed part of the puzzle, as it has been shown that a diverging energy barrier is necessary but not sufficient for self-correction for commuting Pauli Hamiltonians \cite{Temme2014,Temme2015} and abelian quantum doubles \cite{Komar2016}. (In particular, this rules out any codes based on entropic error suppression such as that of Brown \textit{et al.}~\cite{BrownEntropy}.) 

As such, any self-correcting quantum memory should be free of string-like (one-dimensional) logical operators, as these codes have a constant energy barrier. This holds since the restriction of a string-like logical to some region will commute with all terms in that region, and potentially only violate local terms near the boundary of the string. Therefore, to build up a logical fault (i.e., a logical string operator), one only needs to violate a constant number of terms, costing a constant amount of energy. This immediately rules out all 2D stabilizer codes \cite{BravTerh}, and 3D stabilizer Hamiltonians that have translationally invariant terms and a ground space degeneracy that is independent of system size (the so-called STS models of Yoshida \cite{YoshidaTopo}). Quantum codes in 3D that are free of string-like logicals have been investigated by Haah \cite{Haah,BravyiHaah} and Michnicki \cite{Welded1, Welded2}, however they do not achieve a memory time that is unbounded (with the size of the system) for a fixed temperature.

One class of proposals seeks to couple a 2D topologically ordered model, such as the toric code, to a 3D theory with long range interactions with the goal of confining the anyonic excitations. For example, excitations in the toric code can be coupled to the modes of a 3D bosonic bath \cite{ToricBoson,Pedrocchi,CanLongRange} such that anyonic excitations experience long range interactions. This coupling can result in a strong suppression of anyon pair production via a diverging chemical potential, and a confinement in excitation pairs leading to self-correcting behaviour. A complication with this approach is that the bulk generically requires fine tuning, and the chemical potential can become finite upon a generic perturbation~\cite{CanLongRange}. Such models are not self-correcting under generic perturbations.

Finally, while the search for self-correcting quantum memories has primarily focussed on stabilizer codes, \emph{subsystem codes}~\cite{PoulinSubsystem, 3DBaconShor} are a promising direction because many of the no-go theorems described above do not directly apply.  Briefly, a subsystem code is a stabilizer code where some of the logical qubits are chosen not to be used for encoding, and instead are left as redundant gauge degrees of freedom.  For the purposes of quantum memories, the use of subsystem codes and gauge qubits offers much more flexibility in selecting a Hamiltonian for the code, and the spectral requirements of the model for self-correction are potentially more relaxed.  The 3D gauge color code~\cite{Bombin15} is an example of a topological subsystem code with a variety of remarkable properties, including a fault-tolerant universal set of gates via a technique known as gauge fixing, and the ability to perform error-correction with only a single round of measurements. This later property is known as single-shot error correction~\cite{SingleShot} and arises from a special type of confinement of errors during the measurement step. It is conjectured in Ref.~\cite{Bombin15} that the 3D gauge color code is self-correcting.

\subsubsection{Thermalization and memory time}

The central question for a candidate self-correcting quantum memory is how long the encoded information can undergo thermal evolution while still being recoverable. For a self-correcting quantum memory, this time should grow with the system size provided the temperature is sufficiently low. In this section, we briefly review thermalization and motivate the energy barrier as a useful tool to diagnose the memory time.

The standard approach to modelling thermalization of a many body system is to couple the system to a thermal bosonic bath. Let $H_\text{sys}$ be the Hamiltonian describing the quantum memory of interest, and let $H_{\text{bath}}$ be a Hamiltonian for the bosonic bath. Thermalization is modelled by evolution under the following Hamiltonian 
\begin{equation}\label{eqCoupleBath}
H_{\text{full}} = H_{\text{sys}} +H_{\text{bath}} +  \lambda \sum_{\alpha} S_{\alpha} \otimes B_{\alpha},
\end{equation}
where $S_{\alpha} \otimes B_{\alpha}$ describe the system-bath interactions, $S_{\alpha}$ is a local operator acting on the system side, $B_{\alpha}$ is an operator acting on the bath side, and $\alpha$ is an arbitrary index. It is assumed that the coupling parameter is small, $|\lambda|\ll 1$.

Suppose that the state is initialized in a ground state $\rho(0)$ of $H_{\text{sys}}$. As the system is coupled to the thermal bath, after some time $t$ the system evolves to a noisy state $\rho(t)$.  Due the nature of the coupling, described by local coupling operators $S_{\alpha} \otimes B_{\alpha}$, errors are introduced to the system in a local way, and so the time evolution of the state $\rho(t)$ must be described by a local sequence of operations. One can give a precise description of this process using a perturbation theory analysis, such as a master equation approach like the well-known Davies formalism~\cite{Davies1,Davies2} which we review in Appendix~\ref{sec:Davies}.

For a self-correcting quantum memory, we wish to be able to recover the state $\rho(0)$ from $\rho(t)$ after some time $t$ using a single final round of error correction. Error correction consists of two steps, firstly a sequence of measurements is performed on the noisy state $\rho(t)$ to obtain an error syndrome, then a recovery map is performed that depends on the syndrome (the measurement outcomes). The net action of the syndrome measurement and recovery map can be condensed into a map $\Phi_{ec}: \calH \rightarrow \calH$, where $\calH$ is the Hilbert space of the memory system. For a fixed error rate $\epsilon$, we can define the memory time $\tau_{\text{mem}}$ as the maximum $t$ for which the inequality
\begin{equation}
\norm{\Phi_{\text{ec}} (\rho(t)) - \rho(0)}_1 \leq \epsilon
\end{equation}
is satisfied.

An upper bound to the memory time, is the mixing time $\tau_{\text{mix}}$, which is the time taken for $\rho(t)$ to be $\epsilon$ close to the Gibbs state (for some fixed $\epsilon$). This bound holds since once the system has thermalized to the Gibbs state, the system retains no information about the initial state. However, the memory time can be substantially less than the mixing time (as, for example, with the 3D toric code)~\cite{Temme2014}, and so this mixing time does not in general give us a tight bound on the memory time.  Instead, a useful proxy for determining the memory lifetime of a SCQM is the energy barrier, since a growing energy barrier is necessary in many cases to achieve self-correction. In the following subsection we define this quantity. 

\subsubsection{Energy barrier}

If we cannot recover the logical information after some time $t$, then we say that a logical fault has occurred. The coupling to the bath can lead to a logical fault if a sequence of local errors from the system-bath coupling results in a logical operator (or an operator near to a logical operator). Due to the locality of the coupling between the system and bath (in Eq.~(\ref{eqCoupleBath})), errors are introduced to the memory in a local way. There is an energy cost associated with any such process, which is directly related to the probability of such a process occurring when coupled to a bath at temperature $T$.  We now define this \emph{energy barrier} precisely.

We first define a local decomposition of a logical operator. In this paper we restrict to stabilizer Hamiltonians, however the energy barrier can similarly be defined for any commuting projector Hamiltonian. Let $H_{\calS} = -\sum_{i}h_i$ be a stabilizer Hamiltonian (i.e., each local term is a Pauli operator, and all terms mutually commute), and $\overline{l}$ a Pauli logical operator. A local decomposition of $\overline{l}$ is a sequence of Pauli operators $\calD(\overline{l}) = \{l^{(k)} ~|~ k = 1, \ldots N\}$ such that $l^{(1)}=I$ and $l^{(N)} = \overline{l}$, and $l^{(k)}$ and $l^{(k+1)}$ differ only by a local (constant range) operator.

For any ground state $\ket{\psi_0}$ of $H_{\calS}$, the state $l^{(k)}\ket{\psi_0}$ is also an eigenstate of $H_{\calS}$ (for each $k$) with energy $E^{(k)}$. We can use this to define the energy barrier $\Delta$ for a logical fault. Namely, the energy barrier for the local decomposition $\calD(\overline{l})$ is defined as 
\begin{equation}
\Delta_{\calD(\overline{l})} = \max_k (E^{(k)} - E_0),
\end{equation}
where $E_0$ is the ground space energy. The energy barrier for a logical fault in $H_{\calS}$ is defined as 
\begin{equation}\label{eqEnergyBarrier}
\Delta = \min_{\overline{l}, \calD(\overline{l})} \Delta_{\calD(\overline{l})}.
\end{equation}

In other words, the energy barrier for a logical fault is the smallest energy barrier of any logical operator, minimized over all local decompositions. Intuitively, the energy barrier should be large in order to suppress logical faults from occurring. 

The expectation for many models is that below some critical temperature the memory time will grow exponentially in the energy barrier
\begin{equation}
\tau_\text{{mem}} \sim e^{\beta \Delta}
\end{equation}
which is known as the \textit{Arrhenius law}. This relationship is observed to hold for many models such as the classical 2D Ising model and 4D toric code, but does not hold in general, (for instance in models when entropic effects are significant \cite{Haah,BravyiHaah,Welded1, Welded2}). Indeed for stabilizer Hamiltonians, an energy barrier that grows with the size of the system is a necessary condition (although not sufficient) for self-correction \cite{Temme2014, Temme2015}.

\subsubsection{Dimensionality of excitations and self-correction}
We conclude this subsection with a comment regarding the crucial role of the dimensionality of excitations in the feasibility of self-correction.  The conventional wisdom is that deconfined point-like excitations are an obstruction to self-correction, as harmful errors can be introduced with a low energy cost due to excitations that are free to propagate.  For models with higher dimensional excitations, the energy cost to growing and moving these excitations can be large, such that logical errors are suppressed. 

The properties of excitations and their dimensions for a given system can often be understood in terms of its symmetries.  As we will see in Sec.~\ref{sec:Thermal}, systems with global onsite symmetries have point-like excitations that are free to propagate, and therefore such symmetries do not offer any extra stability. This motivates the consideration of more general subsystem symmetries beyond the global onsite case. Higher-form symmetries are a family of symmetries that generalise the conventional global onsite symmetry. Excitations in systems with higher-form symmetries form higher-dimensional objects, and so their importance in the context of self-correction becomes apparent. 

\subsection{Topological phases with symmetry}
Quantum phases of matter are characterised by their ground state properties.  Two gapped local Hamiltonians are said to belong to the same phase if they are connected by a one-parameter continuous family of local Hamiltonians without closing the gap. When symmetry is at play, the classification becomes richer, as all Hamiltonians in the family must respect the symmetry. In particular, it is possible that two Hamiltonians that are equivalent in the absence of symmetry, become inequivalent when the symmetry is enforced. This leads to the notion of SPT and SET phases, which we now briefly define (see Ref.~\cite{WenTopo} for a detailed discussion).

Consider a lattice $\Lambda$ in $d$ dimensions with a $D$-dimensional spin placed at each site $i\in \Lambda$. We consider systems described by a gapped, local Hamiltonian $H = \sum_{X\subset \Lambda} h_X$. Here, `local' means that each term $h_X$ is supported on a set of spins $X$ with bounded diameter. We also assume the system has a symmetry described by a group $G$ with a unitary representation $S$. We say two gapped Hamiltonians $H_0$ and $H_1$ with symmetry $S(g)$, $g\in G$ belong to the same phase if there exists a continuous path of gapped, local Hamiltonians $H(s)$ $s\in [0,1]$ that all obey the symmetry $S(g)$ such that $H(0) = H_0$ and $H(1) = H_1$. 

For SPT and SET ordered systems, one commonly considers global symmetries $S(g)$ that act via an onsite fashion on the underlying degrees of freedom. The global action of these onsite symmetries $S(g)$ may be expressed as 
\begin{equation}\label{eqOnsiteSym}
S(g) = \bigotimes_{i \in \Lambda} u(g), \quad g \in G,
\end{equation}
where $u(g)$ is a local, site-independent representation of $G$. 

We will also consider a generalised class of global symmetries, known as higher-form symmetries, which have been recently of high interest in the condensed matter, high energy and quantum information communities \cite{kapustin2017higher,HigherForm1,YoshidaHigher,ThermalSPT,HigherForm2,lake2018higher}. These higher-form symmetries form a family of increasingly stringent constraints that generalize the onsite case, and this will be central in the discussion of the interplay of symmetry and self-correction. We introduce these symmetries in Sec.~\ref{secHigherForm}, and for the present discussion and the definitions of SPT and SET phases, the action of the symmetry $S(g)$ is left general.

\subsubsection{Symmetry protected topological phases} 

An SPT phase with symmetry $S(g)$ is defined as class of symmetric Hamiltonians which are equivalent under local symmetric transformations which do not close the gap and which are not in the same class as the trivial phase (a non-interacting spin model with a product ground state), but which \emph{are} in the same phase as the trivial model if the symmetry were not enforced.  Ground states of such models are short range entangled, meaning they can be mapped to a product state under a constant depth quantum circuit; however, such a circuit must break the symmetry. Key characteristics of such phases are the absence of anyonic excitations, and the absence of topology dependent ground space degeneracy. However, when defined on a lattice with boundary, these phases host protected modes localized on the boundary, meaning the boundary theory of an SPT phase must be either symmetry breaking, gapless, or topologically ordered (note that a topologically ordered boundary can only exist when the boundary has dimension $d \geq 2$). As such, these systems are typically regarded as having a trivial bulk, but exotic boundary theories. Some well known examples are the 1D cluster state and the spin-1 Haldane phase (with $\zz_2^2$ symmetry), both of which host degenerate boundary modes that transform as fractionalized versions of the symmetry. More generally the group cohomology models \cite{GroupCoho} provide a systematic way of constructing SPT ordered models.
  
\subsubsection{Symmetry enriched topological phases} 

An SET phase with symmetry $S(g)$ is defined by a Hamiltonian that is distinct from the trivial phase, even without any symmetry constraint. These topological phases can form distinct equivalence classes under the symmetry $S(g)$, and are referred to as SET phases. The key characteristics of such phases are the presence of anyonic excitations, and topology-dependent ground space degeneracy. These anyons can carry fractional numbers of the symmetry group, or may even be permuted under the symmetry action. Such anyon permuting symmetries can be used to define symmetry defects on the lattice, which can be thought of as localized and immobile quasiparticles that transform anyonic excitations when they are mutually braided. Some well known examples of SETs are found in Refs.~\cite{DefectsGauging,Teo15,Mesaros13,Hermele14,Tarantino16}, and a general framework is given by the symmetry-enriched string-nets of Refs.~\cite{SET1,SET2}. These SET phases fall into two categories.  The first category consists of non-anomalous SET phases.  These are standalone topological phases in $d$-dimensions with onsite symmetry $S(g)$ as in Eq.~(\ref{eqOnsiteSym}). Anyons may undergo transformations under the symmetry action $S(g)$. The second category consists of anomalous SET phases. These are $d$-dimensional topological phases with a symmetry action that cannot be realised in an onsite way on the degrees of freedom on the $d$-dimensional boundary. These anomalous phases appear only on the boundary of $(d{+}1)$-dimensional SPT phases.

It is conjectured that the topologically ordered boundary of an SPT phase with bulk onsite symmetry must always be anomalous. In particular, a wide class of 3-dimensional SPT phases can be classified by the group cohomology models~\cite{GroupCoho}, which are labelled by elements of the cohomology group $H^4(G,U(1))$. (See Refs.~\cite{vishwanath2013physics,wang2013boson,burnell2014exactly,kapustin2014symmetry} for examples of models outside this classification.) Moreover, in 2 dimensions, anyonic systems with discrete unitary symmetry $G$ (that does not permute the anyons) also have a label in $H^4(G,U(1))$ that classifies the anomalies~\cite{H4Obsturction} (see also~\cite{DefectsGauging}). The case $\omega=1$ (i.e., trivial) means that there is no anomaly, and $\omega \neq 1$ means the system is anomalous and cannot be realised in 2-dimensions in a standalone way with onsite symmetries~\cite{WittenSPT,Fidkowski17,Wang13,Burnell14,Vishwanath13,BulkBoundary1}. A conjecture of Ref.~\cite{BulkBoundary1} is that the gapped boundary topological theory of a group cohomology model must always have an anomaly $\omega \in H^4(G,U(1))$ that agrees with the label specifying the bulk SPT order. This kind of bulk-boundary correspondence was proved in Ref.~\cite{BulkBoudnary2} in the case that the symmetry group $G$ is abelian and does not permute the boundary anyons. Moreover, in Ref.~\cite{AnomalousBoundary}, a general procedure to extract a boundary anomaly label from a bulk SPT has been given, in agreement with the conjecture. 

\subsubsection{Higher-form symmetries}
\label{secHigherForm}

We will make use of a family of symmetries called higher-form symmetries~\cite{kapustin2017higher,HigherForm1, YoshidaHigher,ThermalSPT,HigherForm2,lake2018higher}, generalizing the onsite case. These symmetries have been of recent interest for several reasons, in particular, they provide a useful structure for error correction in quantum computation \cite{ThermalSPT}, have been used to construct new phases of matter \cite{YoshidaHigher}, and to understand topological phases from the symmetry breaking paradigm \cite{HigherForm1,lake2018higher}. 

A $q$-form symmetry (for some $q\in \{0,1,...,D{-}1\}$) is given by a symmetry operator associated with every closed codimension-$q$ submanifold of the lattice; these operators are written as $S_{\calM}(g)$ where $\calM$ is a closed codimension-$q$ submanifold of $\Lambda$ and $g \in G$.  On these codimension-$q$ submanifolds, the action of the symmetry operators takes an onsite form:  for $g\in G$ and a codimension-$q$ submanifold $\calM$, the symmetry operator is
\begin{equation}\label{eqHigherFormSym}
S_{\calM}(g) = \prod_{i \in \calM} u(g), \quad g \in G
\end{equation}
where the product runs over all sites $i$ of the submanifold $\calM$, and $u(g)$ is a local, site-independent representation of $G$.
That is, higher-form symmetries can be thought of as being onsite symmetries on lower dimensional submanifolds.  For systems with a boundary, we only require that the submanifolds on which the higher-form symmetries are supported are closed relative to the boundary of the lattice. In other words, the manifold $\calM$ on which the symmetry is supported may have a boundary on the boundary of the lattice $\Lambda$, i.e. $\partial \calM \subset \partial \Lambda$. 

A key feature of systems with $q$-form symmetries is that symmetric excitations must form $q$-dimensional objects. Of particular interest in this paper will be 1-form symmetries in 3-dimensional systems, which are the next weakest generalization (within the family of higher-form symmetries) of the conventional global onsite symmetry. Symmetry operators in such systems are supported on closed 2-dimensional surfaces, and excitations form closed 1-dimensional loop-like objects. In Sec.~\ref{sec:RBH} and Sec.~\ref{sec:GCC} we will look at two examples of self-correcting quantum memories protected by $\zz_2^2$ 1-form symmetries.

\subsubsection{Self-correction and topological order}
The relationship between self-correction and thermal stability is complex. Self-correction is a dynamic property of a system, whereas thermal stability is an equilibrium property.  In many previous investigations, various quantities have been used as proxies or indicators of self-correction, for instance, the existence of a nonzero temperature phase transition \cite{Brellcode,hastings2014self}, the presence of topological entanglement entropy in the Gibbs state \cite{castelnovo2008topological}, or the nontriviality of Gibbs ensemble in terms of circuit depth \cite{hastings2011topological}. Here, by `thermal stability' we specifically mean the presence of topological order in the thermal state, as determined by the minimal circuit depth to prepare, following Refs.~\cite{hastings2011topological,ThermalSPT}. While we do not yet have a general result connecting the thermal stability and memory time, we explore the connection between these two notions further through the example of the RBH model, by proving bulk thermal stability from the existence of a macroscopic energy barrier on the boundary. This type of bulk-boundary correspondence (at nonzero temperature) provides evidence in favour of a close relationship between thermal stability and self-correction.

\subsection{Symmetry constraints and quantum memories}
\label{sec:Thermal}

In this section, we consider what types of symmetric models may be worth investigating as potential self-correcting quantum memories.

An important condition that must be met by a symmetry-protected self-correcting quantum memory is that all logical operators can be implemented through a sequence of symmetric local moves. That is, all logical operators $\overline{l}$ admit a local decomposition $\calD(\overline{l}) = \{l^{(k)} ~|~ k = 1, \ldots N\}$, such that all $l^{(k)}$ are symmetric. If this condition is not met, then one cannot guarantee the existence of a symmetric, local encoding circuit. Furthermore, this condition implies that even in the presence of symmetry, the bath is capable of implementing all logical faults and that the logical information will eventually be thermalized. If such a condition is not met, one can construct `trivial' self-correcting models in which the symmetry is spontaneously broken, as explained below.

\subsubsection{No spontaneous symmetry breaking}

If we require our model to admit symmetric local decompositions of all logical operators, then the enforced symmetry $S(g)$ cannot be spontaneously broken. In a model where the symmetry is spontaneous broken, the ground space has less symmetry than the Hamiltonian, and this can render the model trivial as a memory by disallowing logical operator actions at all.  Different ground states will in general be in different eigenspaces of the symmetry operator, and thus enforcing the symmetry would be prohibit transitions between ground states.  In the case that the spontaneously broken symmetry is higher-form, enforcing it could remove some or all of the anyonic excitations from the model. 

The 3D toric code provides an illustrative example, where one can trivially obtain a self-correcting quantum memory by enforcing a $\zz_2$ 1-form symmetry that prevents any of the vertex terms from flipping. Enforcing the vertex and plaquette terms in a 2D toric code provides another trivial example of this phenomenon. These examples do not admit symmetric local decompositions of all logical operators. For this reason, we only consider models where the symmetry is not spontaneously broken, and SPT ordered systems provide a natural family of candidates.

\subsubsection{Onsite symmetries are insufficient for stability}\label{secOnsite}

In this section we argue that onsite symmetries are insufficient to promote a 2D topological quantum memory to be self-correcting, even if such a phase lives on the boundary of a 3D SPT model.  Our goal here is simply to motivate moving beyond onsite symmetries (to higher-form symmetries), not to rigorously rule out any role for onsite symmetries in the study of self-correction.  

In particular, consider the case where the full system is given by a commuting Hamiltonian with boundary, and that the protecting symmetry is abelian and onsite (with possibly an anomalous boundary action).  The excitations in such systems will be point-like, and their presence precludes the possibility of having thermally stable (symmetry-protected) topological order, as shown in Ref.~\cite{ThermalSPT}. This suggests that the boundary theory is also not thermally stable, and thus not self-correcting. Indeed, as we show in Appendix~\ref{sec:ThermalOnsite}, this is the case for the class of models where the boundary is an abelian twisted quantum double with a potentially anomalous boundary symmetry.  Specifically, we show that there is a constant (symmetric) energy barrier in this case.  Therefore we see that in the case of onsite (0-form) symmetries, the SPT ordered bulk offers no additional stability to the boundary theory. This motivates us to consider the boundaries of SPTs protected by 1-form (or other higher-form) symmetries.

\subsubsection{System-bath coupling with symmetry and the symmetric energy barrier}

Consider the system bath coupling of Eq.~(\ref{eqCoupleBath}) and a symmetry $S(g)$ (with $g\in G$ for some group $G$). If 
\begin{equation}\label{eqSBSym}
[H_{\text{full}}, S(g)] = 0,
\end{equation}
then all of the errors that are introduced due to interactions with the bath must be from symmetric processes that commute with $S(g)$. In particular, only excitations that can be created by symmetric thermal errors will be allowed and the symmetry is preserved throughout the dynamics. 

Under symmetric dynamics, we should only consider local decompositions of logical operators that commute with the symmetry when defining the energy barrier $\Delta$. If a local decomposition $\calD(\overline{l}) = \{l^{(k)} ~|~ k = 1, \ldots, N\}$ of a logical operator $\overline{l}$ is such that $[l^{(k)},S(g)] = 0$ for all $k$ and all $g\in G$, then we call $\calD(\overline{l})$ a symmetric local decomposition of $\overline{l}$. We label such symmetric local decompositions with symmetry $G$ by $\calD_G(\overline{l})$. Then the \emph{symmetric energy barrier} is defined as 
\begin{equation}\label{eqSymEnergyBarrier}
\Delta_G = \min_{\overline{l}, \calD_G(\overline{l})} \Delta_{\calD_G(\overline{l})}.
\end{equation}
Namely, it consists of the smallest energy barrier for any logical operator, where the cost is minimized over all symmetric local decompositions. For notational simplicity, we often omit the subscript $G$ as the symmetry is clear from context.

With the abundance of no-go results for self-correction in 2D and 3D stabilizer memories, the relevant question is whether one can achieve self-correction if the system bath coupling respects a symmetry. In particular, for a given model $H_{\calS}$, can a symmetry $S(g)$ be imposed such that $H_{\text{sys}}$ has a macroscopic symmetric energy barrier?

\section{Self-correction with a 1-form SPT phase}
\label{sec:RBH}

Our first example of a 3D self-correcting model in the presence of a 1-form symmetry is described by a commuting Hamiltonian based on the cluster-state model of Raussendorf, Bravyi, and Harrington (RBH)~\cite{RBH}. This model has been used in high-threshold schemes for fault-tolerant quantum computation~\cite{Rau06,RBH,TopoClusterComp}. In particular, the RBH model underpins the topological formulation of measurement-based quantum computation, where single qubit measurements are used to simulate the braiding of punctures in the 2D toric code.

The RBH model is an example of an SPT ordered system under 1-form symmetry, which is thermally stable~\cite{ThermalSPT}.  It contains no anyonic excitations in the bulk, however when defined on a lattice with a boundary, the boundary theory can be gapped, topologically ordered, and possesses point-like anyonic excitations. In particular, the boundary can be chosen to be described by a boundary Hamiltonian equivalent to the 2D surface code. Without any symmetry, the excitations of this 2D surface code phase are deconfined, and information encoded in this surface will thermalize in constant time in the absence of error-correction.  However, in the presence of symmetry, a natural question is whether the boundary code inherits any protection from the bulk SPT order. We will show that in the presence of 1-form symmetry, the bulk SPT order gives rise to confinement of boundary excitations and ultimately a macroscopic lifetime of boundary information. As such, this model provides a simple example of an anomalous SET phase on the boundary of a 3D higher-form SPT that is thermally stable, giving a self-correcting quantum memory.

We first define and present the bulk properties of this model. We then define some important boundaries of the model, including the anomalous toric code SET phase. Finally we present the global lattice and boundary conditions and discuss the resulting model as a quantum code and show that it results in a symmetry-protected SCQM.

\subsection{The RBH model -- bulk properties}

\subsubsection{The RBH bulk Hamiltonian}

In this subsection, we define the RBH model in the bulk.  Consider a 3D cubic lattice $\calL$. Label the set of all vertices, edges, faces and volumes of $\calL$ by $V$, $E$, $F$, $Q$. Similarly, to prepare ourselves for boundary conditions that are to be specified later, we label the interior vertices, edges, faces and volumes by $\interior{V}$, $\interior{E}$, $\interior{F}$, $\interior{Q}$, and $\interior{\calL}$ is the collection of all interior cells. For now we ignore any boundary conditions (meaning we consider only interior cells), and one may consider periodic boundary conditions until specified otherwise. We place a qubit on every face $f\in F$ and on every edge $e \in E$. We refer to qubits on faces as primal qubits, and qubits on edges as dual qubits.

The bulk Hamiltonian is a sum of commuting cluster terms
\begin{equation}\label{eqBulkHam}
H_{\interior{\calL}} = -\sum_{f \in \interior{F}} K_{f} -\sum_{e \in \interior{E}} K_{e},
\end{equation}
where each cluster term is a 5-body operator
\begin{equation}\label{eqClusterStateStabilizers}
K_{f} = X_f \prod_{e:e \subset f} Z_e, \qquad K_{e} = X_e \prod_{f: e \subset f} Z_f,
\end{equation}
and $X_v$ and $Z_v$ are the usual Pauli-$X$ and Pauli-$Z$ operators acting on the qubit $v$. These terms are depicted in Fig.~\ref{figClusterLattice}

We note that the terms in the Hamiltonian can be considered `dressed' terms of a simpler, trivial bulk model. In particular, we define the ``trivial model'' $H_{\interior{\calL}}^{(0)}$ to be a trivial paramagnet:
\begin{equation}\label{eqHamTriv}
H_{\interior{\calL}}^{(0)} = -\sum_{i \in \interior{E}\cup \interior{F}} X_i.
\end{equation}

One can see that these two models are equivalent up to a constant depth circuit
\begin{equation}
H_{\interior{\calL}} = UH_{\interior{\calL}}^{(0)}U^{\dagger},
\end{equation}
where $U$ is a product of controlled-$Z$ gates that act on all pairs of neighbouring qubits at sites $i$ and $j$ by
\begin{equation}\label{eqCZ}
CZ_{ij} = \exp \left( \frac{i\pi}{4}(1-Z_i)(1-Z_j) \right). 
\end{equation}
Indeed, let a face $f$ and an edge $e$ be referred to as neighbours if the edge is contained within the face $e \subset f$. Then $U$ is a product of controlled-$Z$ gates over all neighbouring sites
\begin{equation}\label{eqRBHcircuit}
U = \prod_{f \in \interior{F}} \prod_{e \subset f} CZ_{f e}.
\end{equation} 

From this we can see that the bulk Hamiltonian $H_{\interior{\calL}}$ is non-degenerate (since $H_{\interior{\calL}}^{(0)}$ is non-degenerate).

\begin{figure}
	\centering
	\subfloat[]{	\includegraphics[width=0.4\linewidth]{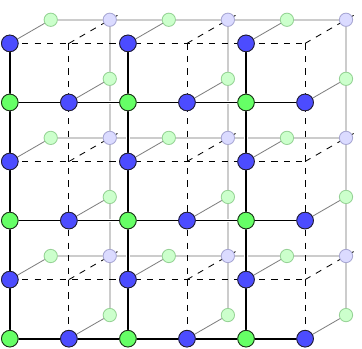}}	\label{figClusterBdrya}
	\quad
	\subfloat[]{	\includegraphics[width=0.4\linewidth,angle=-00]{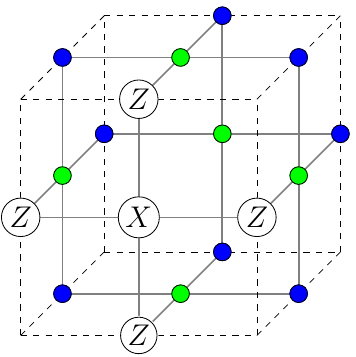}}	\label{figClusterBdryb}
	\caption{(a) A portion of the bulk lattice. Primal qubits are depicted in green, while dual qubits are depicted in blue. (b) A bulk cluster term $K_f$. In both figures, bold lines indicate nearest neighbour relations between qubits, while dashed lines indicate edges of the ambient cubic lattice.}%
	\label{figClusterLattice}
\end{figure}

\subsubsection{Bulk excitations without symmetry}

We now consider the excitations in the model in the absence of any symmetry considerations. In the bulk, all excitations can be created by products of Pauli-$Z$ operators applied to the ground state. Indeed, for any subset of edges $E' \subset \interior{E}$ or subset of faces $F'\subseteq \interior{F}$, the operator 
\begin{equation}\label{eqBulkEx}
Z(E',F') = \prod_{f \in F'} Z_f \prod_{e \in E'}Z_e
\end{equation}
anti-commutes with precisely the cluster terms $K_e$ and $K_f$ for which $e \in E'$ and $f \in F'$, and commutes with all remaining bulk terms. Moreover, all excitations can be reached in this way (as can be verified by considering the trivial model $H_{\interior{\calL}}^{(0)}$ and the local unitary $U$ of Eq.~(\ref{eqRBHcircuit})).  The energy cost for creating excitations at sites in $E' \cup F'$ with the operator $Z(E',F')$ is given by 
\begin{equation}
|E'\cup F'|\Delta_{\text{gap}}
\end{equation}
where $\Delta_{\text{gap}}=2$ is the energy gap. 

The bulk model is very simple due to its low-depth equivalence with the trivial paramagnet. Excitations can be locally created on any site by flipping a spin, they have no interaction with each other, and the energy cost of a general excitation is proportional to the number of flipped spins.

We refer to excitations supported on sites  $F'\subseteq \interior{F}$ as primal excitations, and excitations supported on sites $E' \subset \interior{E}$ as dual excitations.

\subsubsection{1-form symmetries} 

The model $H_{\interior{\calL}}$ has a $\zz_2^2$ 1-form symmetry, consisting of operators supported on closed 2-dimensional surfaces on each of the primal and dual sublattices. In particular, a generating set are given by vertex and cube operators (for dual and primal qubits, respectively), for each $ q \in Q$ and $v \in V$
\begin{equation}\label{eqRBH1formSymOps}
S_q = \prod_{f:f \subset q} X_f, \quad S_v = \prod_{e: v\subset e} X_e. 
\end{equation}

In the bulk, these operators are depicted in Fig.~\ref{figCluster1form}. Taking products of these operators gives rise to the $\zz_2^2$ 1-form symmetry 
\begin{equation}\label{eqSymG}
G = \langle S_v , S_q ~|~ v \in V, q \in Q\rangle.
\end{equation}
A general 1-form symmetry operator is generated by a product of Pauli $X$ operators on faces of (relative) 2-cycles of the lattice $\calL$ and edges dual to (relative) 2-cycles on the dual lattice. Comparing to the general expression of 1-form symmetries in Eq.~(\ref{eqHigherFormSym}), we note that the codimension 2 surfaces $\mathcal{M}$ are given by these (relative) 2-cycles and dual 2-cycles.

\begin{figure}
	\centering
	\subfloat[]{	\includegraphics[width=0.45\linewidth]{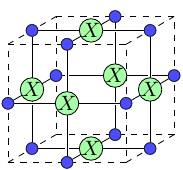}}
	\quad
	\subfloat[]{	\includegraphics[width=0.45\linewidth]{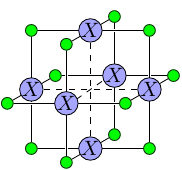}}
	\caption{Generators of the 1-form symmetry in the bulk. (a) A primal generator $S_q$. (b) A dual generator $S_v$. Thick lines denote neighbour relations, and dashed lines denote the cubic lattice.}
	\label{figCluster1form}
\end{figure}

One can easily check that these operators commute with both $H_{\interior{\calL}}$ and $H_{\interior{\calL}}^{(0)}$.

We briefly remark on lattices with boundaries. While the microscopic content of the 1-form operators may change near a boundary (as dictated by the boundary lattice geometry), the macroscopic and topological features remain unchanged -- namely that they are generated by operators supported on 2-dimensional surfaces that are closed relative the boundary. We will examine the boundaries in the following sections in analogy to the boundaries of topological phases, such as the surface code~\cite{bravyi1998quantum}, where microscopic details help us understand the topological properties of the boundary, but are themselves not the most important features.

It has been shown that under these symmetries the bulk model $H_{\interior{\calL}}$ belongs to a nontrivial SPT phase while the trivial bulk $H_{\interior{\calL}}^{(0)}$ belongs to the trivial phase. Moreover, this distinction persists to nonzero temperature, where $H_{\interior{\calL}}$ remains SPT ordered~\cite{ThermalSPT}. In particular, while the whole unitary $U$ commutes with the symmetry, the individual $CZ$ gates do not. In fact, there is no constant depth circuit with local gates that commute with the symmetry mapping the RBH model to the trivial model.

\subsubsection{Bulk excitations with 1-form symmetries}

We now consider what excitations are possible in the presence of the 1-form symmetry $G$. If we consider bulk excitations, then the excitation operator $Z(E', F')$ of Eq.~(\ref{eqBulkEx}) is symmetric if and only if both $E'$ is a 1-cycle (i.e., it has no boundary) and $F'$ is a 2-cocycle (meaning it is dual to a 1-cycle on the dual lattice -- where vertices are replaced with cubes, edges with faces, and so on). In other words, the only symmetric bulk excitations are formed by combinations of closed loop-like (i.e., 1-dimensional) objects, and we refer to them as loop excitations. We can further refer to loop excitations as either primal or dual if they are supported on sets of faces or edges, respectively.  

Both the primal and dual loop excitations have an energy cost proportional to their length, and are thus confined. This confinement leads to thermal stability of the model.

\subsection{Boundaries I: General considerations}\label{secBoundarySec}
To obtain degeneracy in the ground space we must consider a lattice with boundaries. The allowable boundary Hamiltonians are dictated by the symmetry action on the boundary, which in turn is determined by the boundary geometry. In addition to changing the ground space degeneracy of the model, the choice of boundary Hamiltonian may allow for different types of excitations to condense on them. By condense, we mean that an excitation can be absorbed on the boundary (and the reverse process is also possible, where excitations can be emitted from a boundary). In the following, we will consider four different types of symmetric gapped boundary Hamiltonians that each allow different excitations to condense on them. These boundaries will allow us to construct the Hamiltonian with a degenerate ground space (i.e. codespace) that is self-correcting under 1-form symmetry. 

We will first focus on a toric code boundary which will be used to encode information. We will then introduce other boundary types that can be combined with the toric code boundary to construct a code that allows for all logical operators to be implemented through a sequence of symmetric local moves (as required by the discussion in Sec.~\ref{sec:Thermal}).

\subsubsection{Boundary condensation}
Throughout this section, it will be useful to characterise boundaries in terms of the types of excitations that can condense on them. By boundary, we mean a combination of the choice of how to terminate the lattice, the symmetry appropriately defined on this lattice, and a Hamiltonian that commutes with the symmetry (we will see examples of these choices in the next subsection). We define a boundary as being primal-condensing or dual-condensing as follows.

\begin{defn}
	We refer to a boundary as primal-condensing (dual-condensing) if any primal (dual) loop excitation can be piecewise removed near the boundary using local, symmetric operations.
\end{defn}
A schematic depicting a dual-condensing boundary is shown in Fig.~\ref{figDualCondensable}. Importantly, for a boundary to be able to condense a general loop excitation, it must be capable of piecewise condensing it. This piecewise requirement is what makes the above definition nontrivial, as small loop excitations can always be condensed wholly, by contracting them to a point (which, however, is not true for loop excitations with nontrivial topology). Importantly, a boundary is primal-condensing (dual-condensing) if and only if primal (dual) string excitations can terminate on them in a symmetric way. For example, Fig.~\ref{figDualCondensable}~$(ii)$ depicts a dual loop excitation terminating on a dual-condensing boundary. Therefore, symmetric excitations only need to be closed loops modulo their respective primal/dual-condensing boundaries. 

Macroscopically, these types of boundaries can be thought of as analogues of the $X$ and $Z$-type boundaries of the surface code (often called rough and smooth)~\cite{bravyi1998quantum}. While a boundary may have its own set of excitations that are localised within it (and they may interact with bulk loop excitations), the definitions of primal-condensing and dual-condensing are independent of this. We now look at an important boundary that is both primal-condensing and dual-condensing.

\begin{figure}
	\centering
	\includegraphics[width=0.99\linewidth]{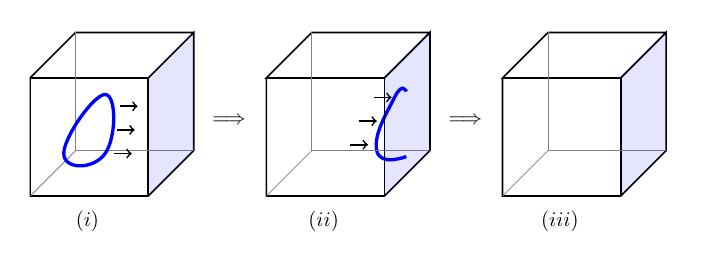}
	\caption{A dual-condensing boundary can absorb a dual loop excitation. $(i)$ A dual loop excitation in the bulk is depicted in blue, while the dual-condensing boundary is shaded light blue. $(ii)$ The loop is moved to the boundary, where part of it is absorbed. $(iii)$ The loop is fully absorbed.}
	\label{figDualCondensable}
\end{figure}

\subsection{Boundaries II: The toric code boundary}\label{ToricBoundaries}
As mentioned, the type of Hamiltonian that can be defined on the boundary is heavily constrained by the symmetry. We first consider boundary conditions that support a 2D toric code phase.  

We consider a lattice with one boundary component which we terminate with `toric code' boundary conditions (see Fig.~\ref{figTerms}). Namely, the cubic lattice is terminated on a smooth plane, such that there are boundary volumes, boundary faces, boundary edges, and boundary vertices, each having a lower number of incident cells (neighbours) compared to the bulk. We label the collection of all boundary volumes, faces, edges, and vertices by $\partial \calL$. We will fix the topology and geometry more precisely later, for this section we consider a lattice supported on a 3D half space, i.e., with coordinates $(x,y,x)$ satisfying $x \geq 0$, $-\infty \textless y \textless \infty$, $-\infty \textless z \textless \infty$, such that the boundary is on the $x=0$ plane. On the boundary, qubits are placed only on boundary edges, and not on boundary faces, as depicted in Fig.~\ref{figTerms}. We refer to these qubits as boundary qubits. (Note that we have constructed this boundary using dual qubits. This choice is arbitrary, and an analogous boundary exists that is comprised of primal qubits.)

\begin{figure}
	\centering
	\subfloat[]{	\includegraphics[width=0.42\linewidth]{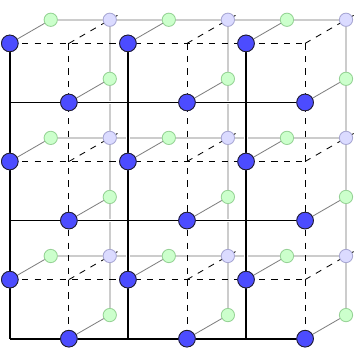}}	\label{figClusterBdryc}
	\quad
	\subfloat[]{	\includegraphics[width=0.42\linewidth]{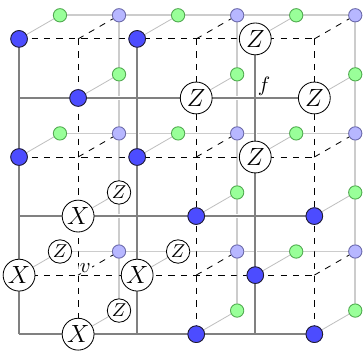}}
	\caption{(a) The boundary of the lattice consists only of dual qubits which are depicted in blue. Primal qubits on faces penetrating into the bulk are depicted in green. (b) The boundary terms $\overline{A}_v$ and $\overline{B}_f$.  In both figures, bold lines indicate nearest neighbour relations, while dashed lines indicate edges of the cubic lattice. The dashed lines on the boundary can be thought of as the edges of a toric code lattice.}
	\label{figTerms}
\end{figure}

For this geometry, we consider Hamiltonians of the form
\begin{equation}\label{eqRHam}
H = H_{\interior{\calL}} + H_{\partial \calL},
\end{equation}
where, $H_{\interior{\calL}} $ is the bulk Hamiltonian of Eq.~(\ref{eqBulkHam}) (which sums only over sites on the interior, meaning it contains only complete cluster terms) and $H_{\partial \calL}$ is a boundary Hamiltonian. A boundary Hamiltonian is in general any Hamiltonian with local terms acting near the boundary of the lattice $\partial \calL$ that commute with the symmetry (whose action we describe shortly). 

\subsubsection{Boundary degrees of freedom}
To determine what types of Hamiltonians $H_{\partial \calL}$ are possible on the boundary, we describe the boundary Hilbert space in terms of a more natural boundary algebra. We begin with the case $H_{\partial \calL} = 0$ such that $H = H_{\interior{\calL}}$ consists of all 5-body cluster terms of Eq.~(\ref{eqClusterStateStabilizers}). In this case there is an extensive degeneracy localised near the boundary: there is a qubit 'boundary degree of freedom' for every boundary edge (i.e. one for every $e\in E\cap \partial \calL$). It is important to distinguish between the qubits that belong to the boundary, and the degrees of freedom localised near the boundary that describe the ground space. Indeed, the operators that act on these degrees of freedom within the ground space of $H$ are not simply given by the Pauli operators acting on boundary qubits. That is, for some Pauli operator $P_e$ acting on $e\in E\cap \partial \calL$, we have $\Pi_0 P_e \Pi_0 \neq \Pi_0 P_e$ in general, where $\Pi_0$ is the ground space projector (in fact we have equality only if the bulk is a trivial paramagnet). The effective Pauli-$X$ and Pauli-$Z$ operators can be obtained by finding the dressed versions of these Pauli operators using the unitary in Eq.~(\ref{eqRBHcircuit}).

More explicitly, the effective Pauli-$X$ and Pauli-$Z$ operators for these boundary degrees of freedom are given by 
\begin{align}\label{eqClusterBdryDOF}
\tilde{X}_e := UX_eU^{\dagger} = X_e Z_{f(e)}, \quad  \tilde{Z}_e := UZ_eU^{\dagger} = Z_e,
\end{align}
where $U$ is a product of $CZ$ gates as in Eq~(\ref{eqRBHcircuit}) and ${f(e)}$ is the unique face $f\in \interior{F}$ such that $e \subset f$. These operators preserve the ground space (as they commute with all bulk cluster terms in $H_{\interior{\calL}}$) and act on the boundary degrees of freedom in the ground space as the usual Pauli spin operators. We will describe boundary degrees of freedom in terms of the boundary algebra generated by $\tilde{X}_e$, $\tilde{Z}_e$. We emphasise that the support of the boundary algebra is not strictly contained on the boundary qubits, as would be the case if the bulk Hamiltonian was trivial. This subtle difference between the boundary degrees of freedom and cut boundary qubits is important, as we will see.

\subsubsection{Symmetry action on the boundary}\label{SecSymBoundary}

The $\zz_2^2$ 1-form symmetry on a lattice with a boundary is again given by the group $G$ in Eqs.~(\ref{eqRBH1formSymOps}) and~(\ref{eqSymG}) (consisting of operators supported on 2-dimensional submanifolds that are closed relative the lattice boundary). The operators near the boundary are depicted in Fig.~\ref{figBoundarySyms}.

\begin{figure}
	\centering
	\subfloat[]{	\includegraphics[width=0.45\linewidth]{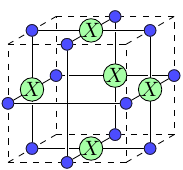}}
	\quad
	\subfloat[]{	\includegraphics[width=0.45\linewidth]{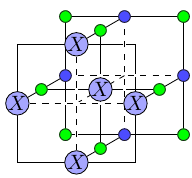}}
	\caption{Symmetry operators on the boundary (a) $S_q$ with $q\in \partial \calL$, (b) $S_v$ with $v\in \partial \calL$. Thick lines denote neighbour relations, and dashed lines denote the cubic lattice.}
	\label{figBoundarySyms}
\end{figure}

A general boundary Hamiltonian can be written in terms of operators from the boundary algebra. We must therefore analyse the action of the 1-form symmetry on the boundary algebra (to infer how the boundary degrees of freedom transform under the symmetry). First, we note that the operators of Eq.~(\ref{eqClusterBdryDOF}) are not themselves symmetric. Taking the boundary symmetry operators $S_v$ and $S_q$ with $v \in V\cap \partial \calL$, $q \in Q\cap \partial \calL$ (depicted in Fig.~\ref{figBoundarySyms}), for any $e \in E \cap \partial L$ we have (under conjugation)
\begin{align}
S_v : \tilde{X}_e &\mapsto \tilde{X}_e, \quad & \tilde{Z}_e &\mapsto (-1)^{\mathds{1}_e(v)}\tilde{Z}_e \label{eqBoundaryCommRel1} \\
S_q : \tilde{X}_e &\mapsto (-1)^{\mathds{1}_q(e)} \tilde{X}_e, \quad & \tilde{Z}_e &\mapsto \tilde{Z}_e, \label{eqBoundaryCommRel2}
\end{align}
where ${\mathds{1}_e(v)} = 1$ if $v \subset e$ and ${\mathds{1}_e(v)} = 0$ otherwise, and similarly ${\mathds{1}_q(e)} = 1$ if $e\subset q$ and ${\mathds{1}_q(e)} = 0$ otherwise.

From this we can write the action of the 1-form symmetry in the ground space of $H$ in terms of operators in the boundary algebra as follows. Define the following `dressed toric code' operators for every $v \in V\cap \partial \calL$ and every $f \in F \cap \partial \calL$:
\begin{equation}\label{eqDressedTC}
\overline{A}_v = \prod_{e\in \partial E: v \subset e}X_e \prod_{f : e \subset f} Z_f, \qquad \overline{B}_f = \prod_{e: e \subset f} Z_e,
\end{equation}
where $\partial E = E \cap \partial \calL$ is the set of boundary edges. Such operators are depicted in Fig.~\ref{figTerms}. They are dressed versions of the usual toric code operators 
\begin{equation}\label{eqToricTriv}
A_v = \prod_{e\in \partial E: v \subset e}X_e \qquad B_f = \prod_{e: e \subset f} Z_e,
\end{equation}
and can be obtained by conjugating them by the unitary of Eq.~(\ref{eqRBHcircuit}).

Now it can be verified from the (anti)commutation relations of Eqs.~(\ref{eqBoundaryCommRel1}-\ref{eqBoundaryCommRel1}) that the 1-form symmetry acts as  
\begin{align}\label{eqBoundaryAction}
S_v &\equiv \overline{A}_v&   &\forall~ v \in V \cap \partial \calL, \\
S_q &\equiv \overline{B}_{f(q)}&  &\forall ~ q \in Q \cap \partial \calL,
\end{align}
and as the identity otherwise. Here $f(q)$ is the unique face $f(q) = \partial q \cap \partial \calL$, and $\overline{A}_v$ and $\overline{B}_f$ are defined in Eq.~(\ref{eqDressedTC}). The equivalence $\equiv$ means that the two operators have the same action in the ground space. In other words, $S_v$ and $\overline{A}_v$ (resp. $S_q$ and $\overline{B}_{f(q)}$) have identical commutation relations with all boundary operators $\tilde{X}_e$ and $\tilde{Z}_e$ of Eq.~(\ref{eqClusterBdryDOF}), and therefore have equivalent action on the boundary degrees of freedom.

Thus we see that the SPT-ordered bulk requires the boundary theory to be nontrivial: in order to respect the symmetry, any boundary Hamiltonian must commute with the dressed toric code operators and therefore any trivial boundary Hamiltonian (e.g. trivial paramagnet) is ruled out. There are two further observations to make about the action of the symmetry on the boundary. Firstly, the symmetry is represented as a 1-form symmetry on the boundary degrees of freedom: i.e. $\overline{A}_v$ and $\overline{B}_{f}$ generate a symmetry group whose elements are supported on closed loops. Secondly, the supports of these symmetry operators are not strictly contained on the boundary qubits.

\subsubsection{Toric code boundary Hamiltonian}
In order to add a nontrivial Hamiltonian $H_{\partial \calL}$ to the boundary, it must be composed of terms that commute with $\overline{A}_v$ and $\overline{B}_f$ from Eq.~(\ref{eqDressedTC}). {As such, the canonical choice of boundary Hamiltonian has terms are given by $\overline{A}_v$ and $\overline{B}_f$.} This gives us the dressed toric code boundary
\begin{equation}\label{eqDressedToricCodeBound}
H_{\partial \calL} = - \sum_{v \in \partial V} \overline{A}_v -\sum_{f \in \partial F} \overline{B}_f,
\end{equation}
where $\partial V$ and $\partial F$ are the set of all boundary vertices and faces (respectively). Again, the terms of this Hamiltonian are depicted in Fig.~\ref{figTerms}.

\subsubsection{Toric code boundary excitations}
The toric code Hamiltonian introduces a new set of excitations on the boundary, that are interesting in themselves, but also interact nontrivially with bulk excitations. 

The boundary supports anyonic excitations that are free to propagate in the absence of any symmetry. Indeed, for a string $l \subseteq \partial E$ on the boundary, we can define the string operator $Z(l) = \prod_{e \in l} Z_e$. The string operator $Z(l)$ commutes with all Hamiltonian terms, apart from  vertex terms $\overline{A}_v$ with $v\in \partial l$ for which it anti-commutes with. We define flipped $\overline{A}_v$ terms as $e$-excitations, and string operators $Z(l)$ create these excitations. Similarly, we can define a dual-string operator $\tilde{X}(l') = \prod_{e \in l'} X_e \prod_{f \in l'^{\perp}} Z_f$ for a string $l' \subset \partial E$, which when applied to the ground space, creates $m$-excitations on the faces at the ends of $l'$. Here, $l^{\perp} = \{f \in \interior{F}: \partial f \cap l \neq \emptyset \}$ denotes the set of faces sitting just inside the boundary incident to the string $l$. At the endpoints of the string operator $\tilde{X}(l')$, $m$-excitations occur, as the plaquette operators $\overline{B}_f$ with $f$ on the ends of $l'$ anti-commute with $\tilde{X}(l')$, while all remaining terms commute. Examples of such operators are depicted in Fig.~\ref{figBoundaryLogicals}

\begin{figure}
	\centering
	\subfloat[]{	\includegraphics[width=0.4\linewidth]{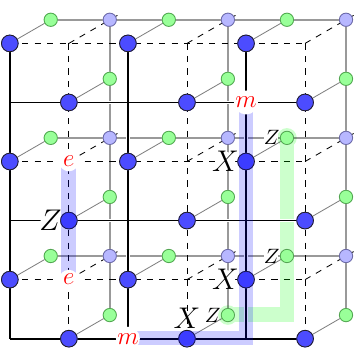}}\label{figBoundaryLogicalsa}
	\quad
	\subfloat[]{\includegraphics[width=0.4\linewidth]{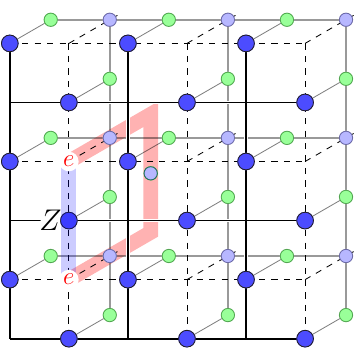}}\label{figBoundaryLogicalsb}
	\caption{(a) The $e$ and $m$ type excitations on the boundary of string and dual-string operators. For an $e$-excitation ($m$-excitation) to be symmetric they must be accompanied by a bulk dual (primal) string excitation terminating on them. (b) An example of a symmetric excitation. Two $e$-excitations live on the boundary of a bulk dual string excitation, depicted in red.}
	\label{figBoundaryLogicals}
\end{figure}

Now we consider excitations that respect the symmetry. On the boundary, we see that boundary excitations are symmetric only if they are accompanied by a bulk string excitation. In particular, a string operator $Z(l)$ creating $e$ particles at vertices $\mu$ and $\nu$ is made symmetric by attaching a bulk string operator $Z(E')$ whose boundary is at the location of the two particles $\partial E' = \{ \mu, \nu\}$ (i.e., $l\cup E'$ is a cycle). Similarly, the dual string operator $\tilde{X}(l')$ that creates $m$ excitations at $\mu'$ and $\nu'$ can be made symmetric by attaching a bulk string operator $Z(F')$ such that the union $l'^{\perp} \cup F'$ is a dual cycle (i.e., has no boundary on the dual lattice). Such excitations will flip cluster stabilizers in the bulk, for all terms $K_e$ with $e \in E'$ and $K_f$ with $f\in F'$, but will only create a pair of $e$ or $m$ particles on the boundary at their endpoint.

The following two lemmas characterise the valid configurations of excitations in the presence of symmetry. 

\begin{lem}\label{lemPDCond}
	The toric code boundary is both primal-condensing and dual-condensing.
\end{lem}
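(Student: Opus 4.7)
The plan is to prove each direction by explicitly constructing local symmetric operators that absorb the loop excitation piece by piece at the boundary. The ingredients from the preceding subsections are: the characterisation of symmetric bulk excitations $Z(E', F')$ as requiring $E'$ to be a cycle and $F'$ a cocycle; the boundary string operators $\tilde{X}(l')$ and $Z(l)$ which create boundary $m$- and $e$-anyon pairs respectively, but which must carry attached bulk primal/dual string extensions to become symmetric; and the asymmetry that boundary edges carry qubits while boundary faces do not, which allows bulk loops to symmetrically terminate at the boundary while creating a toric-code anyon.

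For primal-condensing, I take a primal loop excitation $F' \subset \interior{F}$ with a segment near the boundary and multiply the loop-creating operator $Z(\emptyset, F')$ by a local symmetric operator $\tilde{X}(l') \cdot Z(\emptyset, F'')$, where $l' \subset \partial E$ is a short boundary arc and $F''$ is a short bulk primal chain chosen so that $l'^{\perp} \triangle F''$ is a cocycle---which is the necessary and sufficient condition for the product to commute with all $S_q$. I then pick $l'^{\perp} \triangle F''$ to coincide with a sub-cocycle of $F'$ localised near the boundary, so that multiplication cancels that portion of the loop while leaving behind a pair of $m$-particles on the boundary connected by $l'$. Iterating reduces $F'$ to a collection of boundary $m$-anyon pairs, which I annihilate pairwise using the single-edge operators $X_e$ for boundary $e$; a direct check shows that $X_e$ commutes with every $S_v$ and $S_q$, and flips pairs of neighbouring $\overline{B}_f$ boundary stabilizers. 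The dual-condensing case is analogous: one uses $Z(l)$ strings to produce boundary $e$-particle pairs, attached to bulk dual chains $E''$ with $l\cup E''$ a cycle, which piecewise cancels the dual loop and deposits boundary $e$-anyons that are then annihilated by short symmetric $Z$-strings $Z(l \cup E'', \emptyset)$.

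The central technical step is the combinatorial choice of $F''$ (resp.\ $E''$) at each iteration so that both the cocycle/cycle condition holds and genuine progress is made in reducing the loop. This is a local verification that $l'^{\perp}\triangle F''$ has even intersection with every cube---including boundary cubes, where $S_q$ acts only on the $5$ interior faces---and that the cancelled portion is indeed a sub-cocycle of the existing loop; both can be arranged for any loop segment approaching a flat boundary. I anticipate the most delicate aspect to be choosing $l'$ and $F''$ so that the removed sub-cocycle of $F'$ is itself localised near the boundary, forcing the iteration to actually pull the loop toward the boundary rather than merely deforming it in the bulk. This is precisely where the boundary condition (no qubits on boundary faces) is essential: it is this feature that makes the relative cocycle/cycle space at the boundary nontrivial, and which permits the sub-cocycles used at each step to be supported strictly near the boundary.
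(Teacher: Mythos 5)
Your construction is correct and takes essentially the same route as the paper's proof: translate the loop toward the boundary with small local symmetric moves (your relative-cocycle/cycle criterion for the face-$Z$ or edge-$Z$ support is exactly the right symmetry condition) and absorb it piecewise using the dressed operators $\tilde{X}(l')$ with attached bulk chains (resp.\ $Z$-strings with attached bulk cycles), leaving only boundary anyons that are then annihilated along the boundary. One bookkeeping remark: a bare boundary $X_e$ flips not only the two neighbouring $\overline{B}_f$ plaquettes but also the bulk cluster term $K_{f(e)}$, which is precisely what cancels the residual bulk primal string that (by the symmetry constraint) must accompany any boundary $m$-pair, so your final pairwise-annihilation step succeeds for exactly that reason.
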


\begin{proof}
We first show that it is dual-condensing. We can decompose any cycle $l\subset E$ into two components: $l = l_{\text{int}} \cup l_{\text{boun}}$ where $l_{\text{int}} = l \cap \interior{E}$ is its interior component and $l_{\text{boun}} = l \cap \partial E$ is its boundary component. As we have seen, $Z(l_{\text{int}})$ anti-commutes with all terms $K_e$ with $e\in l_{\text{int}}$ and commutes with all other terms. Also, $Z(l_{\text{boun}})$ commutes with all terms apart from $\overline{A}_v$ with $v\in \partial l_{\text{boun}}$. Therefore any bulk dual loop excitation given by an operator $Z(l)$ may be translated to a boundary using a series of local symmetric moves (translations may be performed by sequentially applying $Z(c)$ operations for some small cycle $c$). The dual loop excitation can then be piecewise absorbed upon contact with the boundary. 

To show primal-condensing, the argument is similar. We decompose any dual-cycle $l' \subset F$ into two components $l' = l_{\text{int}}' \cup l_{\text{boun}}'$ where $l_{\text{boun}}' = l' \cap F_{\text{boun}}$ and $l_{\text{int}}' = l' \cap (F \setminus F_{\text{boun}})$ where $F_{\text{boun}} = \{f \in F ~|~ \partial f\cap \partial E \neq \emptyset\}$. Intuitively, $F_{\text{boun}}$ is the set of faces that contain one edge on the boundary of the lattice. Then $Z(l_{\text{int}}')$ anti-commutes with all terms $K_f$ with $f\in l_{\text{int}}'$ and commutes with all other terms. Now find a string $t\subset \partial E$ on the boundary such that $t^{\perp} = l_{\text{boun}}'$ (recall $t^{\perp} = \{f \in \interior{F}: \partial f \cap t \neq \emptyset \}$). Such a string can always be found. Now $Z(l_{\text{boun}}')$ itself doesn't commute with all bulk cluster terms $K_f$, but $Z(l_{\text{boun}}') X(t) = \tilde{X}(t)$ is a dressed string operator that commutes with all terms apart from the plaquettes $\overline{B}_f$ with $f\in \delta{t}$. Then similarly to the previous case, any primal loop excitation in the bulk can be translated to the boundary where it can be piecewise absorbed by sequentially applying local Pauli $X$ operators.
\end{proof}

As we have seen, primal and dual excitations need only be closed loops modulo the toric code boundary, where they can terminate as an anyonic $m$ or $e$-type excitations, respectively. The following lemma states that in fact these anyonic excitations can only exist if they are at the end of a bulk string excitation. 

In the following, for any subset of faces $f$, let $\delta f \subset Q$ be the set of volumes that each contain an odd number of faces of $f$ on their boundary ($\delta f = \{ q \in Q ~:~ |\partial q \cap f | \neq 0 \mod 2\}$). 
\begin{lem}\label{lemAnyonCoupling}
	In the 1-form symmetric sector, $e$-excitations can be located at sites $V_e\subset \partial V$ if and only if accompanied by a dual bulk string excitation supported on $l\subset E$ satisfying $\partial l = V_e$. Similarly, $m$-excitations can be located at sites $F_m \subset \partial F$ if and only if accompanied by a primal bulk string excitation supported on $l' \subset F$ satisfying $\delta F_m = \delta l'$.
\end{lem}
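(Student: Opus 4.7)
Proof proposal: The approach is to express each $1$-form symmetry generator $S_v$, $S_q$ as a product of Hamiltonian stabilizers (bulk cluster terms and dressed toric-code boundary terms); being in the symmetric sector then becomes a transparent parity constraint on the excitation pattern.

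First I would establish the key operator identities by direct computation on the cluster lattice. For an interior vertex $v\in\interior{V}$, each face containing $v$ contains exactly two of the edges at $v$, so the $Z_f$ factors in $\prod_{e\ni v}K_e$ cancel pairwise, yielding $S_v=\prod_{e\ni v}K_e$. For a boundary vertex $v\in V\cap\partial\calL$ with unique incident bulk edge $e_v$, the four faces of the form $f(e)$ appearing in $\overline{A}_v$ as $e$ ranges over boundary edges at $v$ are precisely the four bulk faces containing $e_v$, which gives $S_v=\overline{A}_v\cdot K_{e_v}$. Dually, $S_q=\prod_{f\subset q}K_f$ for interior cubes and $S_q=\overline{B}_{f(q)}\cdot\prod_{f\in\interior{F},\,f\subset q}K_f$ for boundary cubes (using that each edge of $q$ lies in exactly two of its faces, so the $Z_e$ factors cancel except on edges of $f(q)$).

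For the forward direction of the $e$-excitation statement, let $P$ be any Pauli operator with $P|\psi_0\rangle$ in the symmetric sector and carrying the stated excitation pattern. The eigenvalue of $S_v$ on $P|\psi_0\rangle$ factors as the product of eigenvalues of the stabilizers appearing in the identities above, so $[P,S_v]=0$ imposes (i) at each interior $v$, the number of edges of $l$ incident to $v$ is even, and (ii) at each boundary $v$, $v\in V_e\Leftrightarrow e_v\in l$. Together these are precisely the $1$-chain boundary condition $\partial l=V_e$. Conversely, given any $l\subset\interior{E}$ with $\partial l=V_e$, choose a boundary $1$-chain $t\subset\partial E$ with $\partial t=V_e$ (which exists since $|V_e|$ is necessarily even). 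The Pauli operator $P=Z(l)Z(t)$ realizes exactly the desired excitations; parity matching at each boundary $v$ together with the cycle property of $l$ and $t$ ensures $P$ commutes with every $S_v$ and $S_q$, so $P|\psi_0\rangle$ lies in the symmetric sector.

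The $m$-excitation claim is proved in dual fashion using the $S_q$ identities: the interior-cube constraint forces $\delta l'\cap\interior{Q}=\emptyset$, while the boundary-cube constraint gives $f(q)\in F_m\Leftrightarrow q\in\delta l'$, which combine to yield $\delta F_m=\delta l'$. The backward direction is realized analogously by an operator of the form $Z(l')Z(s)$ for a suitable boundary dual chain $s$. The main technical step throughout is the verification of the operator identities between the symmetry generators and the (dressed) Hamiltonian stabilizers; once these are in hand, both directions of both statements reduce to routine bookkeeping of commutation parities, with no further topological input beyond $\partial^2=0$.
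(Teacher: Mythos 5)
Your strategy is the same as the paper's: you write each 1-form generator as a product of Hamiltonian terms --- $S_v=\prod_{e:v\subset e}K_e$ for interior $v$, $S_v=\overline{A}_v K_{e_v}$ for boundary $v$, $S_q=\prod_{f\subset q}K_f$ for interior $q$, and $S_q=\overline{B}_{f(q)}\prod_{f\subset q,\,f\in\interior{F}}K_f$ for boundary $q$ --- and then read conservation of $S_v,S_q$ in the symmetric sector as parity constraints on the flipped terms. These identities are correct (they are exactly the relations the paper invokes), and your forward (``only if'') direction, including the translation of the interior and boundary parity conditions into $\partial l=V_e$ and $\delta F_m=\delta l'$, matches the paper's argument, just written out more explicitly. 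The converse is not spelled out in the paper (it leans on the symmetric string operators constructed in the preceding subsection), so that part of your proof is additional work, and your $e$-sector construction $Z(l)Z(t)$ with $\partial t=V_e$ is fine.

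The one genuine error is in the $m$-sector converse: an operator ``of the form $Z(l')Z(s)$'' is pure Pauli-$Z$ and therefore commutes with every plaquette $\overline{B}_f=\prod_{e\subset f}Z_e$, so it cannot create any $m$-excitations on the boundary --- as written, that step fails. The correct realizing operator is $Z(l')\,\tilde{X}(s)$, where $\tilde{X}(s)=\prod_{e\in s}X_e\prod_{f\in s^{\perp}}Z_f$ is the dressed dual-string operator already defined in the text and $s\subset\partial E$ is chosen so that the boundary plaquettes flipped by $\tilde{X}(s)$ are exactly $F_m$ (such an $s$ exists since $|F_m|=|\delta l'|$ is even, each interior face lying in exactly two 3-cells). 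The dressing is what makes this work: the $Z_{f(e)}$ factors cancel the anti-commutation of $X_e$ with $K_{f(e)}$, so $\tilde{X}(s)$ flips only the plaquettes at the dual-boundary of $s$, while $Z(l')$ flips only $K_f$ for $f\in l'$; and for a boundary cube $q$ one has $f(q)\in F_m\Leftrightarrow|s^{\perp}\cap\partial q|$ odd, so the condition $\delta F_m=\delta l'$ is precisely what makes the product $Z(l')\tilde{X}(s)$ commute with every $S_q$ (commutation with the $S_v$ is automatic). With that replacement your converse, and hence the full ``if and only if,'' goes through.
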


\begin{proof}
	For the $e$-excitations, we have the following constraint: For every vertex operator $\overline{A}_v$, $v\in \partial V$, there exists a unique $e\in \interior{E}$ such that $S_v = \overline{A}_v K_e$ (can be seen upon inspection of Fig.~\ref{figBoundarySyms}). As $S_v = +1$ in the ground space, it must also be for any excitations produced by a symmetric process. Therefore any flipped term $\overline{A}_v$ must be accompanied by a uniquely determined flipped bulk term $K_e$. As every dual qubit on an edge $e$ is in the support of two symmetry generators $S_{v_1}$ and $S_{v_2}$, which also must be preserved, the flipped term $K_e$ must be part of a string excitation can only terminate at another flipped term  $\overline{A}_w$, $w\in \partial V$.
	
	For the $m$-excitations, the argument is the same after noting the following constrain between bulk and boundary excitations: For every plaquette operator $\overline{B}_f$, $f\in \partial F$, there exists a unique $q\in Q$ such that $S_q = \overline{B}_f \prod_{f'\in \partial q} K_{f'}$.
\end{proof}

\subsubsection{Energetics of boundary excitations}\label{secBoundaryExcitationDist}
For any two vertices $v,v'\in V$ let $d(v,v')$ denote the lattice distance between $v$ and $v'$ as $d(v,v') = \min_{l\subset E}\{|l| ~:~ \partial l = (v,v')\}$. Namely, it is the smallest number of edges required to connect the two vertices. Similarly, for any two faces $f,f' \in F$, $d(f,f')$ is defined to be the lattice distance between $f, f'$ on the dual lattice (where 3-cells are replaced by vertices, faces by edges, edges by faces, and vertices by 3-cells). Also, recall $\Delta_{\text{gap}} =2$ is the energy gap. 
\begin{lem}\label{lemEnergetics}
	For the model $H$ defined on the half Euclidean (3D) space, the minimal energy cost to symmetrically create a pair of $e$-excitations ($m$-excitations) at positions $x,x'$ is given by $(d(x,x') + 4)\Delta_{\text{gap}}$.
\end{lem}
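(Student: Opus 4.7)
The plan is to prove the stated value by matching upper and lower bounds, with Lemma~\ref{lemAnyonCoupling} providing the key structural input. For the upper bound I would exhibit an explicit symmetric operator creating the pair of $e$-excitations at the claimed cost; for the lower bound I would argue that any symmetric configuration with $e$-excitations at $\{x,x'\}$ must include a bulk dual string of length at least $d(x,x')+2$, which together with the two boundary excitations gives the claimed total.

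For the upper bound, take $l_b\subset\partial E$ to be a shortest boundary path from $x$ to $x'$ (length $d(x,x')$), and take $l_i\subset\interior{E}$ to consist of the unique interior edge perpendicular to the boundary at $x$, followed by a shortest path of length $d(x,x')$ through the depth-$1$ layer of interior vertices, followed by the perpendicular interior edge at $x'$, so that $|l_i|=d(x,x')+2$. Set $L=l_b\cup l_i$ and consider $Z(L)=\prod_{e\in L}Z_e$. A direct vertex-by-vertex check shows $|L\cap E(v)|$ is even for every $v\in V$: at non-endpoint vertices $l_b$ and $l_i$ each contribute an even number individually, and at $v\in\{x,x'\}$ each contributes exactly one edge, summing to $2$. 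Hence $Z(L)$ commutes with every generator $S_v$, and commutes trivially with every $S_q$ since those act on primal qubits only. Applied to the ground state, $Z(L)$ excites precisely the two boundary terms $\overline{A}_x,\overline{A}_{x'}$ (the two $e$-excitations) and the $|l_i|$ bulk cluster terms $K_e$ for $e\in l_i$, and nothing else (the $\overline{B}_f$ and $K_f$ terms commute with $Z(L)$ since $Z(L)$ contains no primal-qubit operators and the $Z$-on-edge parts all commute). The total energy cost is $(2+d(x,x')+2)\Delta_{\text{gap}}=(d(x,x')+4)\Delta_{\text{gap}}$.

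For the lower bound, by Lemma~\ref{lemAnyonCoupling} any symmetric excited state with $e$-excitations at $\{x,x'\}$ is accompanied by a dual bulk string supported on some chain $l\subset\interior{E}$ with $\partial l=\{x,x'\}$, contributing at least $|l|$ bulk excitations in addition to the two boundary ones. The argument then reduces to the geometric claim $|l|\geq d(x,x')+2$: the chain $l$ contains a simple path $P$ from $x$ to $x'$ in $\interior{E}$, and in the half-space geometry the only interior edge incident to a boundary vertex is the perpendicular edge, so $P$ must begin and end with these two perpendicular edges; the remaining segment must join the depth-$1$ vertices $(1,y,z)$ and $(1,y',z')$, which are at lattice distance exactly $d(x,x')$, giving $|P|\geq d(x,x')+2$ and hence $|l|\geq d(x,x')+2$. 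Multiplying by $\Delta_{\text{gap}}$ matches the upper bound. The $m$-excitation case is identical after exchanging the primal and dual sectors: the boundary string is replaced by a dressed operator $\tilde{X}(t)$ and the accompanying bulk object is a primal string satisfying the dual-lattice boundary constraint from the second half of Lemma~\ref{lemAnyonCoupling}, using the dual lattice distance in place of $d$. I expect the only nontrivial step to be the geometric length lower bound; once that is in hand the excitation bookkeeping is routine.
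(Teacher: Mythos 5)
Your proof is correct and takes essentially the same route as the paper's: Lemma~\ref{lemAnyonCoupling} forces any symmetric configuration with the anyon pair at $x,x'$ to be accompanied by an interior dual string with boundary $\{x,x'\}$, whose minimal length $d(x,x')+2$ plus the cost $2\Delta_{\text{gap}}$ of the two boundary excitations gives the stated value. Your explicit $Z(L)$ construction for achievability and the perpendicular-edge argument for the length lower bound simply spell out steps the paper leaves implicit.
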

\begin{proof}
	Consider the process of creating a pair $e$-excitations on the boundary at positions $x(i_0), x(i_0)'$ and then moving them to positions $x=x(i_k)$, $x'=x(i_k)'$ using a sequence of moves labelled by $i_1,\ldots i_k$. The positions of the excitations at steps $i_j$ are given by $x(i_j), x(i_j)'$. From Lemma~\ref{lemAnyonCoupling} at every step $i_j$, the excitations must be accompanied by a dual string excitation in the bulk supported on $l(i_j)\subset \interior{E}$ with $\partial l(i_j) = (x(i_j), x(i_j)')$. The energy cost of the string $l(i_j)$ is given by its length $|l(i_j)|\Delta_{\text{gap}}$ which minimally is $(d(x(i_j), x(i_j)')+2)\Delta_{\text{gap}}$. Adding in the energy cost $2\Delta_{\text{gap}}$ of the two $e$-excitations gives the result. The $m$-excitations follow analogously. 
\end{proof}
We will use this lemma in the following subsections to derive the symmetric energy barrier. 

\subsubsection{Comparison: Trivial bulk and toric code boundary}\label{secTrivialBulkBoundary}
We contrast this with respect to the trivial model. Namely, consider the trivial model $H^{(0)} = H_{\interior{\calL}}^{(0)} + H_{\partial \calL}^{(0)}$, with $H_{\interior{\calL}}^{(0)}$ the trivial paramagnet defined in Eq.~(\ref{eqHamTriv}) and 
\begin{equation}
H_{\partial \calL}^{(0)} = - \sum_{v \in \partial \calL} {A}_v -\sum_{f \in \partial F} {B}_f.
\end{equation}
with $A_v$ and $B_f$ the undressed toric code terms of Eq.~(\ref{eqToricTriv}). The trivial model $H^{(0)}$ can be connected to our model $H$ using the non-symmetric circuit of Eq.~(\ref{eqRBHcircuit}). Lemma~\ref{lemPDCond} still holds for the trivial model, however Lemma~\ref{lemAnyonCoupling} and subsequently Lemma~\ref{lemEnergetics} do not. Indeed one can symmetrically create a pair of flipped plaquettes $B_f$ using a string of $X$ operators, without creating any bulk excitation. The coupling between boundary anyons and bulk strings is crucial for self-correction, as otherwise the anyons remain deconfined on the boundary. We discuss how this conditions results from the anomalous SET order of the boundary, and the SPT order of the bulk in Sec.~\ref{secBulkBoundary}.

We now have symmetry and spectral properties of the toric code boundary. In Sec.~\ref{sec:cubicRBH} we will discuss the ground space degeneracy of this boundary in the context of the full model and its associated lattice topology. 

\subsection{Boundaries III: Other types of boundaries}\label{secOtherBoundaries}

The toric code boundary is not the only choice of symmetric boundary condition, and in our construction of a SCQM we will make use of several other boundaries. In particular, our construction requires the existence of boundaries with certain condensation properties to ensure that all logical operators of the code can be implemented using sequences of local symmetric moves. While a rigorous classification of the possible boundary theories remains an interesting open problem, the toric code boundary conditions of the previous section along with the three presented in this section are sufficient for our purposes.

We present these boundaries with a canonical choice of lattice geometry, however we emphasise that they should be viewed as stable topological objects and their important properties are not dependent on microscopic details.  (This can be viewed in analogy to the rough and smooth boundaries of the 2D surface code~\cite{bravyi1998quantum}.)  In particular, if we were to erase a disk within one of these boundaries (by removing the Hamiltonian terms and adding or removing qubits), using the general symmetry-based construction in Sec.~\ref{SecSymBoundary}, we can argue that any possible Hamiltonians that can fill in the disk are determined by the surrounding boundary Hamiltonian and in particular must belong to the same phase. The argument is presented in Appendix~\ref{UVInsensitivity}. Such an argument demonstrates that the particular choice of Hamiltonians in this section may be regarded as canonical representatives of the boundary theories. We leave a rigorous proof of this statement, along with an exhaustive proof of the possible boundary theories for this 1-form SPT phase to future work.

We consider three different boundary geometries that support the following types of gapped boundary Hamiltonians to be used in the code construction, based on their condensation properties as defined in Definition 1 of Sec.~\ref{secBoundarySec}:
\begin{enumerate}
\item the primal boundary Hamiltonian $H_{P}$, which is primal-condensing but not dual-condensing; 
\item the dual boundary $H_{D}$, which is dual-condensing but not primal-condensing; and 
\item the ``sink'' boundary $H_{\text{sink}}$, which is primal-condensing and dual-condensing.
\end{enumerate}
The different boundaries are distinguished by what excitations can condense on them in a symmetric way and they can be thought of in analogy to the $Z$-type and $X$-type boundaries of the surface code (often called rough and smooth)~\cite{bravyi1998quantum}; the primal boundary is chosen to allow primal string-like excitations (i.e., excitations generated by $Z$-strings on primal qubits) to condense, the dual boundary is chosen to allow dual string-like excitations (i.e., excitations generated by $Z$-strings on dual qubits) to condense, and both strings can condense on the sink boundary. There exist nondegenerate, symmetric Hamiltonians consisting of commuting Pauli terms with these properties, as we now show. 

All of the Hamiltonians in this subsection are given by a sum over (potentially truncated) cluster terms
\begin{equation}\label{eqHamP}
H_{\text{boundary}} = -\sum_{f \in \partial{F}} K_{f} -\sum_{e \in \partial{E}} K_{e},
\end{equation}
where $K_e$ and $K_f$ of the form of Eq.~(\ref{eqClusterStateStabilizers}). Similarly, all symmetry generators are determined by Eq.~(\ref{eqRBH1formSymOps}), they may take the form of Fig.~\ref{figCluster1form} or Fig.~\ref{figBoundarySyms} depending on the boundary lattice geometry. The Hamiltonians are all non-degenerate as they are locally equivalent to a trivial paramagnet. We emphasise that although we utilise microscopics details of these boundaries in the analysis of this section, what is important is their topological properties, as encapsulated by lemmas \ref{lemprimcondensing} - \ref{lemSinkBoundary} (which are independent of precise lattice details).

We note that similarly to the bulk case, excitations on the boundary are given by operators $Z(E',F')$ of Eq.~(\ref{eqBulkEx}), for $E'\subset \partial E$ and $F'\subset \partial F$. Such an operator flips precisely the terms $K_e$ and $K_f$ with $e\in E'$ and $f\in F'$, this can be verified by local unitary equivalence with the trivial paramagnet using Eq.~(\ref{eqRBHcircuit}). We note the usual product relation between cluster terms and symmetry operators 
\begin{align}
S_q = \prod_{f \in \partial q} K_f \quad \forall q\in Q, \\
S_v = \prod_{e: v\subset e} K_e \quad \forall v\in V
\end{align}
puts nontrivial constraints on the relationship between bulk and boundary excitations, that we will now explore. 

\subsubsection{Primal boundary}
For the primal boundary, we consider `smooth' boundary conditions (in analogy to the smooth or $Z$-type boundary of the surface code). On the boundary, qubits are placed on both boundary edges, and boundary faces, as depicted in Fig.~\ref{figBoundaryPrimal}. On this boundary, the 1-form symmetry generators reduce to the 6-body primal symmetry operators $S_q$ of Fig.~\ref{figCluster1form}, and 5-body dual symmetry operators $S_v$ of Fig.~\ref{figBoundarySyms}. The Hamiltonian terms of $H_D$ are 4-body $K_e$ operators and 5-body $K_f$ operators, as depicted in Fig.~\ref{figBoundaryPrimal}. These terms all commute with the symmetry. 

\begin{lem}\label{lemprimcondensing}
	The primal boundary $H_{P}$ is primal-condensing and not dual-condensing.
\end{lem}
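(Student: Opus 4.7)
My plan is to exploit the asymmetry between the primal and dual sectors on the primal boundary: the generators $S_q$ remain complete (6-body), while the generators $S_v$ are truncated to 5-body. This asymmetry is the key to both claims.

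To show primal-condensing, I would begin by noting that any primal loop excitation $Z(F')$ with $F' \subset F$ a cocycle can be translated within the bulk via symmetric local moves, multiplying by $Z(F_0)$ for small bulk cocycles $F_0$ (the smallest being the four faces around a single edge), exactly as in the bulk part of the proof of Lemma~\ref{lemPDCond}. The essential new observation at the primal boundary is that each boundary face $f_{\text{boun}} \in \partial F$ is a face of exactly one cube $q \in Q$ in the lattice, and does not appear in any $S_v$, so appending $f_{\text{boun}}$ to a bulk set only alters the cocycle condition $|F' \cap \partial q| \equiv 0 \pmod{2}$ at that single cube. Consequently, a bulk primal string can be symmetrically terminated at the boundary by attaching a single boundary face, and iterating such symmetric $Z$ moves allows any bulk primal loop to be pushed onto the boundary and piecewise annihilated.

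To show the boundary is not dual-condensing, I would classify all local symmetric Pauli operators. Any symmetric Pauli has the form $Z(F')\,X(E_1)\,Z(E')\,X(F_1)$ with $F'$ a cocycle and $E'$ a cycle, while $E_1, F_1$ are arbitrary, since only these combinations commute with every $S_q$ and $S_v$. A direct check shows that the set of dual qubits on which such an operator induces a flipped $K_e$ is always a cycle: $Z(E')$ contributes the cycle $E'$, while $X(F_1)$ contributes the cycle $\partial F_1$. Therefore every symmetric local process changes dual excitations only along closed cycles. In particular, any attempt to terminate a dual loop at a boundary vertex $v$ would leave an odd intersection with the truncated 5-body $S_v$, and the primal boundary Hamiltonian (consisting only of cluster terms $K_e, K_f$) offers no compensating operator analogous to the toric code vertex term $\overline{A}_v$. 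Hence dual loops remain topologically closed relative to the primal boundary, and cannot be piecewise absorbed.

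The main obstacle is establishing the completeness statement in the second part --- namely, that the symmetric Pauli normal form above is exhaustive. I would prove this by a direct check against the generators $S_q$ and $S_v$: writing an arbitrary Pauli operator with its $X$ and $Z$ supports on primal and dual qubits, and imposing commutation with all 1-form generators, which forces the $Z$-support on faces to be a cocycle and the $Z$-support on edges to be a cycle.
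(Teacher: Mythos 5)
Your proposal is correct and follows essentially the same route as the paper: primal-condensing because each boundary face lies in a unique $S_q$ (so primal strings can terminate there symmetrically), and not dual-condensing because every boundary edge still lies in two (truncated) $S_v$ generators, forcing dual excitation patterns to change only along closed cycles. Your extra step of classifying all symmetric Paulis (noting that the $X(F_1)$ part only flips $K_e$ terms along $\partial F_1$, a cycle) is a slightly more explicit version of what the paper handles implicitly by its earlier observation that all excitations can be generated by Pauli-$Z$ operators alone.
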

\begin{proof}
	We first show that the boundary is primal-condensing by showing that primal excitations can terminate on it. Firstly, for any pair of faces $f, f' \in \partial F$ on the boundary, any subset of faces $l' \subset \interior{F}$ with $\delta l' = \delta(f\cup f')$ defines a symmetric excitation operator $Z(l')$ (i.e. $[Z(l'), S_q] = 0$ $\forall q\in Q$). This is due to the fact that every boundary face $f$ belongs to a unique 3-cell $q$, meaning each boundary primal qubit is in the support of a unique symmetry generator $S_q$ (as opposed to two in the bulk). As $Z(l')$ flips precisely the terms $K_f$ with $f\in l'$ and commutes with all others, we can locally and symmetrically absorb primal loop excitations near the primal boundary. 
	
	To show that the primal boundary is not dual-condensing, we note that for every dual qubit on some boundary edge $e\in \partial $ is in the support of of two symmetry generators $S_v$, $S_{v'}$. Therefore the only operators $Z(l), l\subset E$ that commute with the 1-form symmetry operators satisfy $\partial l = \emptyset$. This means that dual excitations must form closed loops, even on the boundary. 
\end{proof}

\begin{figure}
	\subfloat[]{	\includegraphics[width=0.44\linewidth]{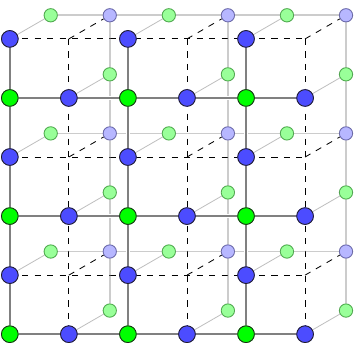}}
	\quad
	\subfloat[]{	\includegraphics[width=0.44\linewidth]{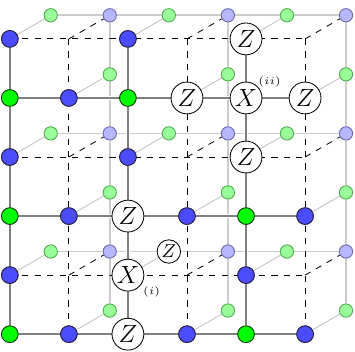}}
	\caption{(a) The lattice at the primal boundary. Primal qubits are depicted in green, while dual qubits are depicted in blue. (b) The primal boundary Hamiltonian $H_{P}$ consists of cluster terms, as depicted by $(i)$ and $(ii)$. Bold lines indicate nearest neighbour relations between qubits, while dashed lines indicate edges of the ambient cubic lattice. }
	\label{figBoundaryPrimal}
\end{figure}

\subsubsection{Dual boundary}

The dual boundary is similar to the primal boundary; it can be obtained by reversing the role of primal and dual qubits on the boundary. In particular, we consider the `rough' boundary conditions depicted in Fig.~\ref{figBoundaryDual} (in analogy with the $X$-type boundary of the surface code). For this boundary, the 1-form symmetry generators reduce to the 6-body dual symmetry operators $S_v$ of Fig.~\ref{figCluster1form}, and 5-body primal symmetry operators $S_q$ of Fig.~\ref{figBoundarySyms}. The Hamiltonian terms of $H_{D}$ are 5-body $K_e$ operators and 4-body $K_f$ operators, as depicted in Fig.~\ref{figBoundaryDual}. These terms all commute with the symmetry.

\begin{lem}
	The dual boundary $H_{D}$ is dual-condensing and not primal-condensing.
\end{lem}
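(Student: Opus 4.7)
The plan is to mirror the proof of Lemma~\ref{lemprimcondensing}, exploiting the primal-dual symmetry that relates the two boundary geometries. The dual boundary is constructed as the ``mirror image'' of the primal one: complete 6-body dual generators $S_v$ are supported on the boundary, while primal generators $S_q$ are truncated to 5-body operators. Correspondingly, each boundary dual qubit (on a boundary edge $e \in \partial E$) lies in the support of a \emph{unique} dual symmetry generator $S_v$, while each boundary primal qubit (on a boundary face $f \in \partial F$) still lies in the support of two primal generators $S_q, S_{q'}$. This is exactly the reverse of the situation at the primal boundary, so one expects the absorption properties to swap as well.

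To establish that $H_D$ is dual-condensing, I would show that for any pair of boundary vertices $v, v' \in \partial V$, there is a subset of interior edges $l \subset \interior{E}$ with $\partial l = \delta(v \cup v')$ (interpreted relative to the boundary) such that $Z(l)$ commutes with all symmetry generators. The key fact is that every boundary dual qubit belongs to a unique $S_v$, so a string of $Z$'s on dual qubits is allowed to terminate at the boundary without anticommuting with any symmetry operator. Since $Z(l)$ flips precisely the cluster terms $K_e$ with $e \in l$, applying such operators iteratively translates a dual loop excitation toward the boundary and absorbs it piecewise, in direct analogy with the primal-condensing argument for $H_P$.

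To establish that $H_D$ is not primal-condensing, I would argue that every primal qubit on a boundary face $f \in \partial F$ remains in the support of two complete primal symmetry generators $S_q, S_{q'}$ associated with the two adjacent 3-cells. Therefore, any operator of the form $Z(l')$ with $l' \subset F$ commutes with all primal generators only when $\delta l' = \emptyset$. This forces symmetric primal excitations to form closed (dual) loops even near the boundary, so they cannot terminate on $H_D$.

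I do not anticipate a serious obstacle: the argument is a straightforward duality transformation of Lemma~\ref{lemprimcondensing}, and the only verification required is a local check that the lattice termination depicted in Fig.~\ref{figBoundaryDual} realises the claimed asymmetry---boundary dual qubits covered by a unique $S_v$ and boundary primal qubits doubly covered by $S_q$ generators. Once this combinatorial fact is in hand, both halves of the statement follow by the same reasoning as before with primal and dual roles interchanged.
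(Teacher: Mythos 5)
Your proposal is correct and takes essentially the same route as the paper, whose proof of this lemma is simply to repeat Lemma~\ref{lemprimcondensing} with the roles of primal and dual qubits exchanged; your spelled-out version of the two dualized combinatorial facts (boundary dual qubits covered by a unique $S_v$, boundary primal qubits covered by two $S_q$ generators) is exactly the content of that exchange. The only quibble is notational: the termination condition for the dual string should read $\partial l = \{v,v'\}$ (or be phrased in terms of the boundary edges on which the string ends) rather than $\partial l = \delta(v\cup v')$, but this does not affect the argument.
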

\begin{proof}
	The proof is the same as Lemma~\ref{lemprimcondensing}, exchanging the role of primal and dual qubits. 
\end{proof}

\begin{figure}[h]
	\subfloat[]{	\includegraphics[width=0.45\linewidth]{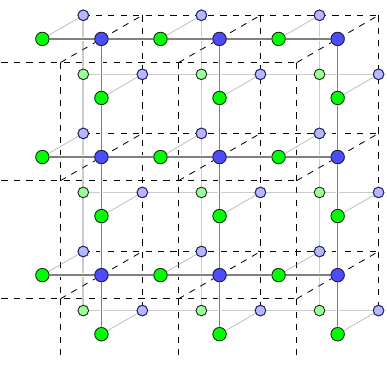}}
	\quad
	\subfloat[]{	\includegraphics[width=0.45\linewidth]{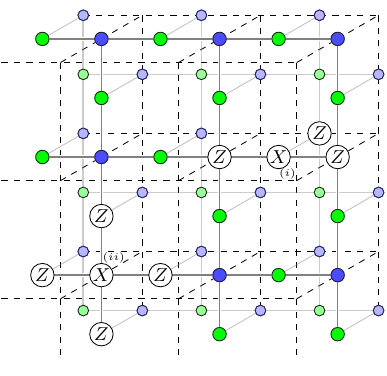}}
	\caption{(a) The lattice at the dual boundary. Primal qubits are depicted in green, while dual qubits are depicted in blue. (b) The dual boundary Hamiltonian $H_{D}$ consists of cluster terms, as depicted by $(i)$ and $(ii)$. Bold lines indicate nearest neighbour relations between qubits, while dashed lines indicate edges of the ambient cubic lattice.}
	\label{figBoundaryDual}
\end{figure}

\subsubsection{Sink boundary}
Finally, we consider the sink boundary. This lattice boundary is again given by the `smooth' boundary conditions of the previous subsection. On the boundary, qubits are placed only on boundary faces, and  not boundary edges, as depicted in Fig.~\ref{figBoundarySink}. On this boundary, both primal and dual 1-form symmetry generators reduce to the 6-body operators of Fig.~\ref{figCluster1form}. The Hamiltonian terms of $H_{PD}$ are 5-body $K_e$ operators and 1-body or 4-body $K_f$ operators, as depicted in Fig.~\ref{figBoundarySink}. These terms all commute with the symmetry.

\begin{lem}\label{lemSinkBoundary}
	The sink boundary $H_{PD}$ is both primal-condensing and dual-condensing.
\end{lem}
\begin{proof}
	The proof is similar to the first part of Lemma~\ref{lemprimcondensing}: we observe that the boundary contains both primal and dual qubits that belong to unique 6-body symmetry generators $S_q$ and $S_v$, respectively (as opposed to two). As such primal and dual excitation chains can symmetrically terminate on these qubits. 
\end{proof}

Finally, we note that gapped interfaces exist between all of these boundaries.  We will demonstrate this fact explicitly in the following subsection.

\begin{figure}
	\subfloat[]{	\includegraphics[width=0.42\linewidth]{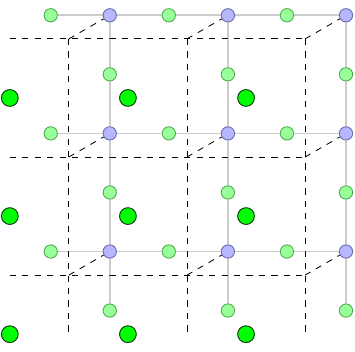}}
	\quad
	\subfloat[]{	\includegraphics[width=0.42\linewidth]{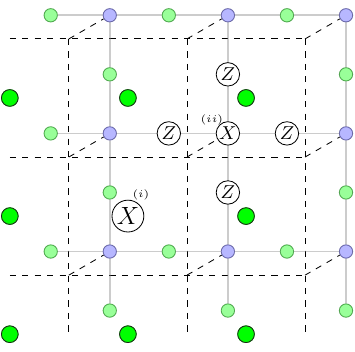}}
	\caption{(a) The lattice at the sink boundary. Primal qubits are depicted in green, while dual qubits are depicted in blue. (b) The dual boundary Hamiltonian $H_{PD}$ consists of cluster terms, as depicted by $(i)$ and $(ii)$. The primal qubits on the boundary surface have no neighbours (meaning the corresponding cluster term is given simply by Pauli $X$). Bold lines indicate nearest neighbour relations between qubits, while dashed lines indicate edges of the ambient cubic lattice.}
	\label{figBoundarySink}
\end{figure}

\subsection{The cubic RBH code}\label{sec:cubicRBH}
We now use these various boundaries to construct a the code that is self correcting under 1-form symmetries, we call the model the cubic RBH model. 

\subsubsection{The lattice}
The lattice $\calL$ we consider has the topology of a 3-ball. Namely, we consider cubic boundary conditions: the lattice is a cubic lattice with dimensions $d \times d \times d$, with six boundary facets, depicted in Fig.~\ref{figBoundaryRBHSix}. The bulk of the model is given by the usual RBH cluster Hamiltonian, while on each of the six boundary facets we choose one of four different boundary conditions. Namely, one of the six boundary faces is chosen to support the logical information using a dressed toric code $H_{\partial \calL}$ -- which we will call the toric code boundary -- and the remaining five boundary faces supports either a primal boundary, a dual boundary or a sink boundary, as depicted in Fig.~\ref{figBoundaryRBHSix}.

The lattice must terminate on each of these boundary facets according to the boundary conditions outlined in the previous two subsections. In Fig.~\ref{figBoundaryRBHSix} we show a small example of the lattice when viewed from the direction of the toric code (i.e., $H_{\partial \calL}$) boundary. Note in particular that the toric code boundary facet has planar boundary conditions due to the way it terminates on the primal and dual boundaries. Namely, the top and bottom edges of the toric code boundary facet are known as rough edges, and the left and right edges are known as smooth edges.

\begin{figure}[h]
	\centering
	\subfloat[]{	\includegraphics[width=0.455\linewidth]{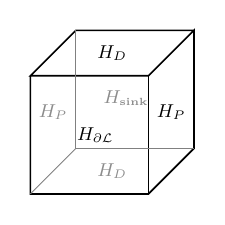}}
	\quad
	\subfloat[]{	\includegraphics[width=0.45\linewidth]{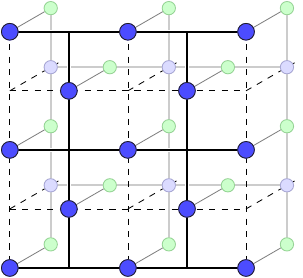}}
	\caption{(a) The boundaries of the cubic RBH model. $H_{\partial \calL}$ is the toric code boundary, $H_P$ and $H_D$ are the primal and dual boundaries respectively, and $H_{\text{sink}}$ is the sink boundary. (b) The lattice for the toric code boundary $H_{\partial \calL}$. The top and bottom edges are called rough boundary conditions while the left and right edges are called smooth boundary conditions. The Hamiltonian consists of the negative sum of all star and plaquette terms $\overline{A}_v$, $\overline{B}_f$ from Eq.~(\ref{eqDressedTC}). Dashed lines denote edges of the cubic lattice.}
	\label{figBoundaryRBHSix}
\end{figure}

\subsubsection{The Hamiltonian}
The Hamiltonian decomposes into bulk and boundary components. The bulk Hamiltonian is given by the usual RBH cluster Hamiltonian $H_{\interior{\calL}}$ of Eq.~(\ref{eqBulkHam}). The boundary Hamiltonians come in four different types, firstly, on the toric code boundary we put the dressed toric code Hamiltonian $H_{\partial \calL}$ of Eq.~(\ref{eqDressedToricCodeBound}). Dressed toric code terms are truncated near the rough and smooth edges. In particular, the plaquette terms $\overline{B}_f$ are truncated near the rough boundaries, while the star terms $\overline{A}_v$ are truncated near the smooth boundaries. The Hamiltonians $H_{P}$, $H_{D}$, and $H_{\text{sink}}$ on the primal, dual and sink boundaries, can all be expressed in the form $H_{\text{boundary}}$ of Eq.~(\ref{eqHamP}). Terms in these Hamiltonians are cluster terms that are potentially truncated, depending on what boundary they reside on. 

We note that all of these boundaries meet at gapped interfaces. In particular, the lattice structure at the edge lines and corners is explicitly depicted in the Fig.~\ref{figBoundaryRBHSix}. The symmetry operators are again generated by $S_v$ and $S_q$ of Eq.~(\ref{eqRBH1formSymOps}). They are 5-body or 6-body operators, depending on if they are near a particular boundary. All Hamiltonian terms are symmetric and mutually commuting.

\subsubsection{The ground space}
As discussed, the bulk Hamiltonians $H_{\interior{\calL}}$, along with the boundary Hamiltonians $H_{P}$, $H_{D}$, and $H_{\text{sink}}$ are all non-degenerate. The overall degeneracy manifests on the toric code boundary $H_{\partial \calL}$. In particular, for the planar boundary conditions on the toric code boundary, there is a 2-fold degeneracy. This can be easily verified by its local unitary equivalence with the planar code, which encodes one logical qubit.

\subsubsection{Logical operators and codespace}

The toric code Hamiltonian $H_{\partial \calL}$ encodes one logical qubit, with string logical operators $\overline{X}$ and $\overline{Z}$ running between opposite pairs of edges of the boundary face. In particular, the logical operators are given by 
\begin{equation}\label{eqPlanarLogicals}
\overline{X} = \prod_{e \in a_d} X_e \prod_{f \in a_d^{\perp}}Z_f, \qquad \overline{Z} = \prod_{e \in b_d} Z_e,
\end{equation}
where  $a_d$ is a dual-cycle on the boundary (meaning it it a cycle on the dual of the boundary lattice) that runs between the two smooth edges, $b_p$ is a cycle on the boundary that runs between the two rough edges, and $a_d^{\perp} = \{f \in \interior{F}: \partial f \cap a_d \neq \emptyset \}$. These logicals are depicted in Fig.~\ref{figPlanarLogicals}. Note in particular, that such strings are symmetric, as the top and bottom boundary facets are dual-condensing, while the left and right are primal-condensing. 

\begin{figure}[h]
	\centering
	\subfloat[]{	\includegraphics[width=0.45\linewidth]{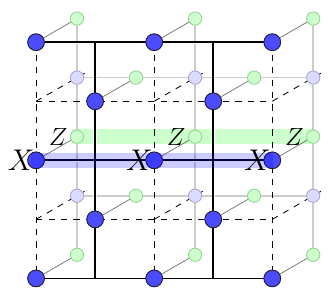}}
	\quad
	\subfloat[]{	\includegraphics[width=0.45\linewidth]{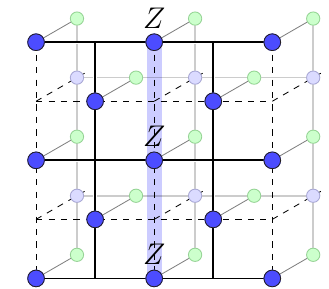}}
	\caption{Logical operators for the toric code boundary $H_{\partial \calL}$. (a) Logical $\overline{X}$ runs between the left and right smooth edges. (b) Logical $\overline{Z}$ runs between the top and bottom rough edges. Dashed lines denote edges of the cubic lattice.}
	\label{figPlanarLogicals}
\end{figure}

\subsubsection{Logical operator decomposition}
In this model, logical operators admit symmetric local decompositions, as we now demonstrate. The toric code Hamiltonian encodes one logical qubit, with string logical operators $\overline{X}$ and $\overline{Z}$ running between opposite pairs of edges of the boundary face. These logicals are given by Eq.~(\ref{eqPlanarLogicals}). In order to implement either logical operators ($\overline{X}$ or $\overline{Z}$) through a sequence of local moves, we will also create a large bulk excitation. (Note this is expected, as we claim the model is self-correcting, we must necessarily traverse a large energy barrier to implement a logical operator). This large bulk excitation can then be absorbed by the sink boundary in order to return to the codespace. Importantly, $e$-excitations ($m$-excitations) can be symmetrically created and destroyed at the rough edge (smooth edge) of the toric code boundary. In fact, implementing a logical $\overline{Z}$ ($\overline{X}$) operator can be viewed as a process creating an $e$-excitation ($m$-excitation) from one rough (smooth) edge to the opposite rough (smooth) edge. The strategy is outlined in Fig.~\ref{figBoundaryRBHSixError}.

\begin{lem}\label{lemSymDecomp}
	Both logical $\overline{X}$ and $\overline{Z}$ of the cubic RBH model admit symmetric local decompositions. 
\end{lem}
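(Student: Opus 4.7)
The plan is to build both $\overline Z$ and $\overline X$ via explicit sweeps in which a boundary anyon is always tethered to the sink boundary by a bulk string, so that every intermediate Pauli remains in the symmetric sector. The geometric picture is: nucleate a small closed dual (resp.\ primal) loop near the sink boundary; grow this loop through the bulk until its front reaches the rough (resp.\ smooth) edge of the toric-code face; then advance the front across the toric-code face along $b_d$ (resp.\ $a_d$) while the bulk portion of the loop retreats back into the sink in lockstep. When the sweep is complete, the bulk contribution has cancelled and only $\overline Z$ (resp.\ $\overline X$) remains.

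For $\overline Z$, this is implemented by choosing a 2D dual sheet $\Sigma\subset E$ of bulk edges spanning between $b_d$ and an auxiliary path $s$ on the sink boundary, and fixing a linear order on the multiset $\Sigma\cup b_d\cup s$ (with each edge of $\Sigma\cup s$ appearing twice and each edge of $b_d$ once) that realises the sweep above. We set $l^{(k)}$ to be the product of $Z_e$ over the first $k$ entries; then consecutive $l^{(k)}$ differ by a single local $Z$, each edge of $\Sigma\cup s$ is cancelled by the time $k=N$, and each edge of $b_d$ is applied once, so $l^{(N)} = \overline Z$. For $\overline X = \prod_{e\in a_d}X_e\prod_{f\in a_d^\perp}Z_f$ we run the mirror construction with a primal sheet $\Sigma'\subset F$ and use the dressed operators $\tilde X_e = X_e Z_{f(e)}$ of Eq.~(\ref{eqClusterBdryDOF}) along $a_d$ in place of bare $Z$'s in the final stage.

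The heart of the argument is to verify that each intermediate $l^{(k)}$ is symmetric. Since $l^{(k)}=Z(E_k)$ for some $E_k\subset E$, its commutator with any generator $S_v$ depends only on the parity of $|E_k\cap\mathrm{supp}\,S_v|$, i.e., on $\partial E_k$. By the choice of sweep order, $\partial E_k$ at every step falls into one of three symmetric configurations: (i) $\partial E_k=\emptyset$, corresponding to a closed bulk dual loop (symmetric by the analysis around Eq.~(\ref{eqBulkEx})); (ii) two vertices on the sink boundary where the truncated $S_v$ absorbs the string via dual-condensation of the sink boundary; or (iii) a single rough-edge vertex carrying an $e$-anyon, paired with a bulk dual tether in exactly the configuration required by Lemma~\ref{lemAnyonCoupling}. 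In all three cases $[l^{(k)},S_v]=[l^{(k)},S_q]=0$. The $\overline X$ case is structurally identical after exchanging primal/dual, $e\leftrightarrow m$, and the rough/smooth edges, invoking the primal-condensation of the sink boundary in place of dual-condensation.

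The main technical obstacle is the explicit construction of a sweep order on $\Sigma\cup b_d\cup s$ whose growth front never produces a free endpoint at a bulk vertex or at an interior point of any non-condensing boundary facet. This is a combinatorial routing problem for the sheet $\Sigma$ through the cube, straightforwardly handled given the $3$-ball topology and the specified placement of the primal, dual and sink facets around the toric-code face; once a valid ordering is fixed, symmetry of each $l^{(k)}$ reduces mechanically to the coupling constraints of Lemmas~\ref{lemPDCond} and~\ref{lemAnyonCoupling}.
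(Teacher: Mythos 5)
Your proposal is correct and follows essentially the same route as the paper's proof: a sequence of local symmetric moves in which bulk loop excitations are grown, the boundary anyon at the moving front stays tethered to a bulk string as required by Lemma~\ref{lemAnyonCoupling}, and the residual bulk string is absorbed at condensing boundaries (Lemma~\ref{lemprimcondensing} and the sink), with the dressed $\tilde X_e$ handling the $\overline{X}$ case. The only differences — nucleating at the sink and sweeping toward the code face rather than growing a contractible loop near the code boundary and pushing its bulk half to the sink, plus the explicit multiset bookkeeping — are cosmetic.
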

\begin{proof}
We first consider a symmetric local decomposition of $\overline{Z}$. Consider a string operator $Z(c)$, $c\subset E$ supported on the dual qubits near the code boundary, as in Fig.~\ref{figBoundaryRBHSixError}. Grow this string operator until we achieve $Z(l+ l') \equiv Z(l) Z(l')$, where $l+l'$ is a contractible loop (and therefore achievable by local symmetric moves), $l$ is a string running between the top and bottom rough edges, and $l'$ is a string in the bulk with the same boundaries as $l$. Thus $Z(l)$ is a logical $\overline{Z}$ operator, and $Z(l')$ is an operator causing a bulk string-like excitation, anchored between the two dual boundaries. We then consider translating the bulk excitation caused by $Z(l')$ to the sink boundary, following Fig.~\ref{figBoundaryRBHSixError} (which can be achieved with local symmetric moves as the two loops are homologous). This operator, and the corresponding excitations, can then be absorbed by the sink boundary as it is dual-condensing.

Logical $\overline{X}$ operators can be decomposed in a similar way. First, consider the same process as above to produce a string operator $Z(l')$, $l\subset F$ supported on the primal qubits anchored between the opposite primal boundaries (can be achieved in the same way, as the sink boundary is primal-condensing). $Z(l')$ can be translated adjacent to the code boundary, such that $l' = a_d^{\perp}$ for some dual-cycle on the boundary $a_d$. One can then apply a sequence of Pauli-$X$ operators along $a_d$, giving logical $\overline{X}$.
\end{proof}

\begin{figure}[h]
	\includegraphics[width=0.99\linewidth]{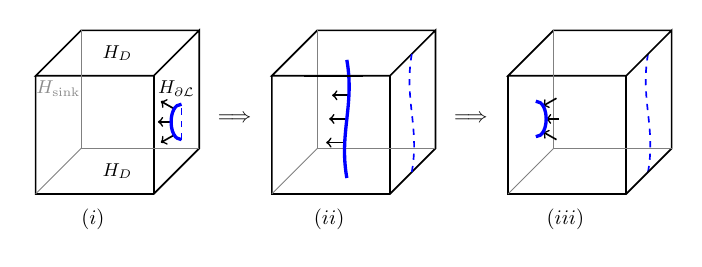}
	\caption{Implementing a logical $\overline{Z}$ operator through a sequence of local moves. $(i)$ An error chain $Z(c)$ supported on dual qubits (the union of the solid blue and dashed blue lines) is created near the toric code boundary. This error chain creates string excitations in the bulk (solid blue), and anyonic excitations where it meets the toric code boundary. $(ii)$  The loop is grown until it consists of a logical operator $\overline{Z}$ (dashed blue line) along with a large bulk excitation (solid blue), anchored between the two dual boundaries. $(iii)$ The bulk excitation is moved to the sink boundary, where it can be absorbed. The whole process results in a logical $\overline{Z}$. Logical $\overline{X}$ operators can be implemented in a similar way, where an error loop on the primal lattice is grown and propagated, and an additional chain of Pauli $X$ errors is also propagated along the toric code boundary.}
	\label{figBoundaryRBHSixError}
\end{figure}

\subsubsection{The energy barrier}
As we have seen, when the dynamics are restricted to the 1-form symmetric sector, bulk excitations form collections of closed loop-like objects. Secondly, boundary anyonic excitations only appear at the end of a bulk string-like excitation. This coupling of the thermal properties between bulk and boundary in the presence of symmetry, is enough to achieve a diverging symmetric energy barrier (as defined in Eq.~(\ref{eqSymEnergyBarrier})).

\begin{defn}\label{defndist}
	We define the lattice width $d$ of the cubic RBH model as $d = \min \{d_{Z} , d_{X} , d_{\text{cond}} \}$, where $d_{Z}$ is the smallest lattice distance between the two rough edges of the toric code boundary, $d_{X}$ is the smallest lattice distance between the two smooth edges of the toric code boundary, and $d_{\text{cond}}$ is the smallest lattice distance between the toric code boundary and the sink boundary. 
\end{defn}
Note that $\min\{d_{Z} , d_{X} \}$ is the usual (code) distance of the planar code on the same boundary. For any edge $e\in \partial E$ (face $f\in \partial F$) we define $d_{\text{cond}}(e)$ ($d_{\text{cond}}(f)$) as the lattice distance to the nearest dual-condensing (primal-condensing) boundary. Recall also the lattice distance $d(x,x')$ defined in Sec.~\ref{secBoundaryExcitationDist}.

\begin{lem}\label{lemSinkEnergy}
	Let $C \subset \partial E \cup \partial F$ denote the positions of a general configuration of boundary anyons. Then the energy cost to symmetrically create this configuration is lower bounded by $( \tilde{d}_C + |C|)\Delta_{\text{gap}}$, where 
	\begin{equation}\label{eqDistConfig}
	\tilde{d}_C = \min_{P\in \hat{P}} \left\{\sum_{\{a\}, \{b,c\} \in P} d_{\text{cond}}(a) + d(b,c) \right\}
	\end{equation}
	where $P$ is a partition of the elements of $C$ into pairs $\{b,c\}$ of the same type or singletons $\{a\}$, and $\hat{P}$ is the set of all such partitions.
\end{lem}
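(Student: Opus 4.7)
The plan is to bound the energy of any symmetric configuration realizing $C$ by counting flipped Hamiltonian terms and then invoking Lemma~\ref{lemAnyonCoupling} to constrain the bulk string excitations that must accompany the boundary anyons.

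First I would let $V_e = C \cap \partial V$ and $F_m = C \cap \partial F$ denote the $e$- and $m$-anyons respectively, so that $|C| = |V_e| + |F_m|$, and note that each flipped stabilizer term (whether boundary or bulk) contributes $\Delta_{\text{gap}}$ to the energy. By Lemma~\ref{lemAnyonCoupling}, the $e$-anyons at $V_e$ must be accompanied by a dual bulk string $l \subset \interior{E}$ whose vertex boundary $\partial l$ equals $V_e$ up to termini on dual-condensing boundaries, and the $m$-anyons at $F_m$ by a primal bulk string $l' \subset \interior{F}$ satisfying the analogous constraint $\delta l' = \delta F_m$ modulo termini on primal-condensing boundaries. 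The total energy of the configuration is therefore $(|C| + |l| + |l'|)\Delta_{\text{gap}}$, so the claim reduces to showing $|l| + |l'| \geq \tilde{d}_C$.

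Next I would lower bound $|l|$ and $|l'|$ separately by a path-decomposition argument. The edge set $l$ decomposes into connected components: each open component either joins two vertices of $V_e$ (contributing at least their lattice distance $d(b,c)$) or joins one vertex of $V_e$ to a dual-condensing boundary (contributing at least $d_{\text{cond}}(a)$), while any closed-loop component only adds to $|l|$ without constraining the anyon positions. This induces an assignment of each element of $V_e$ to either a same-type partner or the singleton class; the identical argument on the dual lattice applies to $l'$ and $F_m$ using primal-condensing boundaries. Combining the two assignments gives a partition $P \in \hat P$ of all of $C$ into same-type pairs and singletons with
\begin{equation}
|l| + |l'| \;\geq\; \sum_{\{b,c\}\in P} d(b,c) \;+\; \sum_{\{a\}\in P} d_{\text{cond}}(a),
\end{equation}
and minimizing the right-hand side over $P$ yields $|l| + |l'| \geq \tilde{d}_C$ by the defining Eq.~(\ref{eqDistConfig}).

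The main obstacle is the combinatorial step of decomposing an arbitrary edge (respectively face) set with prescribed boundary into geodesic chains whose total length dominates the induced sum of pair distances and singleton-to-boundary distances. This is a standard Steiner-tree-style argument for shortest paths on a lattice, but one must be careful to treat the $e$-type and $m$-type contributions independently (they live on disjoint sublattices, so pairing an $e$ with an $m$ is never allowed), to verify that connected components of $l$ or $l'$ that neither pair anyons nor terminate on a compatible condensing boundary must be closed and thus only inflate $|l|$ or $|l'|$, and to ensure that the endpoint sets on the condensing boundaries can always be realized (so the bound is attainable in principle, and our lower bound is not vacuous).
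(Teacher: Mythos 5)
Your proposal is correct and follows essentially the same route as the paper's proof, which generalizes Lemma~\ref{lemEnergetics} via the bulk--boundary coupling of Lemma~\ref{lemAnyonCoupling}: each $e$ ($m$) anyon must be connected by a bulk string excitation either to another anyon of the same type or to a dual-condensing (primal-condensing) boundary, so the minimal energy is the length of the shortest such matching plus the anyon count, scaled by $\Delta_{\text{gap}}$. Your explicit decomposition of the bulk string sets into open components (pairing or terminating on a compatible boundary) and closed components (which only inflate the length) simply fleshes out what the paper states as a brief sketch.
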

\begin{proof}
	This is the generalisation of Lemma~\ref{lemEnergetics} to the cubic RBH model. The proof follows in the same way, where we additionally note that each $e$ ($m$) anyon may be connected by a bulk loop excitation to either another $e$ ($m$) anyon, or to an appropriate dual-condensing (primal-condensing) boundary. As such, the smallest energy cost is obtained by finding the total length of the (shortest) perfect match for all anyons, where anyons are allowed to pair with their respective boundary. The energy cost is then obtained by scaling the length of the excitations by the gap $\Delta_{\text{gap}}$, and adding in the contribution for each anyon. 
\end{proof}

\begin{thm}
	The symmetric energy barrier for a logical fault in the cubic RBH model is lower bounded by 
	\begin{equation}
	{d\cdot\frac{\Delta_{\text{gap}}}{2} - r'},
	\end{equation}
	where $d$ is the lattice width, defined in Def.~\ref{defndist}, and $r'$ is constant (independent of lattice size). 
\end{thm}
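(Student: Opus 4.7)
The plan is to combine the energy cost formula of Lemma~\ref{lemSinkEnergy} with a topological charge-transfer argument that identifies a sufficiently ``charged'' intermediate step in any symmetric local decomposition of a logical operator. By the primal/dual symmetry of the construction, it suffices to consider decompositions of $\overline{Z}$; the case of $\overline{X}$ is analogous. Fix a symmetric local decomposition $\calD_G(\overline{Z}) = \{l^{(k)}\}$. The goal is to exhibit a step $k^*$ whose excitation configuration $C^{(k^*)}$ has matching distance $\tilde{d}_{C^{(k^*)}} \geq d/2 - O(1)$; Lemma~\ref{lemSinkEnergy} then immediately delivers the advertised bound.

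To locate such a step, I would use a family of test operators. For each dual-cycle $\gamma$ between the two smooth edges of the toric code boundary, Eq.~(\ref{eqPlanarLogicals}) gives a representative $W(\gamma)$ of $\overline{X}$. Since $W(\gamma)$ anticommutes with $\overline{Z}$ but commutes with the identity, the sign $\sigma_\gamma(k) \in \{\pm 1\}$ defined by $W(\gamma) l^{(k)} = \sigma_\gamma(k) l^{(k)} W(\gamma)$ satisfies $\sigma_\gamma(1) = +1$ and $\sigma_\gamma(N) = -1$, so there is a transition step $k^*(\gamma)$. Specializing to a cycle $\gamma_0$ through the middle of the toric code boundary --- at lattice distance at least $d_Z/2$ from both rough edges, with bulk codimension-$1$ extension $\Sigma_{\gamma_0}$ at distance at least $d_{\text{cond}}$ from the sink and dual boundaries --- we take the candidate step $k^* = k^*(\gamma_0)$. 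The next task is to argue that at $k^*$ the cumulative operator $l^{(k^*)}$ creates a configuration whose bulk dual string excitation has odd intersection parity with $\Sigma_{\gamma_0}$; by Lemma~\ref{lemAnyonCoupling}, the endpoints of this string either lie on opposite sides of $\gamma_0$ on the toric code boundary, or are absorbed on the dual-condensing sink/dual boundaries, and in either case the string has length at least $d/2 - O(1)$ by the choice of $\gamma_0$. This gives $\tilde{d}_{C^{(k^*)}} \geq d/2 - O(1)$, and Lemma~\ref{lemSinkEnergy} yields $\Delta_{\calD_G(\overline{Z})} \geq (d/2)\Delta_{\text{gap}} - r'$ for some constant $r'$. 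Taking the minimum over $\overline{Z}$ and $\overline{X}$ gives the claim.

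The main obstacle is justifying the identification of a homologically nontrivial bulk string in $C^{(k^*)}$ at the transition step. The difficulty is that a single sign flip of $\sigma_{\gamma_0}$ can in principle be produced by a constant-range symmetric operator --- for instance $X_f$ for a bulk face $f \in \gamma_0^{\perp}$ is itself symmetric and anticommutes with $W(\gamma_0)$ while creating only a small dual loop --- so a single sign-changing step need not correspond to creation of a string of length $d/2$. Overcoming this requires tracking the invariants $\sigma_{\gamma_t}$ for a family of parallel cuts $\{\gamma_t\}$ spanning the toric code boundary and arguing, e.g.\ by an averaging or pigeonhole argument across $t$ together with the fact that any local step changes $\sigma_{\gamma_t}$ for only $O(1)$ values of $t$, that at some step the bulk dual string configuration must be homologically nontrivial relative to a middle cut. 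Once this topological bookkeeping is carried out, the energy bound follows directly from Lemmas~\ref{lemAnyonCoupling} and~\ref{lemSinkEnergy}.
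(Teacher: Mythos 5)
Your overall architecture (find an intermediate step with large matching distance, then invoke Lemma~\ref{lemSinkEnergy} and Lemma~\ref{lemAnyonCoupling}) is the right one, but the proposal as written has a genuine gap at precisely the step that carries the content of the theorem, and you say so yourself. The single-cut transition argument does not work, for the reason you identify: the sign $\sigma_{\gamma_0}(k)$ is a property of the cumulative operator $l^{(k)}$, not of its syndrome, and a constant-range symmetric operator (e.g.\ $X_f$ with $f\in\gamma_0^{\perp}$) flips it while creating only an $O(1)$-energy dual loop. The multi-cut pigeonhole repair is only gestured at (``once this topological bookkeeping is carried out\dots''), and it is not merely bookkeeping: even granting that the number of anticommuting cuts must pass through roughly $d_Z/2$, you still need to convert ``$l^{(k^*)}$ anticommutes with about half of the parallel cuts'' into ``the instantaneous excitation configuration $C^{(k^*)}$ has $\tilde{d}_{C^{(k^*)}}\gtrsim d/2$.'' That conversion requires a cleaning-type lemma in the symmetric, bulk-boundary-coupled setting (a configuration with small matching distance can be annihilated by a symmetric operator of comparably small support, so that $l^{(k^*)}$ times that operator is a logical and hence anticommutes with either almost none or almost all of the cuts), and none of this is supplied. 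You also need the argument for all logical classes, not just $\overline{Z}$ and $\overline{X}$ separately, though that is minor.

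For comparison, the paper's proof does the step-by-step bookkeeping directly on the excitation configuration rather than on commutation signs: it notes that under local symmetric moves the (boundary-assisted) matching distance $d_{C_k}$ of the anyon configuration changes by at most a constant $r$ per step, and that effecting a nontrivial logical forces this quantity to reach order $\min\{\lfloor d_X/2\rfloor,\lfloor d_Z/2\rfloor\}$ at some step $k'$; combining $\tilde{d}_{C_{k'}}\geq\min\{d_{C_{k'}},d_{\text{cond}}\}$ with Lemma~\ref{lemSinkEnergy} and Def.~\ref{defndist} then gives the bound with $r'=r\Delta_{\text{gap}}$. That route avoids the operator-sign detour entirely (admittedly at the cost of stating the ``separation must grow through $d/2$'' claim somewhat informally). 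If you want to pursue your test-operator formulation, the missing ingredients are (i) the cleaning lemma relating small $\tilde{d}_C$ to a small symmetric annihilating operator, and (ii) the pigeonhole over parallel cuts with the two-case (trivial/nontrivial logical) dichotomy; without them the proposal does not yet establish the lower bound.
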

\begin{proof}
Let $\{l^{(k)} ~|~ k = 1, \ldots N\}$ be any sequence of operators such that each $l^{(k)}$ is symmetric, $l^{(k)}$ and $l^{(k+1)}$ differ only locally, $l^{(1)} = I$ and $l^{(N)}$ is a logical operator supported on either $a_d$ dual-cycle or the $b_p$ cycle of Eq.~(\ref{eqPlanarLogicals}). Let $r$ be the largest range of any operator $l^{(k)}l^{(k+1)}$ for any $k\in 1,\ldots N$, which is assumed to be constant. 

By locality of $l^{(k)}$, we must traverse an intermediate state that has a nonzero number of anyonic excitations on the code boundary. Moreover, since at each time step the separation between anyons can only change by a constant amount, to achieve a nontrivial logical operator, there is a time step $k' \in \{1,\ldots N\}$ with a configuration of anyons given by $C_{k'}$, such that $d_{C_{k'}} \geq \min\{\lfloor d_X/2 \rfloor, \lfloor d_Z/2 \rfloor\} -r$. Here, ${d}_{C_{k'}}$ is given by the (minimum length) perfect match of all anyons on $\partial L$, where anyons can be matched with the boundaries they can condense on. Note that  $\tilde{d}_{C_{k'}} \geq \min\{d_{C_{k'}}, d_{\text{cond}}\}$, where $\tilde{d}_{C_{k'}}$ is defined in Eq.~(\ref{eqDistConfig}). Then by Lemma~\ref{lemSinkEnergy}, we have that the energy cost of the configuration $C_{k'}$ is at least $(\tilde{d}_{C_{k'}} + |C_{k'}|)\Delta_{\text{gap}}$ which is lower bounded by $(\min\{d_{C_{k'}}, d_{\text{cond}}\} + |C_{k'}|-r) \Delta_{\text{gap}}$. Using the definition of the lattice width and letting $r' = r \Delta_{\text{gap}}$, the result follows.
\end{proof}

This proof gives a conservative lower bound on the energy barrier, but it is sufficient for our purposes. In particular, as the lattice width $d$ grows with the number of qubits, we have a macroscopic energy barrier. In other words, the energy barrier for a logical fault grows with the size of the system. 

\subsubsection{Self-correction}

We have shown that the 1-form symmetric cubic RBH model inherits a macroscopic energy barrier to a logical fault, due to the string-like nature of excitations resulting from the 1-form symmetry together with its coupling of bulk and boundary excitations. The question is whether this is sufficient for an unbounded memory time. In Appendix~\ref{PeierlsR}, we give an argument following the well-known Peierls argument (see also Ref.~\cite{Dennis}) to show that this energy barrier implies self-correction of the 1-form symmetric RBH model.  In brief, we estimate the probability that an excitation loop $l$ of size $w$ emerges within the Gibbs ensemble at inverse temperature $\beta$. We show that large loop errors are quite rare if the temperature is below a critical temperature $T_c$, and we give a lower bound on $T_c$ at $2/\log(5)$. As such, if the error rate is small enough (that is, the temperature is low enough), then the logical information in the code is stable against logical errors and the encoded information on the boundary will be protected for a time growing exponentially in the system size. 

Along with the memory time, we have therefore met all of the requirements of a symmetry-protected, self-correcting quantum memory. In particular, we have shown that all operators admit a symmetric, local decomposition in Lemma~\ref{lemSymDecomp}. Additionally, the ground space of this system is perturbatively stable, as it meets the TQO stability conditions of Ref.~\cite{TopoStability}. Finally, as a code, it admits an efficient decoder~\cite{Dennis,RBH}. Therefore this model meets the requirements for a self-correcting quantum memory when protected by the $\zz_2^2$ 1-form symmetry.

\subsection{Encoding with more general topologies}
One may ask what other boundary conditions and topologies can be used to construct a self-correcting code under 1-form symmetries. In this subsection we outline one other choice and rule out a number of others. 
 
In particular, note that in the previous discussion we could replace the sink Hamiltonian with another toric code Hamiltonian, as it is both primal and dual condensing. While the degeneracy of the ground space increases by another factor of two in this case, we do not get an increase in the number of qubits that we can encode. This is because the two opposite toric code boundaries must always be correlated as dictated by the symmetry: labelling the two codes as $L$ and $R$, there is no local symmetric decomposition of individual logical operators $\overline{Z}_L$ and $\overline{Z}_R$ ($\overline{X}_L$ and $\overline{X}_R$), but only of the product $\overline{Z}_L \otimes \overline{Z}_R$ ($\overline{X}_L\otimes \overline{X}_R$). This property is similar to theory of SPT phases in one dimension, where the two separate degenerate boundary modes of a 1D chain cannot be independently accessed in the presence of symmetry. 

Similarly, one could remove the primal and dual boundaries, by considering the lattice $\calL$ with a topology of $T^2 \times I$, where $T^2$ is the torus and $I=[0,1]$ is the interval. On each side, $T^2 \times \{0\}$, $T^2 \times \{1\}$ we choose toric code boundary conditions and define a toric code Hamiltonian $H_{\partial \calL}$. With this topology, the ground space of the system is $2^4$-fold degenerate (as each boundary toric code has a degeneracy $d_g = 2^{2g}$ where $g$ is the genus of the 2D manifold it is defined on, with $g=1$ for the torus). For each toric code, one can define logical operators 
\begin{equation}\label{eqToricLogicals}
\overline{X}_1 = \prod_{e \in a_d} X_e \prod_{f \in a_d^{\perp}}Z_f, \qquad \overline{Z}_1 = \prod_{e \in b_p} Z_e,
\end{equation}
and
\begin{equation}
\overline{X}_2 = \prod_{e \in b_d}X_e \prod_{f \in b_d^{\perp}}Z_f, \qquad \overline{Z}_2 = \prod_{e \in a_p} Z_e,
\end{equation}
for cycles $a_p$, $b_p$ and dual-cycles $a_d$, $b_d$ wrapping around the two nontrivial cycles of the torus labelled by $a$ and $b$. Similarly, we can only make use of one of the toric codes, as the two copies are correlated under the 1-form symmetry. In other words, we do not have a symmetric decomposition of all logical operators, only a subgroup of them.

\subsubsection{Topological obstruction to logical decompositions}
The issue of finding choices of boundary conditions that allow for symmetric local decompositions of logical operators is nontrivial. For example, on a solid torus $D^2 \times S^1$, with $D^2$ a disk and $S^1$ a circle (depicted in Fig.~\ref{figSolidTorus}), we cannot encode any logical qubits. Although the boundary of the solid torus is a torus, there does not exist symmetric local decompositions of logical operators supported on the $b$ cycle of Fig.~\ref{figSolidTorus}. For example, logical operators $\overline{Z}$ supported on the $b$ cycle (in Fig.~\ref{figSolidTorus}) cannot be created by a sequence of local, symmetric operators, because any such sequence results in a homologically trivial (contractible) cycle. This phenomenon will always occur for codes that live on the boundary of a 3-manifold due to the following fact: for any 2-manifold, precisely half of the noncontractible cycles (if they exist) become contractible when the manifold is realised as the boundary of a 3-manifold~\cite{hatcher2000notes}. This justifies our consideration of the more involved boundary conditions of the previous subsection. 

\begin{figure}
	\centering	\includegraphics[width=0.4\linewidth,angle=-00]{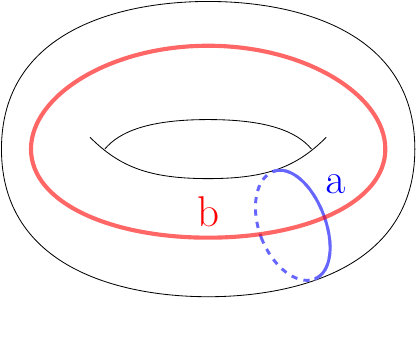}	
	\caption{The solid torus. The boundary of the solid torus is a torus, where two nontrivial cycles $a$ and $b$ are depicted. One might expect to be able to encode two logical qubits in the system, however any operator supported on the $b$ loop does not admit a symmetric local decomposition.}%
	\label{figSolidTorus}
\end{figure}

\subsection{Bulk boundary correspondence at nonzero temperature}\label{secBulkBoundary}

As shown above, the 1-form symmetries constrain the form of the excitations in the model and give rise to an energy barrier, and self-correction.  These 1-form symmetries are a very strong constraint, and one may ask if a code is trivially guaranteed to be self-correcting whenever such symmetries are enforced.  (As a example of a strong symmetry leading trivially to self-correction, consider the toric code where the symmetry of the full stabilizer group is strictly enforced.) 

In this section we show that the 1-form symmetry, although strong, is itself not sufficient to lead to self-correction unless the bulk is SPT ordered (such as in the previous models).  Specifically, we show that self-correction under 1-form symmetries depends on the bulk SPT order of the model, establishing a bulk-boundary correspondence for SPTs at nonzero temperature. Recall, at zero temperature, the correspondence is that a system with nontrivial SPT order in the bulk must have a protected boundary theory -- meaning it is gapless or topologically ordered -- whenever the symmetry is not broken \cite{SPTBdry1, SPTBdry2}. Here we show that the bulk boundary correspondence holds at nonzero temperature in the RBH model; that the stability of the boundary toric code phase (i.e.,~whether or not we have a SCQM) depends on the bulk SPT order at nonzero temperature. 

In order to make this connection, we recall a formulation of phase equivalence due to Chen \textit{et al.}~\cite{WenTopo}. Namely, two systems belong to the same phase if they can be related by a local unitary transformation (a constant depth quantum circuit), up to the addition or removal of ancillas. Importantly, with symmetries $S(g)$ present, the local unitary transformations must commute with the symmetry and the ancillas that are added or removed must be in a symmetric state. 

We now remark on the earlier claim on the necessity of the SPT nontriviality of the bulk to achieve self-correction. To do so, we first note that the symmetric energy barrier is invariant under symmetric local unitaries (that is, it is a phase invariant). Indeed consider two Hamiltonians $H_A$ and $H_B$ (defining quantum memories) in the same phase. Then in particular, we have $H_A +H_{\mathcal{A}}$ and $H_B$ are related by a symmetric local unitary $U$, where let $H_{\mathcal{A}}$ consists of a sum of local projections on the ancillas $\mathcal{A}$ into a symmetric state.  Since $H_A$ and $H_A + H_{\mathcal{A}}$ differ only by a sum of non-interacting terms on the ancilla, they have the same energy barrier. Let $\overline{X}$ be a logical operator for $H_A$, and consider a local decomposition $\{l_{X}^{(k)} ~|~ k = 1, \ldots N\}$ of $\overline{X}$ (recall $l_{X}^{(1)}=I$ and $l_{X}^{(N)} = \overline{X}$, and $l_{X}^{(k)}$ and $l_{X}^{(k+1)}$ differ only by a local operator). This is also a logical decomposition for $H_A + H_{\mathcal{A}}$. Then $\{Ul_{X}^{(k)}U^{\dagger} ~|~ k = 1, \ldots N\}$ constitutes a local decomposition for a logical operator of $H_B$, with the same energy barrier. This works for all choices of logical operators $\overline{X}$ and the models have the same symmetric energy barrier. 

The invariance of the energy barrier requires us to consider a SPT-nontrivial bulk to achieve self-correction in the presence of 1-form symmetries. Indeed, if we instead considered the SPT-trivial model $H_{\interior{\calL}}^{(0)}$ of Eq.~(\ref{eqHamTriv}) with undressed toric code terms of Eq.~(\ref{eqToricTriv}) on the boundary in the presence of 1-form symmetries, we see that there is no energy barrier, in the following way. Consider the logical $\overline{X}$ operator, which is given by a product of Pauli $X$ operators supported on a dual cycle on $\partial \calL$ (it is not dressed, unlike the logical $\overline{X}$ of the RBH model $H$). Then the symmetric energy barrier for this error is a constant $2\Delta_{\text{gap}}$, since the process of creating two $m$ particles and wrapping them around a boundary cycle is symmetric, and only flips two $B_f$ plaquettes at any given time. Therefore the trivial model is not self-correcting, even in the presence of 1-form symmetries. In particular, this also gives a simple argument for why $H$ belongs to a distinct SPT phase to $H_{\interior{\calL}}^{(0)}$. Indeed, the SPT ordering in the bulk is crucial to achieving the bulk-boundary anyon coupling of Lemma~\ref{lemAnyonCoupling}, that leads to a confinement of anyons as in Lemma~\ref{lemEnergetics}.

This bulk boundary correspondence (at nonzero temperature) holds for systems with onsite symmetries too; we have argued in Sec.~\ref{secOnsite} that self-correction was not possible on the 2D boundary of a 3D SPT protected by onsite symmetry. This coincides with with the lack of bulk SPT order at $T \textgreater 0$ when the protecting symmetry is onsite, as shown in Ref.~\cite{ThermalSPT}.

\section{The gauge color code protected by 1-form symmetry}
\label{sec:GCC}

We now turn to a model based on the gauge color code in 3 dimensions as our second example of a symmetry-protected self-correcting quantum memory. The gauge color code~\cite{Bombin15} is an example of a topological subsystem code. In this section we study a commuting Hamiltonian model with a 1-form symmetry based on the gauge color code. This model provides another example of a self-correcting quantum memory protected by a 1-form symmetry. 

We first give a brief overview of the gauge color code before defining the Hamiltonian model we are interested in.

\subsubsection{Subsystem codes}

In addition to logical degrees of freedom, subsystem codes contain redundant `gauge' degrees of freedom in the codespace that are not used to encode information. Whereas stabilizer codes are specified by a stabilizer group $\calS$ that is an abelian subgroup of the Pauli group, a subsystem code is specified by a (not necessarily abelian) subgroup $\calG$ of the Pauli group, known as the gauge group. A stabilizer group $\calS$ for the subsystem code can be defined by choosing any maximal subgroup of the center $\calZ(\calG)$ of the gauge group, such that $-\mathbf{1} \notin \calS$. In other words, $\calS \propto \calZ(\calG)$ (in general there are many choices for $\calS$ obtained by selecting different signs for generating elements). As usual, the codespace $C_{\calS}$ is defined as the mutual $+1$ eigenpsace of all elements of $\calS$.

Information is only encoded into the subsystem of $C_{\calS}$ that is invariant under all gauge operators $g\in \calG$. More precisely, we have $C_{\calS} = \calH_{\text{l}}\otimes \calH_{\text{g}}$, where $\calH_{\text{l}}$ is the state space of logical degrees of freedom (elements of $\calG$ act trivially on this space), and $\calH_{\text{g}}$ is the state space of the gauge degrees of freedom (elements of $\calG$ can act nontrivially on this space). There are two types of Pauli logical operators: bare and dressed. Bare logical operators are elements of $C(\calG)$; the centraliser of the gauge group within the Pauli group, meaning they are Pauli operators that commute with all gauge operators. Dressed logicals are elements of $C(\calS)$; the centraliser of the stabilizer group within the Pauli group (meaning they are Pauli operators that commute with all stabilizer operators). Bare logicals act exclusively on logical degrees of freedom and act trivially on the gauge degrees of freedom, while dressed logicals can act nontrivially on gauge degrees of freedom, too. Both types of logicals are identified up to stabilizers (as stabilizers act trivially on the codespace).

\subsection{The gauge color code lattice}

Gauge color codes are defined on lattices known as 3-colexes \cite{BombinColexes1}. In particular, a 3-colex is the result of gluing together 3-cells (polyhedra) such that each vertex is 4-valent (meaning each vertex belongs to 4 edges) and 4-colorable (meaning each polyhedral 3-cell can be given one of four colors such that neighbouring 3-cells are differently colored). Let these four colors be labelled \textbf{r}, \textbf{b}, \textbf{g}, and \textbf{y} (for red, blue, green, and yellow). 

We note that, similar to the RBH model, the gauge color code must have boundaries in order to possess a nontrivial codespace. For concreteness, we consider the tetrahedral boundary conditions of Ref.~\cite{BombinJump}, but one could also consider more general boundary conditions. In the following, we label the Tetrahedral 3-colex by $\calC_3$, which is a set of vertices, edges, faces and 3-cells. Tetrahedral 3-colexes $\calC_3$ are given by cellulations of the 3-ball, whose boundary consists of four facets, each of which must satisfy a certain coloring requirement. To describe this requirement, we first note that each non-boundary edge can be given a single color label, where the color is determined by that of the two 3-cells that it connects. If an edge terminates on a boundary (meaning precisely one of its vertices belongs to the boundary) then its color is determined by unique bulk 3-cell on its other endpoint. Then the boundary coloring requirement is as follows: for each boundary facet, only edges of one color can terminate on the boundary and this color is unique for each facet. We therefore color each boundary facet by the color of the edges that terminate on it.
	
Similarly, each face $f$ in $\calC_3$ can be labelled by pairs of colors $\textbf{uv} \equiv \textbf{vu}$, inherited from the two neighbouring 3-cells that it belongs to. Namely, each non-boundary face is colored by the complement of the two colors on the 3-cells the face is incident to (e.g., a face belonging to a \textbf{r} and \textbf{b} 3-cell is colored \textbf{gy}). Faces on the boundary are colored by the opposite of the color of the boundary and the color of the unique 3-cell they belong to. As such, the boundary of color \textbf{k} consists of plaquettes of all colors \textbf{uv} such that $\textbf{u}, \textbf{v} \neq \textbf{k}$. We arbitrarily choose one of the boundary facets, the \textbf{b} facet, and call this the outer colex $\calC_{\text{out}}$, which consists of the vertices, edges and plaquettes strictly contained on the boundary. This outer colex is therefore a 2-colex (a trivalent and 3-colorable two-dimensional lattice), and can be used to define a 2-dimensional color code. The remainder of the lattice $\calC_3 \setminus \calC_{\text{out}}$ is called the inner colex.

On the outer colex, each plaquette has one of three possible color pairs $\{\textbf{gy}, \textbf{ry}, \textbf{rg}\}$, which we relabel for simplicity according to $\textbf{gy}\leftrightarrow \textbf{A}$, $\textbf{ry}\leftrightarrow \textbf{B}$, $\textbf{rg}\leftrightarrow \textbf{C}$ as in Fig.~\ref{fig2dCC}. Each edge of the outer colex neighbours two plaquettes of distinct colors, we color each edge the third remaining color. Moreover, each of the three boundaries of the outer colex can be given a single color according to what color edges can terminate on them, as depicted in Fig.~\ref{fig2dCC}. 

\begin{figure}
	\centering
	\subfloat[]{
	\includegraphics[height = 0.42\linewidth, width=0.49\linewidth,angle=-00]{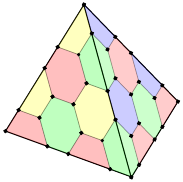}}
	\quad
	\subfloat[]{
	\includegraphics[width=0.44\linewidth]{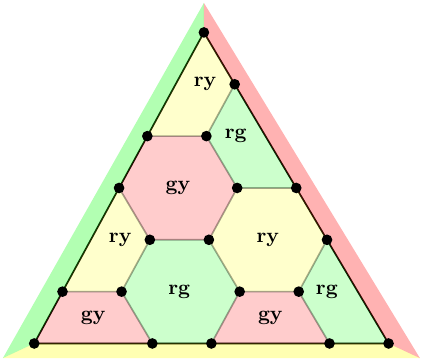}}
	\caption{(a) The tetrahedral 3-colex. (b) The \textbf{b} boundary of the tetrahedral lattice consists of faces that are colored \textbf{uv} with $\textbf{u}, \textbf{v} \neq \textbf{b}$, which are then relabelled according to $\textbf{gy} \leftrightarrow \textbf{A}$, $\textbf{ry} \leftrightarrow \textbf{B}$, and $\textbf{rg} \leftrightarrow \textbf{C}$.\label{fig2dCC}}
\end{figure}

\subsection{The 3D gauge color code}

To each vertex of the lattice $\calC_3$ we place a qubit. The gauge color code is specified by the gauge group $\calG$, which is a subgroup of the Pauli group on $n$ qubits (where $n$ is the number of vertices). The stabilizer group $\calS$ is in the center of the gauge group, consisting of elements of the gauge group that commute with every other element and where the signs are chosen such that $-1 \notin \calS$. For the gauge color code, we have an $X$ and $Z$ gauge generator for each face of the lattice, 
\begin{equation}
\calG= \{ G^X_f, G^Z_f ~|~ f \text{ a face of } \calC_3 \}\,,
\end{equation}
where $G^X_f = \prod_{v \in f} X_v$  and $G^Z_f = \prod_{v \in f} Z_v$ are Pauli operators supported on the face $f$. The stabilizers of the code are given by $X$ and $Z$ on the 3-cells of the lattice 
\begin{equation}\label{eqGCCStabilizer}
\calS = \{ S^X_q , S^Z_q ~|~ q \text{ a 3-cell of } \calC_3 \}\,,
\end{equation}
where $S^X_q = \prod_{v \in q} X_v$  and $S^Z_q = \prod_{v \in q} Z_v$ are Pauli operators supported on 3-cells. Codestates of the gauge color code are the states that are in the $+1$ eigenspace of all elements of the stabilizer group. With the aforementioned boundary conditions, the code encodes one logical qubit, and bare logical operators can be taken to be $\overline{X} = \prod_{v \in \calC_3} X_v$ and $\overline{Z} = \prod_{v \in \calC_3} Z_v$, where the products are over all vertices of the lattice. Importantly, note that equivalent logical operators (i.e., up to products of stabilizers) can be found on the outer colex, namely $\overline{X} \sim \prod_{v \in \calC_{\text{out}}}X_v$ and $\overline{Z} \sim \prod_{v \in \calC_{\text{out}}}Z_v$ are valid representatives. One can find dressed versions of these logicals on the outer colex that are stringlike -- we will discuss this in the following subsection. Similarly to the RBH model, we are therefore justified in viewing the logical information as being encoded on the boundary. 

There are many different Hamiltonians whose ground space contains a representation of the logical degrees of freedom of the gauge color code (here representation means that one can find dressed logicals of the gauge color code that are logical operators for the ground space of a given model). One possible choice of Hamiltonian that represents the GCC logical degrees of freedom in its ground space is given by the sum of all local gauge terms, 
\begin{equation}\label{eqGCCHam}
H_{\calG} = -\sum_f G^X_f - \sum_f G^Z_f,
\end{equation}
which we refer to as the \emph{full GCC Hamiltonian}. This Hamiltonian is frustrated, meaning one cannot exactly satisfy all of the constraints $G^X_f$ and $G^Z_f$ simultaneously, making it difficult to study. There are many different Hamiltonians whose ground spaces contain the codespace of the gauge color code, and in the next subsection we introduce a solvable model, consisting of mutually commuting terms. 

\subsection{A commuting model}

Here we define an exactly solvable model for the gauge color code. The Hamiltonian is given by a sum of gauge terms that belong to 3-cells of a single color. Without loss of generality, fix this color to be \textbf{b} (blue), and take all faces $X_f$ and $Z_f$ belonging to the blue 3-cells or blue boundary facet. That is, all faces $f$ that have color $\textbf{uv}$ with $\textbf{u}, \textbf{v} \neq \textbf{b}$. Label the set of these faces by 
\begin{equation}
\calG_{\textbf{b}} = \{G^X_f, G^Z_f ~|~ \calK (f)  \in \{ \textbf{gr}, \textbf{gy}, \textbf{ry} \}\},
\end{equation}
where $\calK (f)$ denotes the color of $f$. Note that $\calG_{\textbf{b}}$ consists of commuting terms, as all terms are supported on either a bulk 3-cell or the \textbf{b} boundary (which are both 3-colorable and 3-valent sublattices). Or equivalently, if two faces share a common color then the terms commute. We can define an exactly solvable Hamiltonian by 
\begin{equation}
H_{\calG_{\textbf{b}}} = -\sum_{G \in \calG_{\textbf{b}}} G.
\end{equation}
This Hamiltonian decomposes into a number of decoupled 2D color codes, one on the \textbf{b} boundary, and one for each bulk 3-cell of color \textbf{b}.  Additionally, every qubit is in the support of at least one $G \in \calG_{\textbf{b}}$. 

With the above choice of boundary conditions, the outer colex (the \textbf{b} boundary) encodes one logical qubit, while the bulk 2D color codes are non-degenerate (as they are each supported on closed 2-cells). The ground space of the model is the joint $+1$ eigenspace of all terms $G\in \calG_{\textbf{b}}$, and the ground space degeneracy is two-fold. This choice of Hamiltonian explicitly represents the gauge color code codespace on the outer colex. This situation is reminiscent of the RBH model, where quantum information is encoded on the boundary of the 3D bulk. We remark that the ground state of $H_{\calG_{\textbf{b}}}$ can be thought of as a gauge fixed version of the gauge color code $\calG$.

Logical operators can be chosen to be string-like operators supported entirely on the outer colex (the \textbf{b} boundary facet). Recall that edges and plaquettes on the outer colex has one of three possible colors, $\textbf{A}$, $\textbf{B}$, or $\textbf{C}$, as defined in Fig.~\ref{fig2dCC}, and the boundaries are given a single color according to what color edges can terminate on them, as depicted in Fig.~\ref{figBoundaryFacet}. The logical operators take the form of strings that connect all three boundaries of the triangular facet as in Fig.~\ref{figBoundaryFacet}. Logical Pauli operators are supported on at least $d$ qubits, where $d$ is the smallest side length of the boundary facet and referred to as the distance of the code. 

\begin{figure}[h]
	\centering
	\subfloat[]{
	\includegraphics[width=0.45\linewidth]{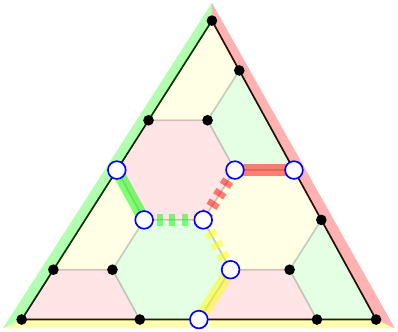}}
	\quad
	\subfloat[]{
	\includegraphics[width=0.45\linewidth]{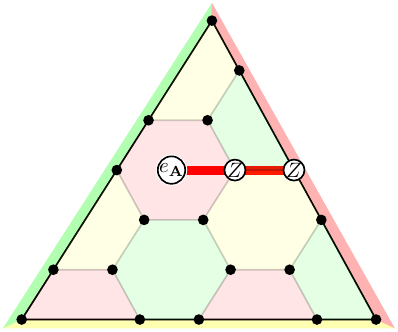}}
	\caption{(a) A logical string consists of three colored strings extending from their respective boundary and meeting at a point. The support of the logical $X$ or $Z$ is indicated by the larger white nodes. (b) $e_{\textbf{A}}$ excitations appear at the ends of a \textbf{A}-colored $Z$-string. Both $\textbf{e}_{\textbf{A}}$ and $\textbf{m}_{\textbf{A}}$ excitations can condense on the \textbf{A}-colored boundary (and analogously for other boundaries).}
	\label{figBoundaryFacet}
\end{figure}

On the outer colex, an $X$- or $Z$-string operator with color $\textbf{k} \in \{\textbf{A}, \textbf{B},\textbf{C}\}$ will flip the two $\textbf{k}$ coloured plaquettes on the boundary of the string. In particular, a $\textbf{k}$-colored $X$-string will create  $m_{\textbf{k}}$ excitations on its boundary (corresponding to the flipped $G_f^Z$ plaquettes). Similarly, a $\textbf{k}$-colored $Z$-string will create  $e_{\textbf{k}}$ excitations on its boundary (corresponding to the flipped $G_f^X$ plaquettes). These are depicted in Fig.~\ref{figBoundaryFacet}. On a $\textbf{k}$ colored boundary, both $e_{\textbf{k}}$ and $m_{\textbf{k}}$ particles can condense, meaning they can be locally created or destroyed at the boundary as in Fig.~\ref{figBoundaryFacet}. 

As such, the action of logical $\overline{X}$ ($\overline{Z}$) can then be interpreted as creating three $m$-type ($e$-type) quasiparticles of each color from the vacuum at a point, then moving each colored excitation to its like-colored boundary where it is destroyed. 

\subsubsection{Relation to the RBH model}

To motivate how the model $H_{\calG_{\textbf{b}}}$ was constructed, we draw a comparison to the RBH model of the previous section. In particular, the RBH also has the structure of a subsystem code, that on a certain lattice is dual to the gauge color code. For the RBH model, one can consider the gauge group $\calG_{\calC}$ is given by
\begin{equation}
\calG_{\calC} = \langle K_p , X_p ~|~ p \in E \cup F \rangle,
\end{equation}
where $K_p$ are the cluster state stabilizers of Eq.~(\ref{eqClusterStateStabilizers}) and $X_p$ are single qubit Pauli $X$ operators.  The corresponding stabilizer group $\calS_{\calC}$ is given by 
\begin{equation}
\calS_{\calC} = \langle S_p ~|~ p \in Q \cup V \rangle,
\end{equation}
where $S_p$ are the 1-form symmetry generators of the RBH model, given by Eq.~(\ref{eqRBH1formSymOps}).
(The choice of gauge generators $X_p$ stems from the application of the RBH model to fault-tolerant measurement-based quantum computing, where $X$-measurements are used to propagate information.)

The commuting model describing the RBH model was chosen by selecting a subset $\calG'$ of local, commuting elements of $\calG_{\calC}$ to define the Hamiltonian, and imposing symmetries given by the stabilizer $\calS_{\calC}$. This choice is non-unique, as there are many other subsets $\calG'$ of $\calG$ that could be used to construct a commuting model. Additionally, to avoid spontaneous symmetry breaking we choose $\calG'$ such that the stabilizer is a subgroup of the group generated by $\calG'$, that is, $\calS_{\calC} \leq \langle \calG' \rangle$. The same construction was also used to generate the commuting GCC model, and can be used more generally for subsystem codes with a stabilizer group that has the structure of a $\zz_2^k$ 1-form symmetry for some $k$. We note however there are many distinct ways generating such Hamiltonians, and not all of them will be self-correcting under the 1-form symmetry.

\subsection{1-form symmetry and color flux conservation}
\label{sec:colorflux}

The commuting model $H_{\calG_{\textbf{b}}}$ without any symmetry constraints is easily shown to be disordered at any non-zero temperature. (It is a collection of uncoupled 2D color codes.) In this section, we identify a 1-form symmetry of this model that, when enforced, leads to a diverging energy barrier and therefore self-correction on the boundary code.

The Hamiltonian $H_{\calG_{\textbf{b}}}$ has a $\zz_2^2$ 1-form symmetry given by the stabilizer group $\calS$ of Eq.~(\ref{eqGCCStabilizer}). Recall that $\calS$ is generated by the stabilizers $S_q^X$ and $S_q^Z$ on the 3-cells $q$ of the lattice, and consists of operators supported on closed codimension-1 (contractible) surfaces. The two copies of $\zz_2$ 1-form symmetry come from the independent $X$-type and $Z$-type operators. The symmetry $\calS$ give strong constraints (conservation laws) on the possible excitations in the model: this is the color flux conservation of Bombin~\cite{Bombin15}. To discuss the color flux conservation that arises from the $\zz_2^2$ 1-form symmetry, let us assume that the system $H_{\calG_{\textbf{b}}}$ is coupled to a thermal bath (as in Eq.~(\ref{eqCoupleBath})) such that the whole system respects the symmetry $\calS$, and discuss what type of excitations are possible in the model.

The model $H_{\calG_{\textbf{b}}}$ is a stabilizer Hamiltonian, and so excitations are labelled in the standard way.  Specifically, excited states can be labelled by the set of `flipped terms' $\calG_{\text{ex}} \subseteq \calG_{\textbf{b}}$. Not all sets $\calG_{\text{ex}}$ can be reached from the ground space in the presence of the symmetry $\calS$. Since the ground space of $H_{\calG_{\textbf{b}}}$ consists of the states in the $+1$ eigenspace of all terms in $\calG_{\textbf{b}}$, it follows that the ground space is also the $+1$ eigenspace of all operators in $\calS$, and since they are conserved, only the excited states that satisfy color flux conservation on each cell (as we will describe) can be reached. 

In particular, note that for any 3-cell $q$ of color $\textbf{k} \neq \textbf{b}$, there is precisely one way of obtaining the stabilizers $S_q^X$ and $S_q^Z$ from terms in $\calG_{\textbf{b}}$, while for a 3-cell of color $\textbf{b}$ there are three ways of obtaining the stabilizers. More precisely, for the $X$-type stabilizers we have 
\begin{align}\label{eqGCCRigid}
S_q^X = \prod_{\substack{f \subset q \\ \calK(f) = \textbf{uv}}} G_f^X,
\end{align}
where 
\begin{align} \label{eqColorConstraints}
\textbf{uv} &\in  \begin{cases}
\{\textbf{gy}\} \quad \text{if } \calK(q) = \textbf{r} \\ 
\{\textbf{ry}\} \quad \text{if } \calK(q) = \textbf{g} \\ 
\{\textbf{rg}\} \quad \text{if } \calK(q) = \textbf{y} \\ 
\{\textbf{gy}, \textbf{ry}, \textbf{rg}\} \quad \text{if } \calK(q) = \textbf{b}.
\end{cases}
\end{align}
The above expression holds similarly for the stabilizer $S_q^Z$. This can be seen as any plaquette that neighbours a 3-cell of color $\textbf{k}$ must be of color $\textbf{uv}$ with $\textbf{u}, \textbf{v} \neq \textbf{k}$, for which there is only one choice within $\calG_{\textbf{b}}$ for $\textbf{k} \neq \textbf{b}$, and three choices when $\textbf{k} = \textbf{b}$. Note that the multiple ways of forming $S_q^X$ and $S_q^Z$ on blue 3-cells as per Eq.~(\ref{eqColorConstraints}) leads to local product constraints on these blue 3-cells (further constraining the excitations) however this is not important for the present discussion.

To ensure that an excitation $\calG_{\text{ex}}$ is valid, we must remain in the $+1$-eigenspace of $\calS$. From Eq.~(\ref{eqGCCRigid}) we see that every 3-cell $q$ must have an even number of flipped plaquettes belonging to its boundary. Indeed, a single flipped plaquette $G_f^X$ of color $\textbf{uv}$ would violate the two stabilizer operators $S_q^X$ and $S_{q'}^X$ on the neighbouring \textbf{u} and \textbf{v} colored 3-cells $q$ and $q'$. This constraint implies that symmetric excitation configurations consist of collections of closed loop-like sets of flipped plaquettes.

This can be more easily visualised on the dual lattice, where where 3-cells are replaced by vertices, faces by edges, edges by faces, and vertices by 3-cells. On the dual lattice, vertices carry a single color, edges are labelled by pairs of colors, and excitations are therefore given by sets of edges. We call the edges on the dual lattice that define an excitation a flux string. The color flux conservation on these closed flux strings is as follows.

To satisfy the constraints of Eqs.~(\ref{eqGCCRigid}), and (\ref{eqColorConstraints}), for each vertex $v$ of color $\textbf{k}\in \{\textbf{b}, \textbf{r}, \textbf{g}, \textbf{y}\}$ the number of edges in a flux string incident to $v$ must be even. Since the vertices of color $\textbf{k}\in \{\textbf{r}, \textbf{g}, \textbf{y}\}$ only support terms in $\calG_{\textbf{b}}$ on neighbouring edges of a single color type (e.g. a $\textbf{r}$ vertex only supports terms on its neighbouring $\textbf{gy}$-colored  edges), then the color of the excitation is conserved at each one of these vertices. Similarly on a \textbf{b} vertex, all pairs of colors are separately conserved. This means if a $\textbf{uv}$ colored edge excitation enters a vertex, there must be a $\textbf{uv}$ colored edge excitation leaving the vertex. In summary, bulk excitations must form closed loops, where the color is conserved at every vertex, and this is illustrated in Fig.~\ref{figBoundaryFlux}. 

\begin{figure}[h]
	\subfloat[]{	\includegraphics[width=0.48\linewidth]{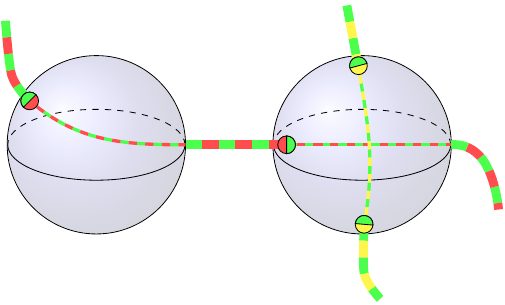}}
	\subfloat[]{	\includegraphics[width=0.46\linewidth]{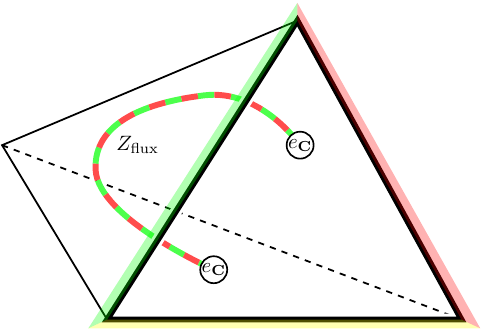}}
	\caption{(a) An example of a flux loop, where the corresponding colored strings on the dual lattice are depicted, the shaded blue spheres represent \textbf{b} colored 3-cells. (The constraint from Eq.~(\ref{eqGCCRigid}) requires an even number of flipped $\textbf{rg}$ plaquettes on a \textbf{y} colored 3-cell). (b) A \textbf{rg} colored flux loop of flipped $G_f^Z$ terms (coming from a string of $X$ operators) terminating with a pair of $e_{\textbf{C}}$ anyons on the outer colex.}
	\label{figBoundaryFlux}
\end{figure}

Flux loops may terminate on the outer colex. Recall that for a boundary facet of color $\textbf{k}$, there are no faces of color $\textbf{uk}$ for any $\textbf{u}$. In particular, for $\textbf{k}\neq \textbf{b}$, there is a unique color \textbf{u} such that there are terms $G_f^X$ and $G_f^Z$ of color $\textbf{uk}$ in $\calG_{\textbf{b}}$. Flux loops of color \textbf{uk} can terminate on this \textbf{k}-colored boundary facet. For the \textbf{b} colored boundary facet (the outer colex), all three color pairs of flux loops can terminate on the outer colex. Flux loops terminating on the \textbf{b}-facet can be viewed as ending in a $e_{\textbf{k}}$ or $m_{\textbf{k}}$ anyonic excitation on the boundary for $\textbf{k}\in \{\textbf{A}, \textbf{B}, \textbf{C}\}$ as in Fig.~\ref{figBoundaryFlux} (recall the colors are relabelled on the outer colex according to $\textbf{gy}\leftrightarrow \textbf{A}$, $\textbf{ry}\leftrightarrow \textbf{B}$, $\textbf{rg}\leftrightarrow \textbf{C}$). Moreover, in the same way, the only way anyons can exist on the outer colex is at the ends of a flux loop on the bulk, as stand-alone boundary anyonic excitations violate the symmetry. That is, the 1-form symmetry couples the bulk and boundary excitations, as was the case in the RBH model. 

\subsection{Energy barrier}
We are now equipped to calculate the symmetric energy barrier for $H_{\calG_{\textbf{b}}}$ in the presence of the symmetry $\calS$. Recall that a logical error occurs when a triple of excitations $\alpha_{\textbf{A}}, \alpha_{\textbf{B}}, \alpha_{\textbf{C}}$, where $\alpha = e$ or $m$, are created at a point, and each anyon travels to its like-colored boundary. Put another way, a logical error occurs if an anyonic excitations $\alpha_{\textbf{k}}$ is created at each boundary, and the three anyons move and fuse back to the vacuum in the bulk of the outer colex. In any case, the only way to achieve a logical Pauli error is to create a number of anyonic excitations, which must move a combined distance of at least $d$, the side length of the outer colex.  In the symmetric sector, anyonic excitations can only exist on the boundary if they are accompanied by a bulk flux loop, and so the above creation, movement and fusion process can only occur when accompanied by bulk flux loops. 

Since boundary excitations $\alpha_{\textbf{k}}$ with $\alpha \in \{e,m\}$ and $\textbf{k} \in \{\textbf{A},\textbf{B},\textbf{C}\}$ appear on the end of flux loops (each of which can only terminate on its like-colored boundary) to calculate the energy barrier we need only track the smallest length flux loops required to move the boundary anyons to create a logical error. From any point $v$ on the outer colex, let $l_{\textbf{A}}(v)$, $l_{\textbf{B}}(v)$, $l_{\textbf{C}}(v)$ be the shortest flux loops from a face $f$ on the outer colex containing $v$, to a face on the \textbf{A}, \textbf{B}, and \textbf{C} facets, respectively (these flux loops are dual to a closed path on the dual lattice). Let $|l_{\textbf{A}}(v)|$, $|l_{\textbf{B}}(v)|$, $|l_{\textbf{C}}(v)|$ be the lengths of these flux loops (i.e., the number of edges on the dual path) and define 
\begin{equation}
d_{\perp} := \min_{v \in \calC_{\text{out}}} \left(|l_{\textbf{A}}(v)|+|l_{\textbf{B}}(v)|+|l_{\textbf{C}}(v)| \right)
\end{equation}
to be the shortest combined distance from any point on the outer colex to all three other facets. Note that $d_{\perp}$ grows as all side lengths of the tetrahedral 3-colex are increased. 

Then during any anyon creation, movement and annihilation process resulting in a logical error, the bulk flux loops which accompany the boundary anyons must have a combined length of at least $d_{\perp}$. This will incur an energy penalty of $\Delta_E = 2d_{\perp}$ since each flux loop consists of a path of flipped terms $G_f^{\alpha} \in \calG_{\textbf{b}}$. As such the energy is proportional to $d_{\perp}$ which scales linearly with the minimum side length of the tetrahedral 3-colex. In particular, the model $H'$ with symmetry $\calS$ has a macroscopic energy barrier, and the boundary information is protected in the presence of a 3D bulk and symmetry constraint. 

We make two remarks. First, the energy barrier and conservation laws in this section were presented in terms of excitations rather than error operators (as opposed to the operator approach for the RBH model). For the purposes of calculating the energy barrier these two pictures are equivalent, since the sequence of local (symmetric) excitations corresponds to a sequence of local (symmetric) operators, and vice-versa. Second, we remark that a tri-string logical operator of the above form can be pushed onto a single boundary of the outer colex, giving rise to a string-like representative. As such, a logical error can arise from a pair of anyons of the same color being created and moved along the boundary of the outer colex. Such a process also has an energy lower bounded by $\Delta_E = 2d_{\perp}$ since a \textbf{k}-colored string on the boundary of the outer colex is never adjacent to a boundary where its \textbf{k}-flux loops can terminate. 

The argument from the symmetric energy barrier to self-correction follows identically to that of the RBH model. That is, provided the temperature is sufficiently low, information can be stored for a time that grows exponentially with the system size.  (Note that the critical temperature will depend on the specific choice of 3-colex.) As a result, our stabilizer model based on the 3D gauge color code protected by $\zz_2^2$ 1-form symmetry provides another example of a self-correcting quantum memory. 

In the RBH model, the fact that the boundary was self-correcting in the presence of 1-form symmetries could be interpreted as directly resulting from the thermally stable bulk SPT order. In this stabilizer model of the gauge color code, the boundary stability and bulk SPT (at nonzero temperature) are also related~\cite{KYgauging}.

\section{Emergent 1-form symmetries}\label{secEmergentHigherForm}

As we have shown, SET models protected by a 1-form symmetry can be self-correcting.  However, enforcing such 1-form symmetries is a very strong constraint, and in addition these symmetries are unusual in physics compared with the more prevalent onsite (0-form) symmetries.  Here we explore the idea that 1-form symmetries may actually appear naturally in 3D topological models, and not require any sort of external enforcement.  We refer to such a symmetry as \emph{emergent}.  It sounds too good to be true, but note that emergent symmetries in 2D topological models are ubiquitous (while perhaps poorly understood).  In this section, we review emergent (0-form) symmetries in 2D topological models, as first highlighted by Kitaev~\cite{Kitaev2003}; here we will focus on the 2D color code.  We then show that 3D models may possess emergent 1-form symmetries associated with such emergent 0-form symmetries on closed 2D submanifolds of the 3D model.  We revisit the 3D gauge color code in light of these observations.  Finally, we demonstrate the stability of emergent 1-form symmetries in topologically ordered models, and discuss the implications for self-correction.

We note that this section contains arguments that are less formalized compared with the previous sections, and in many regards more speculative.  As such, this section may be viewed as an extended discussion on the potential role of emergent 1-form symmetries in self-correction, rather than the presentation of concrete results.

\subsection{Emergent 0-form symmetries in 2D}

Kitaev observed the emergence of symmetry in 2D topological models such as the toric code and referred to this as a `miracle'~\cite{Kitaev2003}.  As we now know, emergent symmetries are a generic property of 2D topologically ordered models.  We begin this section by reviewing an instructive first example: the 2D color code.  We demonstrate the emergence of a $\zz_2^4$ 0-form symmetry in this 2D code, and how this gives rise to the well known anyonic color conservation (see for example Ref.~\cite{YoshidaCCSPT}). Although we will focus on how global product constraints are helpful to expose global conservation laws, we emphasise that the more important physical property is the local conservation law (associated with a 0-form symmetry) that arises in relation to the modular Gauss law.

We first consider a 2D color code defined on the surface of a sphere (one can equivalently consider any closed surface for the discussion that follows). Recall, a 2D color code is defined on a lattice known as a 2-colex, which is a 3-colorable, 3-valent cellulation $\Lambda$ of a 2-dimensional surface, which in this case is a sphere. We place a qubit on each vertex of $\Lambda$, and define the familiar $X$-type and $Z$-type face operators $G^X_f= \prod_{v \in f} X_v$ and $G^Z_f = \prod_{v \in f} Z_v$ for each face $f \subset \Lambda$. In particular, since the lattice is 3-colorable and 3-valent, these face operators $G^X_f$ and $G^Z_f$ all commute. These operators generate the 2D color code stabilizer group $\calS_{cc} = \langle G^X_f, G^Z_f ~|~ f \text{ a face of } \Lambda \rangle$, and define a corresponding Hamiltonian $H_{\text{2D-cc}}$ by
\begin{equation}
H_{\text{2D-cc}} = -\sum_{\text{faces } f} \left(G^X_f + G^Z_f\right). 
\end{equation}
This 2D color code differs only from that defined on the outer colex (considered in Sec.~\ref{sec:GCC}, Fig.~\ref{fig2dCC}) by a choice of boundary conditions. 

Recall, a generating set for the anyonic excitations of this model can be labelled by $m_{\textbf{k}}$, and $e_{\textbf{k}}$, where $\textbf{k}\in \{\textbf{A},\textbf{B}\}$ labels a color, $e$-type anyons corresponds to flipped $X$-type plaquettes, and $m$-type anyons correspond to flipped $Z$-plaquettes. One can obtain $\textbf{C}$ colored anyons as the fusion of an $\textbf{A}$ and $\textbf{B}$ colored anyon of the same type. This set of anyons forms a group under fusion $\mathcal{A}_{\text{2D-cc}}\cong \zz_2^4$, with the above choice of generators. 

However, not all anyonic excitation configurations are possible as there are global constraints that need to be satisfied in this model. In particular, since our model is defined on a closed surface, we have the following identities for each $\alpha \in \{X, Z\}$
\begin{equation}
\prod_{\substack{f \subset \Lambda \\ \calK(f) = \textbf{A}}} G^{\alpha}_f =  \prod_{\substack{f \subset \Lambda \\ \calK(f) = \textbf{B}}} G^{\alpha}_f = \prod_{\substack{f \subset \Lambda \\ \calK(f) = \textbf{C}}} G^{\alpha}_f = \prod_{v \in \Lambda} \alpha_v.
\end{equation}
Letting $N_{\textbf{k}}^e$ and $N_{\textbf{k}}^m$ be the number of $e_{\textbf{k}}$ and $m_{\textbf{k}}$ anyonic excitations respectively, then the above equation implies the following relation 
\begin{equation}
N_{\textbf{A}}^e = N_{\textbf{B}}^e = N_{\textbf{C}}^e \quad \mod 2,
\end{equation}
and similarly for $N_{\textbf{k}}^m$. In particular this means that the number of $e_{\textbf{A}}$, $e_{\textbf{B}}$ and $e_{\textbf{C}}$ anyons is conserved mod 2 (and similarly for $m_{\textbf{A}}$, $m_{\textbf{B}}$, and $m_{\textbf{C}}$). 

If we regard anyons of color $\textbf{C}$ as being comprised of an $\textbf{A}$ color and a $\textbf{B}$ color anyon, we can obtain further constraints. Namely, for any two colors, $\textbf{u}, \textbf{v} \in \{\textbf{A}, \textbf{B}, \textbf{C}\}$, we have a product constraint
\begin{equation}\label{eq2dccconstraints}
\prod_{\substack{f \subset \Lambda \\ \calK(f) = \textbf{u}}} G^{\alpha}_f  \prod_{\substack{f \subset \Lambda \\ \calK(f) = \textbf{v}}} G^{\alpha}_f = I. 
\end{equation}
This implies a constraint on the parity of anyons
\begin{equation}\label{eqParityConstraint1}
N_{\textbf{u}}^e + N_{\textbf{v}}^e = 0 \quad \mod 2,
\end{equation}
which along with the fact that we are regarding $N_{\textbf{C}}^e = N_{\textbf{A}}^e + N_{\textbf{B}}^e$, means that $N_{\textbf{A}}^e = N_{\textbf{B}}^e = 0$ mod $2$ (and similarly for $m$-type anyons). The product constraint of Eq.~(\ref{eq2dccconstraints}) exists on the whole 2-dimensional lattice (that is, a codimension-0 surface), and gives rise to 4 independent anyonic constraints: that the number of $e_{\textbf{A}}$ anyons must be created or destroyed in pairs, and similarly for $e_{\textbf{B}}$, $m_{\textbf{A}}$, and $m_{\textbf{B}}$. Thus, we refer to it as an emergent $\zz_2^4$ 0-form symmetry. 

The identities of Eq.~(\ref{eq2dccconstraints}) make this emergent symmetry look like a global constraint, however it is in fact a 0-form symmetry. That is, we can identify an action of this symmetry on any submanifold, not just the whole lattice.  This structure to the symmetry is best seen by reformulating it as a type of Gauss' law for anyonic excitations, detecting the total topological charge in a region through an observable localized to the boundary of the region.  Specifically, consider submanifolds that are not closed. Let $\calM$ be a codimension-0 submanifold of the 2-colex (that is, a subset of faces), with boundary. Then for $\alpha \in \{X,Z\}$ it holds that 
\begin{equation}\label{eqEmergBdry}
\prod_{\substack{f \subset \calM \\ \calK(f) = \textbf{u}}} G_f^{\alpha} \prod_{\substack{f \subset \calM \\ \calK(f) = \textbf{v}}} G_f^{\alpha}  = h_{\partial M},
\end{equation}
where $h_{\partial M} = \prod_{v \in \partial M} \alpha_v$ is supported on the boundary of $\calM$. (Note that we have assumed the 2-colex is closed, however the above equation also holds when $\calM$ is disjoint from the boundary of the 2-colex). Now instead of the global constraint of Eq.~(\ref{eqParityConstraint1}), we get a constraint for every submanifold $\calM$. Namely, the charge within the region $\calM$ is equal (mod 2) to the eigenvalue of on the operator $h_{\partial M}$
\begin{equation}\label{eqParityConstraint}
N_{\textbf{u}}^e + N_{\textbf{v}}^e = \langle h_{\partial \calM}\rangle \quad \mod 2,
\end{equation}
for any excited state (provided, as is true with this model, that anyons are well-localized). Choosing $\textbf{v}= \textbf{C}$ lets us determine $N_{\textbf{A}}^e$ and $N_{\textbf{B}}^e$ independently, and similarly for $N_{\textbf{k}}^m$. In other words, one can detect the topological charge within the region $\calM$ using operators on the boundary of the region, giving rise to the well-known topological charge conservation law for anyons in the color code. Thus we have seen that the conservation law applies locally as well (provided that the length scale is such that anyons remain well-localized), and is not just a global constraint on the entire manifold.

Importantly, in the above considerations, emergent symmetries were revealed not by elements of a symmetry group, but rather by product constraints amongst the Hamiltonian terms.  This is a result of the stabilizer Hamiltonian models that we have considered as examples.  We can now turn to higher-dimensional examples, again of stabilizer Hamiltonians, where this holds true for higher-form symmetries, i.e., where emergent $q$-form symmetries are associated with product constraints on closed codimension-$q$ submanifolds of the lattice.  Ultimately, however, we expect the symmetry considerations rather than the product constraints to be more fundamental, and we return to this issue in Sec.~\ref{secEmergentGeneral}.

\subsection{Emergent 1-form symmetries in 3D}

Here we demonstrate how emergent 1-form symmetries can arise in a 3D model, in a sense by bootstrapping from the 2D case.  

\subsubsection{Single-sector 3D gauge color code}

For illustrative purposes, we first consider a single charge sector of the 3D gauge color code $H_{\calG}$.  This single-sector model is not topologically ordered, and so does not possess emergent symmetries; nonetheless it will be useful to illustrate the connection between 1-form symmetries in a 3D model and 0-form symmetries in associated 2D models existing across all codimension-1 submanifolds of the 3D model. The 1-form symmetries fix excitations to be 1-dimensional objects that conserve color flux.

Recall, the gauge color code is defined on a 3-colex $\calC_3$ (a 4-colorable, 4-valent cellalation) with a qubit on each vertex. For concreteness, we restrict our discussion to the $X$-sector of the gauge color code (the $Z$-sector follows similarly). That is, we consider the Hamiltonian 
\begin{equation}
H_X = -\sum_f G^X_f,
\end{equation}
consisting of the sum of all face terms over a 3-colex. The ground space of $H_X$ is the mutual $+1$ eigenspace of all terms $G^X_f$, and excitations are eigenstates of the Hamiltonian in the $-1$ eigenspace of some terms (we say these terms are $G^X_f= -1$). We can label excited states uniquely by specifying which terms are $G^X_f = -1$, but importantly not all configurations are allowed, as there are algebraic constraints amongst terms. 

Consider any closed codimension-1 submanifold $\calM$ of the 3-colex that is also a 2-colex, with the color-pairs $\textbf{A}_{\calM}$, $\textbf{B}_{\calM}$, and $\textbf{C}_{\calM}$ selected from the 6 possible color-pairs of faces in $\calC_3$. On this sub-2-colex, we have the familiar constraints. Namely, for any 2 color-pairs $\textbf{u}, \textbf{v} \in \{\textbf{A}_{\calM}, \textbf{B}_{\calM},\textbf{C}_{\calM}\}$, we have 
\begin{equation}\label{eqSub3DConstraints}
\prod_{\substack{f \subset \calM \\ \calK(f) = \textbf{u}}} G^{X}_f  \prod_{\substack{f \subset \calM \\ \calK(f) = \textbf{v}}} G^{X}_f = I, 
\end{equation}
mirroring the constraints of Eq.~(\ref{eq2dccconstraints}). In particular, this relation holds in the smallest instance when $\calM$ is the boundary of a 3-cell.

The product relations of Eq.~(\ref{eqSub3DConstraints}) lead to constraints on excitations. Namely, for each codimension-1 submanifold (that is a 2-colex), the number of faces $f \subset q$ with $G^X_f = -1$ carrying a color \textbf{k} must sum to $(0\mod~2)$, and this holds for each (single) color \textbf{k}. This in turn requires excitations (which carry pairs of colors) to form closed loop-like objects that conserve color. The dual lattice again provides the visualization, where excitations correspond to sets of edges and edges carry a pair of colors. At each vertex $v$ of the dual lattice, let $N^{v}_{\textbf{k}}$ be the number of loop excitations carrying the (single) color $\textbf{k}$ that contain $v$. Then the constraints of Eq.~(\ref{eqSub3DConstraints}) mean that 
\begin{equation}
N^{v}_{\textbf{k}} = 0\,, \quad \forall\ \textbf{k}, v\,,
\end{equation}
which is precisely the color flux conservation discussed in Sec.~\ref{sec:colorflux}. In particular, this implies that excitations must form closed loop-like objects.

Not all excitations are independent. A string excitation of a color \textbf{xz} may branch into a pair of strings with colors \textbf{xk} and \textbf{kz} for $\textbf{k} \neq \textbf{x}, \textbf{z}$. This then means there are three independent color pairs, such that all loop excitations can be regarded as the fusion of these loops. The flux conservation can be regarded as three independent constraints on loop-like excitations.  

Similar to the 0-form case, 1-form symmetries also imply a constraint (conservation law) for the loop-like excitations. We can infer a generalization of the law for detecting topological charge, which in this case applies to color flux, by considering codimension-1 submanifolds that are not closed. In particular, let $\calM'$ be a codimension-$1$ submanifold with a boundary. Then it holds that 
\begin{equation}\label{eqEmergBdry2}
\prod_{\substack{f \subset \calM' \\ \calK(f) = \textbf{u}}} G^{X}_f  \prod_{\substack{f \subset \calM' \\ \calK(f) = \textbf{v}}} G^{X}_f = h_{\partial M'},
\end{equation}
where $h_{\partial M'}$ is an operator supported on the (1-dimensional) boundary of $\calM$ (again we are assuming that $\calM$ is supported away from any boundary of the 3-colex). This means that the number (mod 2) of $\textbf{u}$ colored and $\textbf{v}$ colored excitations that thread the region $\calM'$ is detected by an operator $h_{\partial M'}$ on the boundary of that region. Again, we can use the constraints to determine this number on each independent color pair. 

In summary, we have seen that this model supports three independent types of excitations, each constrained to form closed loops (with the possibility of branching and fusion). This 3D example, then, gives the appearance of an emergent $\zz_2^3$ 1-form symmetry arising from a 0-form symmetry on codimension-1 submanifolds (where the rank of the 1-form symmetry group is due to the number of independent excitations that are conserved). We note, however, that by restricting to the $X$-sector, we do not have a topologically ordered model; the codimension-1 submanifolds do not have an \emph{emergent} $0$-form symmetry without both sectors, and so an emergent 1-form symmetry does not appear in the 3D model.  Both electric and magnetic sectors are required simultaneously in order to have the emergent symmetry associated with either~\cite{Kitaev2003}.  Regardless, our purpose here was simply illustrative---we are not fundamentally interested in this single-sector model, but rather a topologically-ordered 3D model with both sectors such as the gauge color code.  We turn to that model now.

\subsubsection{The gauge color code and color flux conservation}

Does the topologically-ordered 3D gauge color code have an emergent 1-form symmetry associated with color flux conservation?  Each sector of the gauge color code on its own, $H_X$ and $H_Z$, has loop-like, color-flux-conserving excitations. Proliferation of such excitations is therefore suppressed, as they are energetically confined. 
For the full gauge color code Hamiltonian, 
\begin{equation}\label{eqGCCHam2}
H_{\calG} = -\sum_f G^X_f - \sum_f G^Z_f,
\end{equation}
it is tempting to conclude that a $\zz_2^6$ 1-form symmetry will emerge, and lead to confined errors and suppression of logical faults. However, the terms of $H_{\calG}$ are not mutually commuting (and indeed frustrated), and therefore we cannot immediately label excited states by specifying terms $G_f^X, G_f^Z = \pm1$.  In other words, this frustrated model's excitations are not guaranteed to be well-defined extended objects with well-defined color flux as appear in each sector separately. If they were, then this would be strong evidence that the model was self-correcting. 

Unfortunately, there are few tools available to understand the spectrum of a frustrated Hamiltonian such as $H_{\calG}$, and without such information it is a very difficult task to analyse the thermal stability and memory time of the code. In this sense, one can view the exactly solvable model $H_{\calG_{\textbf{b}}}$ as the result of removing terms from the Hamiltonian until it is commuting, in the process losing its emergent 1-form symmetries and supplementing them with enforced 1-form symmetries. Understanding the excitations in $H_{\calG}$ remains an important problem, to determine if it is self-correcting. 

\subsubsection{Higher-dimensional generalizations and emergent $q$-form symmetries}

We briefly generalize the discussion to emergent $q$-form symmetries in $d$-dimensional systems that arise from (product) constraints residing on codimension-$q$ submanifolds. In particular, a commuting Hamiltonian $H = \sum_{X \subset \Lambda} h_X$ in $d$-dimensions has an emergent $\zz_2$ $q$-form symmetry if for all closed codimension-$q$ submanifolds $\calM$, there exists an constraint
\begin{equation}\label{eqEmerSym}
\prod_{X \subset \calM}h_X = I.
\end{equation}
If there are multiple independent such constraints on the submanifolds, then there are multiple copies of emergent $\zz_2$ $q$-form symmetries. Importantly, we note that these constraints all look like emergent $\zz_2$ 0-form symmetries on codimension-$q$ submanifolds. The generalized conservation law states that the number (mod 2) of excitations (which must be $q$-dimensional objects) threading the codimension-$q$ region $\calM'$ can be measured by the operator $H_{\partial M'}$ on the codimension-$(q{+}1)$ boundary of the region. In particular, if $H$ has a $q$-form emergent symmetry, let $\calM'$ be a codimension-$q$ submanifold with a boundary, then it holds that 
\begin{equation}\label{eqEmergBdry3}
\prod_{i \in \calM'} h_i = h_{\partial M'}\ ,
\end{equation}
where $h_{\partial M'}$ is an operator supported on a small neighbourhood of the boundary of $\calM$. (This is because if we chose a complementary codimension-$q$ submanifold $\calM''$ such that $\partial M' = \partial M''$, then if $\calM$ is the result of gluing $\calM$ and $\calM'$ along their boundary, we would have the usual constraint of Eq.~(\ref{eqEmerSym}). Thus $\prod_{i \in \calM'} h_i$ can only differ from the identity by an operator supported on a small neighbourhood of $\partial \calM'$.)

Examples of models with emergent higher-form symmetries include toric codes in various dimensions. For dimensions $d\geq2$, there are $d{-}1$ distinct ways of defining a toric code. Namely, for each $k \in \{1,\ldots, d{-}1\}$, we define the $(k,d{-}k)$ toric code that has $k$-dimensional logical $X$ operators, and $(d{-}k)$-dimensional logical $Z$ operators. One can confirm that these models have emergent $\zz_2$ $(k{-}1)$-form and $\zz_2$ $(d{-}k{-}1)$-form symmetries. The smallest dimension that allows for a toric code with emergent $\zz_2^2$ 1-form symmetries is $d=4$, with the $(2,2)$ toric code, which is a self-correcting quantum memory.

\subsection{Stability of emergent symmetries}\label{secEmergentGeneral}

Our discussion of emergent symmetries has focussed on Hamiltonians with commuting terms.  This property allowed for the simple identification of product constraints.  One can ask if the resulting emergent symmetries are a property of a finely tuned system alone, or if they hold more generally.  In this section, we show that these symmetries are robust features of phases of matter, that they cannot be broken by local perturbations, \emph{irrespective of any symmetry considerations}, provided they are sufficiently small. The argument uses the idea of quasi adiabatic continuation, following Ref.~\cite{QuasiCont}. 

Consider a family of local Hamiltonians $H_s$, labelled by a continuous parameter $s \in [0,1]$, such that $H_0 = H$ is the original Hamiltonian, and $H_s$ remains gapped for all $s \in [0,1]$. This family of Hamilonians can be used describe the situation where a perturbation is added to $H$. We label ground states of $H$ by $\ket{\psi_i}$, and groundstates of $H_s$ by $\ket{\psi_i^s}$. Note that the ground states can be unitarily related by an adiabatic continuation. Then, following Ref.~\cite{QuasiCont}, there exists a unitary $U(s)$ corresponding to a quasi-adiabatic change of the Hamiltonian with the following properties. For any operator $O$, one can find a dressed operator $O_s = U(s) O U(s)^{\dagger}$, such that $O_s$ has approximately the same expectation value in $\ket{\psi_{i}^s}$ as $O$ does in $\ket{\psi_i}$ (and similarly for low-energy states). Moreover, if $O$ is local, then $O_s$ is local too.  (The support of the dressed operators increases by a size determined by the choice of quasi-adiabatic continuation unitary $U(s)$. The approximate ground state expectation values improve exponentially in the range of increased support of dressed operators.)

Importantly, one can use quasiadiabatic continuation to find dressed versions $h_X(s) = U(s) h_X U(s)^{\dagger}$ of the Hamiltonian terms that have approximately the same low-energy expectation values as those in the unperturbed Hamiltonian. These Hamiltonian terms will also have the same constraints. In particular, if $H$ had an emergent $q$-form symmetry arising from some product constraints amongst Hamiltonian terms, then the dressed Hamiltonian also has the same local conservation laws. To see this, note that local conservation laws can always be inferred at low energies, as they involve only Hamiltonian terms in a small neighbourhood. We needn't be concerned with the high energy sector as by checking all local conservation laws, one can establish that the model has an emergent $q$-form symmetry. Note that the dressed terms will in general be supported in a larger region, meaning one may need to rescale the lattice to resolve excitations and faithfully capture the generalized conservation law in the perturbed Hamiltonian. For example, consider the color code in the presence of perturbations, then one can renormalize the lattice such that individual excitations are well defined. Then in the renormalized lattice, these excitations still conserve anyon parity, and they still obey a conservation law for topological charge.

We remark that we required the gap to remain open in the presence of the perturbations. This can be guaranteed for any local perturbation (provided it is sufficiently weak), if $H$ satisfies the conditions of TQO-1 and TQO-2 of Ref.~\cite{TopoStability}. In particular, the example models we have considered in Sec.~\ref{sec:RBH} and Sec.~\ref{sec:GCC} satisfy the conditions. 

\subsection{Duality between emergent and enforceable symmetries}
\label{sec:Duality}

For emergent symmetries, we are faced with the puzzle that we have a conservation law without any symmetry operator. What is the origin of this symmetry? As pointed out by Kitaev in the case of the 2D toric code~\cite{Kitaev2003}, we can always recover symmetry operators by introducing redundant ``unphysical'' degrees of freedom, viewed as gauge degrees of freedom. Here we briefly consider how Kitaev's approach can be applied to higher-form symmetries. In particular, for systems with emergent symmetries, we will construct symmetry operators on an enlarged Hilbert space. This construction provides a duality between systems where the $q$-form symmetry is emergent and systems where it is enforced. 

We will begin with the color code in 2D, and then show how to lift the construction to the 1-form case in 3D.  We start by introducing new ancillary degrees of freedom---one ancilla for each term in the Hamiltonian. Label these ancilla by $a_X(f)$ and $a_Z(f)$ corresponding to the terms $G^X_f$ and $G^Z_f$ and fixed them in the $+1$ eigenspace of Pauli operators $X$ and $Z$, respectively. We can now regard the new Hilbert space as $\calH \otimes \mathcal{A}$, and states in $\calH$ are embedded according to the isometry $\ket{\psi} \mapsto \ket{\psi}\otimes \ket{a}$, where $\ket{a} = (\otimes_{a_X(f)} \ket{+}) (\otimes _{a_Z(f)} \ket{0})$. We refer to the (original) degrees of freedom in $\calH$ as matter, and those in $\mathcal{A}$ as gauge. Importantly, not all states $\ket{\phi}\in\calH\otimes\mathcal{A}$ are physical, only the subspace of states satisfying $X_{a_X(f)}\ket{\phi} = \ket{\phi}$ and $Z_{a_Z(f)}\ket{\phi} = \ket{\phi}$ are physical. At this point, it is clear from the embedding that the physical state space is the same as the original state space. 

We now couple the matter and gauge degrees of freedom with an entangling unitary. Consider the mapping of gauge terms and matter Hamiltonian terms
\begin{align}
X_{a_X(f)} &\mapsto S_f^X, & G^X_f&\mapsto G^X_f,\\
Z_{a_Z(f)} &\mapsto S_f^Z, & G^Z_f&\mapsto G^Z_f,
\end{align}
where $S^X_{f} = X_{a_X(f)} G^X_f$ and $S^Z_{f} = Z_{a_X(f)} G^Z_f$. Such a mapping can be achieved with a unitary $U$ as we show below. In this new Hilbert space, which we label $U(\calH\otimes\mathcal{A})U^{\dagger}$, the physical state space is the subspace satisfying 
\begin{equation}\label{eqGaugeSym}
S^X_{f}\ket{\phi} = S^Z_{f} \ket{\phi} = \ket{\phi}.
\end{equation}
The symmetry operators $S^X_{f}$ and $S^Z_{f}$ are known as gauge transformations, and states and operators that are related by them are thought of as equivalent. 

The entangling unitary $U$ that will result in the above mapping can be constructed out of 2-qubit CNOT gates, $A_{i,j}$, which act by conjugation on Pauli operators as follows 
\begin{align}
X_i &\mapsto X_i X_j, & Z_i &\mapsto Z_i\\
X_j &\mapsto X_j &  Z_j &\mapsto Z_iZ_j.
\end{align}
Then for each face $f$, we define the following unitaries
\begin{equation}\label{eqGaugeUnitary}
U_f^X = \prod_{v \in f} A_{a_X(f), v}, \quad U_f^Z = \prod_{v \in f} A_{v, a_Z(f)}.
\end{equation}
Note that $U_f^X$ has the following action: 
\begin{equation}
U_{f'}^X X_{a_X(f)} U_{f'}^{X\dagger}  = \begin{cases}
S_f^X \quad  &\text{if } f = f' \\
X_{a_X(f)} \quad &\text{otherwise}.
\end{cases}
\end{equation}
Moreover, $U_{f'}^X$ commutes with all Hamiltonian terms $G^X_f$ and $G^Z_f$ $\forall f$ (this statement only needs to be verified for terms $G^Z_f$ where $f'$ and $f$ are neighbours, where it holds because neighbouring terms intersect an even number of times -- as is always the case for commuting CSS stabilizer Hamiltonians). A similar calculation gives the action of $U_f^Z$
\begin{equation}
U_{f'}^Z Z_{a_Z(f)} U_{f'}^{Z\dagger}  = \begin{cases}
S_f^Z \quad  &\text{if } f = f' \\
Z_{a_Z(f)} \quad &\text{otherwise}
\end{cases}
\end{equation}
where again $U_{f'}^Z$ commutes with all Hamiltonian terms $G^X_f$ and $G^Z_f\ \forall f$. Then the desired unitary $U$ is given by $U = \prod_{f} U_{f}^X U_{f}^Z$.  

Since the Hamiltonian is unchanged by $U$, one can ask what the excitations in the physical space of $U(\calH\otimes\mathcal{A})U^{\dagger}$ look like. Namely, for each flipped term $G^X_f$ ($G^Z_f$) we must also flip the ancilla $a_X(f)$ ($a_Z(f)$). Thus one can equally label excitations by the terms $G^X_f$ and $G^Z_f$, or the terms $X_{a_X(f)}$ and $Z_{a_Z(f)}$, as the two sets are gauge equivalent.  The emergent 0-form symmetry manifests itself as product constraints amongst Hamiltonian terms (following Eq.~(\ref{eq2dccconstraints})).  Specifically, it is equivalent to the following constraints, for any color $\textbf{u} \neq \textbf{v}$
\begin{equation}\label{eqGauge0Sym}
\prod_{\substack{f |\calK(f) = \textbf{u}}} S^X_{f} \prod_{\substack{f | \calK(f) = \textbf{v}}} S^X_{f} = \prod_{a_X(f)} X_{a_X(f)},
\end{equation}
and similarly for the $Z$-terms. Here, we see that the operator $\prod_{a_X(f)} X_{a_X(f)}$ (which is gauge equivalent to a product of color code terms $G_f^X$) counts the number of excitations mod 2.  As it is a product of symmetry operators, any physical state must lie in its $+1$ eigenspace. That is, we have found a symmetry operator that determines the parity conservation of anyons, by introducing gauge degrees of freedom. 

In the same way, we can perform an analogous procedure for each sector in the 3D gauge color code. Again, we associate ancilla to each term in the Hamiltonian, and then apply the unitary $U$ that entangles gauge and matter degrees of freedom. Much like the 2D case, this leads to symmetry operators constructed on all codimension-1 submanifolds (out of products of $S_f^X$ and $S_f^Z$ on these surfaces) and a requirement that the physical states must live in their common $+1$ eigenspace (the enforced 1-form symmetry). These symmetry operators mirror the 1-form operators that we have seen in sections \ref{sec:RBH} and \ref{sec:GCC}.  In fact, this construction works for any CSS stabilizer code (in any dimension), where the product over $v\in f$ in Eq.~(\ref{eqGaugeUnitary}) is replaced by product over the qubits in the support of the stabilizer term. 

By introducing redundant degrees of freedom, we have related a model with an emergent symmetry to one with an enforced symmetry.  The duality mapping known as \emph{gauging}~\cite{levin2012braiding,williamson2017symmetry,YoshidaHigher,haegeman2015gauging,hung2012string,williamson2016matrix} formalizes this relationship. In the abelian case, gauging a model with an onsite (0-form) symmetry can produce a model with an emergent 0-form symmetry. For example, the 2D toric code contains 0-form emergent symmetries and can be obtained by gauging the trivial paramagnet.  Gauging also provides a potential direction for identifying models with emergent 1-form symmetries.  We note that formalisms for gauging/ungauging more general types of symmetries have been explored by Vijay, Haah, and Fu~\cite{vijay2016fracton}, Williamson~\cite{williamson2016fractal}, as well as Kubica and Yoshida~\cite{KYgauging}; these approaches provide potentially powerful tools to identify self-correcting quantum memories protected by emergent 1-form symmetries.

We also remark on the parallels between this simple duality mapping and error correction. In fact, the coupling of gauge degrees of freedom is similar to many schemes of syndrome extraction, where measurement of ancillas is used to infer the eigenvalues of stabilizer terms.  Measurement errors can break this correspondence, however, and result in a misidentification of errors.  This is typically accommodated by requiring many rounds of measurements.  For single shot error correction (such as in the GCC \cite{SingleShot}), only a single round of measurements is needed, owing to the extensive number of symmetry constraints present, whose violation indicates a measurement error. In the case of emergent 0-form symmetries, the global constraint alone cannot provide sufficient information to correct for measurement errors. In a similar vein to self-correction in 3D, it would be interesting find 2D topological codes (if they exist) with emergent $\zz_2^2$ 1-form symmetries, as such codes could in principle admit single-shot error correction. 
 
\section{Discussion}
\label{sec:Discussion}

We have shown that spin lattice models corresponding to 2D SET ordered boundaries of thermally-stable 3D SPT ordered phases protected by a suitable 1-form symmetry can be self-correcting quantum memories. The key features of these 1-form symmetric models are that the bulk excitations are string-like and confined, and that the symmetry naturally couples bulk and boundary excitations to confine the later as well. 
	
We have presented two explicit examples of 3D self-correcting quantum memories protected by 1-form symmetries.  The understanding and classification of such 3D models remains largely unexplored.  A natural class of candidates are the (modular) Walker-Wang models~\cite{BulkBoundary1,Walker2012,vonKeyserlingk13,Burnell2013,Wang2017}, which possess many of the desirable properties we seek.  In particular, if the input anyon theory to the Walker-Wang construction is modular, then all bulk excitations are confined, while the 2D boundary contains a copy of the input anyon theory. One can consider building 1-form symmetries into these types of models, as has been done by Williamson and Wang~\cite{williamson2017hamiltonian} for a class of models based on the state sum TQFTs of Ref.~\cite{cui2016higher}.  (We note this is similar to the way that Ref.~\cite{BulkBoundary1} `decorates' a Walker-Wang model with a 0-form symmetry.) The 2-group construction of Ref.~\cite{kapustin2017higher} presents another interesting family of models that warrants further investigation.  In the stabilizer case, another possible approach to construct 3D models with 1-form symmetries is to ``foliate''~\cite{bolt2016foliated, brown2018universal} a topological stabilizer code with emergent 0-form symmetries.  As an example, foliation of a $d$-dimensional topological CSS code with emergent $q$-form symmetry generates a $(d{+}1)$-dimensional generalized RBH-type model with a $(q{+}1)$-form symmetry. A rigorous classification of all boundaries of such 1-form SPT phases -- including the RBH model -- remains an interesting open problem.

In the examples we have explored, we have seen the necessity of the bulk SPT-ordering in order to have a self-correcting boundary, and for the bulk SPT-ordering of these models to be thermally stable. A common viewpoint is that a self-correcting quantum memory should be topologically ordered at nonzero temperature. While this has not been proven to be strictly necessary, it has been observed to be true for many examples under Hastings' definition for topological order at $T\geq 0$ \cite{hastings2011topological}. (For example, 2D commuting projector Hamiltonian models and the 3D toric code all lack topological order at $T\textgreater0$, corresponding to the absence of self-correction.) Our examples provide further support to this perspective.

We briefly consider what our results imply for self-correction in the 3D gauge color code.  As we have shown in Sec.~\ref{sec:GCC}, the 3D gauge color code realized as commuting Hamiltonians protected by an (enforced) 1-form symmetry is self-correcting.  If we consider the full Hamiltonian of Eq.~(\ref{eqGCCHam}), the model is frustrated and it is difficult to prove that it possesses the string-like excitations with well-defined topological charge required for our arguments.  We have also argued that the full model possesses an emergent 1-form symmetry:  the color flux conservation as previously identified by Bombin~\cite{Bombin15}.  This emergent symmetry gives strong supporting evidence that proving self-correction for the full Hamiltonian of Eq.~(\ref{eqGCCHam}) (without enforcing any symmetry requirement) may be possible. What remains is to understand the spectrum of the model, and in particular verify whether the energy cost of a loop excitation grows with its length.

The idea that 1-form symmetries may be emergent in 3D topological models is extremely intriguing, both from the perspective of self-correction and more generally.  We have argued that 1-form symmetries may emerge in 3D models that possess emergent 0-form symmetries on all codimension-1 submanifolds, which in turn can be guaranteed by topological ordering of these submanifolds.  We can ask whether the 1-form symmetries of the RBH model or commuting GCC model can be realised in an emergent fashion in a 3D commuting, frustration-free Hamiltonian.  It is not clear if this is possible.  The key goal here is to identify models that possess well-defined bulk excitations together with sufficient emergent 1-form symmetries to guarantee confinement for all of such excitations. This is in contrast to the 3D toric code, where only one sector has an emergent 1-form symmetry, and correspondingly only one type of logical operator is thermally stable (giving rise to a self-correcting classical memory). Topological subsystem codes, such as the gauge color code, are natural candidates. Along with obviating the need to enforce symmetries, another advantage of emergent symmetries is that the conservation laws are manifestly true, without putting any restrictions on the system-bath coupling. 

A key open question is how to construct more general families of models with emergent higher-form symmetries. We have discussed a simple duality between emergent and enforceable symmetries, that symmetries can be introduced by adding gauge degrees of freedom in systems with emergent symmetries. In the case of 0-form symmetries, a simple well-known gauging map~\cite{levin2012braiding,williamson2017symmetry,YoshidaHigher,haegeman2015gauging,hung2012string,williamson2016matrix} can be used to obtain a model with emergent $\zz_2$ 0-form symmetry from a model with an enforced $\zz_2$ 0-form symmetry. Investigating this more generally in the presence of both enforced and emergent higher-form symmetries may lead to interesting new models, and here we point the interested reader to new results by Kubica and Yoshida on generalized gauging and ungauging maps~\cite{KYgauging}.

We have not considered the issue of efficient decoding for these self-correcting quantum memories.  We note that our two examples, the RBH model and the gauge color code, have efficient decoders with the additional feature of being single-shot~\cite{RBH,ThermalSPT,brown2016fault}.  In general, we note that the string-like nature of the excitations (errors) in these 1-form symmetric self-correcting quantum memories ensure that efficient decoders exist in general~\cite{KubicaToAppear}.

Finally, there are many avenues for further investigation into the role of symmetry in self-correcting quantum memories. In particular, one can consider the stability and feasibility of self-correction in defect-based encodings, for example in twist defects~\cite{bombin2010topological,brown2017poking} or the ``Cheshire charge'' loops of Refs.~\cite{else2017prethermal,else2017cheshire}. Such defects have a rich connection with SPT order, as well as with both enforced and emergent symmetries. Namely, as shown in Ref.~\cite{YoshidaCCSPT}, one can view topological phases with nontrivial domain walls as having SPT ground states protected by 0-form symmetries, where the protecting symmetry comes from the emergent 0-form symmetries of the topological model. It would be interesting to see if SPTs protected by higher-form symmetries also arise in this way, that is, from domain walls of topological models with emergent higher-form symmetries, and whether these associated domain walls (and symmetry defects that live on their boundaries) can be thermally stable. For example, the SPT order (at temperature $T\geq 0$) in the RBH model manifests as a thermally stable domain wall in the 4D toric code~\cite{ThermalSPT}. Whether one can construct similarly stable domain walls in 3D or less is an open problem. Another direction is to consider more general subsystem symmetries, where the dimension need not be an integer. For example, fracton topological orders (which can be partially self-correcting \cite{Haah}) have been of great interest  recently~\cite{vijay2016fracton,williamson2016fractal,ma2017fracton}.

\begin{acknowledgments}
We thank Aleksander Kubica and Beni Yoshida for discussions, and for kindly sharing drafts of their work on ungauging quantum error correcting codes~\cite{KYgauging}. We also thank Ben Brown, Tomas Jochym-O'Connor, Andrew Doherty and Dom Williamson for helpful discussions.  This work is supported by the Australian Research Council (ARC) via the Centre of Excellence in Engineered Quantum Systems (EQuS) project number CE170100009 and Discovery Project number DP170103073.
\end{acknowledgments}


\begin{thebibliography}{112}%
	\makeatletter
	\providecommand \@ifxundefined [1]{%
		\@ifx{#1\undefined}
	}%
	\providecommand \@ifnum [1]{%
		\ifnum #1\expandafter \@firstoftwo
		\else \expandafter \@secondoftwo
		\fi
	}%
	\providecommand \@ifx [1]{%
		\ifx #1\expandafter \@firstoftwo
		\else \expandafter \@secondoftwo
		\fi
	}%
	\providecommand \natexlab [1]{#1}%
	\providecommand \enquote  [1]{``#1''}%
	\providecommand \bibnamefont  [1]{#1}%
	\providecommand \bibfnamefont [1]{#1}%
	\providecommand \citenamefont [1]{#1}%
	\providecommand \href@noop [0]{\@secondoftwo}%
	\providecommand \href [0]{\begingroup \@sanitize@url \@href}%
	\providecommand \@href[1]{\@@startlink{#1}\@@href}%
	\providecommand \@@href[1]{\endgroup#1\@@endlink}%
	\providecommand \@sanitize@url [0]{\catcode `\\12\catcode `\$12\catcode
		`\&12\catcode `\#12\catcode `\^12\catcode `\_12\catcode `\%12\relax}%
	\providecommand \@@startlink[1]{}%
	\providecommand \@@endlink[0]{}%
	\providecommand \url  [0]{\begingroup\@sanitize@url \@url }%
	\providecommand \@url [1]{\endgroup\@href {#1}{\urlprefix }}%
	\providecommand \urlprefix  [0]{URL }%
	\providecommand \Eprint [0]{\href }%
	\providecommand \doibase [0]{http://dx.doi.org/}%
	\providecommand \selectlanguage [0]{\@gobble}%
	\providecommand \bibinfo  [0]{\@secondoftwo}%
	\providecommand \bibfield  [0]{\@secondoftwo}%
	\providecommand \translation [1]{[#1]}%
	\providecommand \BibitemOpen [0]{}%
	\providecommand \bibitemStop [0]{}%
	\providecommand \bibitemNoStop [0]{.\EOS\space}%
	\providecommand \EOS [0]{\spacefactor3000\relax}%
	\providecommand \BibitemShut  [1]{\csname bibitem#1\endcsname}%
	\let\auto@bib@innerbib\@empty
	\bibitem [{\citenamefont {Dennis}\ \emph {et~al.}(2002)\citenamefont {Dennis},
		\citenamefont {Kitaev}, \citenamefont {Landahl},\ and\ \citenamefont
		{Preskill}}]{Dennis}%
	\BibitemOpen
	\bibfield  {author} {\bibinfo {author} {\bibfnamefont {Eric}\ \bibnamefont
			{Dennis}}, \bibinfo {author} {\bibfnamefont {Alexei}\ \bibnamefont {Kitaev}},
		\bibinfo {author} {\bibfnamefont {Andrew}\ \bibnamefont {Landahl}}, \ and\
		\bibinfo {author} {\bibfnamefont {John}\ \bibnamefont {Preskill}},\
	}\bibfield  {title} {\enquote {\bibinfo {title} {Topological quantum
				memory},}\ }\href@noop {} {\bibfield  {journal} {\bibinfo  {journal} {Journal
				of Mathematical Physics}\ }\textbf {\bibinfo {volume} {43}},\ \bibinfo
		{pages} {4452--4505} (\bibinfo {year} {2002})}\BibitemShut {NoStop}%
	\bibitem [{\citenamefont {Alicki}\ \emph {et~al.}(2010)\citenamefont {Alicki},
		\citenamefont {Horodecki}, \citenamefont {Horodecki},\ and\ \citenamefont
		{Horodecki}}]{alicki2010thermal}%
	\BibitemOpen
	\bibfield  {author} {\bibinfo {author} {\bibfnamefont {Robert}\ \bibnamefont
			{Alicki}}, \bibinfo {author} {\bibfnamefont {Michal}\ \bibnamefont
			{Horodecki}}, \bibinfo {author} {\bibfnamefont {Pawel}\ \bibnamefont
			{Horodecki}}, \ and\ \bibinfo {author} {\bibfnamefont {Ryszard}\ \bibnamefont
			{Horodecki}},\ }\bibfield  {title} {\enquote {\bibinfo {title} {On thermal
				stability of topological qubit in {K}itaev's {4D} model},}\ }\href@noop {}
	{\bibfield  {journal} {\bibinfo  {journal} {Open Systems \& Information
				Dynamics}\ }\textbf {\bibinfo {volume} {17}},\ \bibinfo {pages} {1--20}
		(\bibinfo {year} {2010})}\BibitemShut {NoStop}%
	\bibitem [{\citenamefont {Brown}\ \emph
		{et~al.}(2016{\natexlab{a}})\citenamefont {Brown}, \citenamefont {Loss},
		\citenamefont {Pachos}, \citenamefont {Self},\ and\ \citenamefont
		{Wootton}}]{BrownRev}%
	\BibitemOpen
	\bibfield  {author} {\bibinfo {author} {\bibfnamefont {Benjamin~J}\
			\bibnamefont {Brown}}, \bibinfo {author} {\bibfnamefont {Daniel}\
			\bibnamefont {Loss}}, \bibinfo {author} {\bibfnamefont {Jiannis~K}\
			\bibnamefont {Pachos}}, \bibinfo {author} {\bibfnamefont {Chris~N}\
			\bibnamefont {Self}}, \ and\ \bibinfo {author} {\bibfnamefont {James~R}\
			\bibnamefont {Wootton}},\ }\bibfield  {title} {\enquote {\bibinfo {title}
			{Quantum memories at finite temperature},}\ }\href@noop {} {\bibfield
		{journal} {\bibinfo  {journal} {Reviews of Modern Physics}\ }\textbf
		{\bibinfo {volume} {88}},\ \bibinfo {pages} {045005} (\bibinfo {year}
		{2016}{\natexlab{a}})}\BibitemShut {NoStop}%
	\bibitem [{\citenamefont {Kitaev}(2003)}]{Kitaev2003}%
	\BibitemOpen
	\bibfield  {author} {\bibinfo {author} {\bibfnamefont {A~Yu}\ \bibnamefont
			{Kitaev}},\ }\bibfield  {title} {\enquote {\bibinfo {title} {Fault-tolerant
				quantum computation by anyons},}\ }\href@noop {} {\bibfield  {journal}
		{\bibinfo  {journal} {Annals of Physics}\ }\textbf {\bibinfo {volume}
			{303}},\ \bibinfo {pages} {2--30} (\bibinfo {year} {2003})}\BibitemShut
	{NoStop}%
	\bibitem [{\citenamefont {Bravyi}\ and\ \citenamefont
		{Terhal}(2009)}]{BravTerh}%
	\BibitemOpen
	\bibfield  {author} {\bibinfo {author} {\bibfnamefont {Sergey}\ \bibnamefont
			{Bravyi}}\ and\ \bibinfo {author} {\bibfnamefont {Barbara}\ \bibnamefont
			{Terhal}},\ }\bibfield  {title} {\enquote {\bibinfo {title} {A no-go theorem
				for a two-dimensional self-correcting quantum memory based on stabilizer
				codes},}\ }\href@noop {} {\bibfield  {journal} {\bibinfo  {journal} {New
				Journal of Physics}\ }\textbf {\bibinfo {volume} {11}},\ \bibinfo {pages}
		{043029} (\bibinfo {year} {2009})}\BibitemShut {NoStop}%
	\bibitem [{\citenamefont {Bravyi}\ and\ \citenamefont
		{Haah}(2013)}]{BravyiHaah}%
	\BibitemOpen
	\bibfield  {author} {\bibinfo {author} {\bibfnamefont {Sergey}\ \bibnamefont
			{Bravyi}}\ and\ \bibinfo {author} {\bibfnamefont {Jeongwan}\ \bibnamefont
			{Haah}},\ }\bibfield  {title} {\enquote {\bibinfo {title} {Quantum
				self-correction in the {3D} cubic code model},}\ }\href {\doibase
		10.1103/PhysRevLett.111.200501} {\bibfield  {journal} {\bibinfo  {journal}
			{Physical Review Letters}\ }\textbf {\bibinfo {volume} {111}},\ \bibinfo
		{pages} {200501} (\bibinfo {year} {2013})}\BibitemShut {NoStop}%
	\bibitem [{\citenamefont {Landon-Cardinal}\ and\ \citenamefont
		{Poulin}(2013)}]{CardPoul}%
	\BibitemOpen
	\bibfield  {author} {\bibinfo {author} {\bibfnamefont {Olivier}\ \bibnamefont
			{Landon-Cardinal}}\ and\ \bibinfo {author} {\bibfnamefont {David}\
			\bibnamefont {Poulin}},\ }\bibfield  {title} {\enquote {\bibinfo {title}
			{Local topological order inhibits thermal stability in {2D}},}\ }\href@noop
	{} {\bibfield  {journal} {\bibinfo  {journal} {Physical Review Letters}\
		}\textbf {\bibinfo {volume} {110}},\ \bibinfo {pages} {090502} (\bibinfo
		{year} {2013})}\BibitemShut {NoStop}%
	\bibitem [{\citenamefont {Yoshida}(2011)}]{YoshidaTopo}%
	\BibitemOpen
	\bibfield  {author} {\bibinfo {author} {\bibfnamefont {Beni}\ \bibnamefont
			{Yoshida}},\ }\bibfield  {title} {\enquote {\bibinfo {title} {Feasibility of
				self-correcting quantum memory and thermal stability of topological order},}\
	}\href@noop {} {\bibfield  {journal} {\bibinfo  {journal} {Annals of
				Physics}\ }\textbf {\bibinfo {volume} {326}},\ \bibinfo {pages} {2566--2633}
		(\bibinfo {year} {2011})}\BibitemShut {NoStop}%
	\bibitem [{\citenamefont {Pastawski}\ and\ \citenamefont
		{Yoshida}(2015)}]{PastYosh}%
	\BibitemOpen
	\bibfield  {author} {\bibinfo {author} {\bibfnamefont {Fernando}\
			\bibnamefont {Pastawski}}\ and\ \bibinfo {author} {\bibfnamefont {Beni}\
			\bibnamefont {Yoshida}},\ }\bibfield  {title} {\enquote {\bibinfo {title}
			{Fault-tolerant logical gates in quantum error-correcting codes},}\
	}\href@noop {} {\bibfield  {journal} {\bibinfo  {journal} {Physical Review
				A}\ }\textbf {\bibinfo {volume} {91}},\ \bibinfo {pages} {012305} (\bibinfo
		{year} {2015})}\BibitemShut {NoStop}%
	\bibitem [{\citenamefont {Landon-Cardinal}\ \emph {et~al.}(2015)\citenamefont
		{Landon-Cardinal}, \citenamefont {Yoshida}, \citenamefont {Poulin},\ and\
		\citenamefont {Preskill}}]{CanLongRange}%
	\BibitemOpen
	\bibfield  {author} {\bibinfo {author} {\bibfnamefont {Olivier}\ \bibnamefont
			{Landon-Cardinal}}, \bibinfo {author} {\bibfnamefont {Beni}\ \bibnamefont
			{Yoshida}}, \bibinfo {author} {\bibfnamefont {David}\ \bibnamefont {Poulin}},
		\ and\ \bibinfo {author} {\bibfnamefont {John}\ \bibnamefont {Preskill}},\
	}\bibfield  {title} {\enquote {\bibinfo {title} {Perturbative instability of
				quantum memory based on effective long-range interactions},}\ }\href
	{\doibase 10.1103/PhysRevA.91.032303} {\bibfield  {journal} {\bibinfo
			{journal} {Physical Review A}\ }\textbf {\bibinfo {volume} {91}},\ \bibinfo
		{pages} {032303} (\bibinfo {year} {2015})}\BibitemShut {NoStop}%
	\bibitem [{\citenamefont {K{\'o}m{\'a}r}\ \emph {et~al.}(2016)\citenamefont
		{K{\'o}m{\'a}r}, \citenamefont {Landon-Cardinal},\ and\ \citenamefont
		{Temme}}]{Komar2016}%
	\BibitemOpen
	\bibfield  {author} {\bibinfo {author} {\bibfnamefont {Anna}\ \bibnamefont
			{K{\'o}m{\'a}r}}, \bibinfo {author} {\bibfnamefont {Olivier}\ \bibnamefont
			{Landon-Cardinal}}, \ and\ \bibinfo {author} {\bibfnamefont {Kristan}\
			\bibnamefont {Temme}},\ }\bibfield  {title} {\enquote {\bibinfo {title}
			{Necessity of an energy barrier for self-correction of {A}belian quantum
				doubles},}\ }\href@noop {} {\bibfield  {journal} {\bibinfo  {journal}
			{Physical Review A}\ }\textbf {\bibinfo {volume} {93}},\ \bibinfo {pages}
		{052337} (\bibinfo {year} {2016})}\BibitemShut {NoStop}%
	\bibitem [{\citenamefont {Haah}(2011)}]{Haah}%
	\BibitemOpen
	\bibfield  {author} {\bibinfo {author} {\bibfnamefont {Jeongwan}\
			\bibnamefont {Haah}},\ }\bibfield  {title} {\enquote {\bibinfo {title} {Local
				stabilizer codes in three dimensions without string logical operators},}\
	}\href@noop {} {\bibfield  {journal} {\bibinfo  {journal} {Physical Review
				A}\ }\textbf {\bibinfo {volume} {83}},\ \bibinfo {pages} {042330} (\bibinfo
		{year} {2011})}\BibitemShut {NoStop}%
	\bibitem [{\citenamefont {Michnicki}(2014)}]{Welded1}%
	\BibitemOpen
	\bibfield  {author} {\bibinfo {author} {\bibfnamefont {Kamil~P}\ \bibnamefont
			{Michnicki}},\ }\bibfield  {title} {\enquote {\bibinfo {title} {3d
				topological quantum memory with a power-law energy barrier},}\ }\href@noop {}
	{\bibfield  {journal} {\bibinfo  {journal} {Physical Review Letters}\
		}\textbf {\bibinfo {volume} {113}},\ \bibinfo {pages} {130501} (\bibinfo
		{year} {2014})}\BibitemShut {NoStop}%
	\bibitem [{\citenamefont {Michnicki}()}]{Welded2}%
	\BibitemOpen
	\bibfield  {author} {\bibinfo {author} {\bibfnamefont {Kamil}\ \bibnamefont
			{Michnicki}},\ }\bibfield  {title} {\enquote {\bibinfo {title} {3-d quantum
				stabilizer codes with a power law energy barrier. 2012},}\ }\href@noop {}
	{\bibinfo  {journal} {arXiv preprint arXiv:1208.3496}\ }\BibitemShut
	{NoStop}%
	\bibitem [{\citenamefont {Hamma}\ \emph {et~al.}(2009)\citenamefont {Hamma},
		\citenamefont {Castelnovo},\ and\ \citenamefont {Chamon}}]{ToricBoson}%
	\BibitemOpen
	\bibfield  {journal} {  }\bibfield  {author} {\bibinfo {author} {\bibfnamefont
			{Alioscia}\ \bibnamefont {Hamma}}, \bibinfo {author} {\bibfnamefont
			{Claudio}\ \bibnamefont {Castelnovo}}, \ and\ \bibinfo {author}
		{\bibfnamefont {Claudio}\ \bibnamefont {Chamon}},\ }\bibfield  {title}
	{\enquote {\bibinfo {title} {Toric-boson model: Toward a topological quantum
				memory at finite temperature},}\ }\href@noop {} {\bibfield  {journal}
		{\bibinfo  {journal} {Physical Review B}\ }\textbf {\bibinfo {volume} {79}},\
		\bibinfo {pages} {245122} (\bibinfo {year} {2009})}\BibitemShut {NoStop}%
	\bibitem [{\citenamefont {Pedrocchi}\ \emph {et~al.}(2013)\citenamefont
		{Pedrocchi}, \citenamefont {Hutter}, \citenamefont {Wootton},\ and\
		\citenamefont {Loss}}]{Pedrocchi}%
	\BibitemOpen
	\bibfield  {author} {\bibinfo {author} {\bibfnamefont {Fabio~L}\ \bibnamefont
			{Pedrocchi}}, \bibinfo {author} {\bibfnamefont {Adrian}\ \bibnamefont
			{Hutter}}, \bibinfo {author} {\bibfnamefont {James~R}\ \bibnamefont
			{Wootton}}, \ and\ \bibinfo {author} {\bibfnamefont {Daniel}\ \bibnamefont
			{Loss}},\ }\bibfield  {title} {\enquote {\bibinfo {title} {Enhanced thermal
				stability of the toric code through coupling to a bosonic bath},}\
	}\href@noop {} {\bibfield  {journal} {\bibinfo  {journal} {Physical Review
				A}\ }\textbf {\bibinfo {volume} {88}},\ \bibinfo {pages} {062313} (\bibinfo
		{year} {2013})}\BibitemShut {NoStop}%
	\bibitem [{\citenamefont {Brell}(2016)}]{Brellcode}%
	\BibitemOpen
	\bibfield  {author} {\bibinfo {author} {\bibfnamefont {Courtney~G}\
			\bibnamefont {Brell}},\ }\bibfield  {title} {\enquote {\bibinfo {title} {A
				proposal for self-correcting stabilizer quantum memories in 3 dimensions (or
				slightly less)},}\ }\href@noop {} {\bibfield  {journal} {\bibinfo  {journal}
			{New Journal of Physics}\ }\textbf {\bibinfo {volume} {18}},\ \bibinfo
		{pages} {013050} (\bibinfo {year} {2016})}\BibitemShut {NoStop}%
	\bibitem [{\citenamefont {Brown}\ \emph {et~al.}(2014)\citenamefont {Brown},
		\citenamefont {Al-Shimary},\ and\ \citenamefont {Pachos}}]{BrownEntropy}%
	\BibitemOpen
	\bibfield  {author} {\bibinfo {author} {\bibfnamefont {Benjamin~J}\
			\bibnamefont {Brown}}, \bibinfo {author} {\bibfnamefont {Abbas}\ \bibnamefont
			{Al-Shimary}}, \ and\ \bibinfo {author} {\bibfnamefont {Jiannis~K}\
			\bibnamefont {Pachos}},\ }\bibfield  {title} {\enquote {\bibinfo {title}
			{Entropic barriers for two-dimensional quantum memories},}\ }\href@noop {}
	{\bibfield  {journal} {\bibinfo  {journal} {Physical Review Letters}\
		}\textbf {\bibinfo {volume} {112}},\ \bibinfo {pages} {120503} (\bibinfo
		{year} {2014})}\BibitemShut {NoStop}%
	\bibitem [{\citenamefont {Chen}\ \emph {et~al.}(2013)\citenamefont {Chen},
		\citenamefont {Gu}, \citenamefont {Liu},\ and\ \citenamefont
		{Wen}}]{GroupCoho}%
	\BibitemOpen
	\bibfield  {author} {\bibinfo {author} {\bibfnamefont {Xie}\ \bibnamefont
			{Chen}}, \bibinfo {author} {\bibfnamefont {Zheng-Cheng}\ \bibnamefont {Gu}},
		\bibinfo {author} {\bibfnamefont {Zheng-Xin}\ \bibnamefont {Liu}}, \ and\
		\bibinfo {author} {\bibfnamefont {Xiao-Gang}\ \bibnamefont {Wen}},\
	}\bibfield  {title} {\enquote {\bibinfo {title} {Symmetry protected
				topological orders and the group cohomology of their symmetry group},}\
	}\href@noop {} {\bibfield  {journal} {\bibinfo  {journal} {Physical Review
				B}\ }\textbf {\bibinfo {volume} {87}},\ \bibinfo {pages} {155114} (\bibinfo
		{year} {2013})}\BibitemShut {NoStop}%
	\bibitem [{\citenamefont {Chen}\ \emph {et~al.}(2010)\citenamefont {Chen},
		\citenamefont {Gu},\ and\ \citenamefont {Wen}}]{WenTopo}%
	\BibitemOpen
	\bibfield  {author} {\bibinfo {author} {\bibfnamefont {Xie}\ \bibnamefont
			{Chen}}, \bibinfo {author} {\bibfnamefont {Zheng-Cheng}\ \bibnamefont {Gu}},
		\ and\ \bibinfo {author} {\bibfnamefont {Xiao-Gang}\ \bibnamefont {Wen}},\
	}\bibfield  {title} {\enquote {\bibinfo {title} {Local unitary
				transformation, long-range quantum entanglement, wave function
				renormalization, and topological order},}\ }\href@noop {} {\bibfield
		{journal} {\bibinfo  {journal} {Physical Review B}\ }\textbf {\bibinfo
			{volume} {82}},\ \bibinfo {pages} {155138} (\bibinfo {year}
		{2010})}\BibitemShut {NoStop}%
	\bibitem [{\citenamefont {Schuch}\ \emph {et~al.}(2011)\citenamefont {Schuch},
		\citenamefont {P{\'e}rez-Garc{\'\i}a},\ and\ \citenamefont
		{Cirac}}]{schuch2011classifying}%
	\BibitemOpen
	\bibfield  {author} {\bibinfo {author} {\bibfnamefont {Norbert}\ \bibnamefont
			{Schuch}}, \bibinfo {author} {\bibfnamefont {David}\ \bibnamefont
			{P{\'e}rez-Garc{\'\i}a}}, \ and\ \bibinfo {author} {\bibfnamefont {Ignacio}\
			\bibnamefont {Cirac}},\ }\bibfield  {title} {\enquote {\bibinfo {title}
			{Classifying quantum phases using matrix product states and projected
				entangled pair states},}\ }\href@noop {} {\bibfield  {journal} {\bibinfo
			{journal} {Physical Review B}\ }\textbf {\bibinfo {volume} {84}},\ \bibinfo
		{pages} {165139} (\bibinfo {year} {2011})}\BibitemShut {NoStop}%
	\bibitem [{\citenamefont {Chen}\ \emph
		{et~al.}(2011{\natexlab{a}})\citenamefont {Chen}, \citenamefont {Gu},\ and\
		\citenamefont {Wen}}]{chen2011complete}%
	\BibitemOpen
	\bibfield  {author} {\bibinfo {author} {\bibfnamefont {Xie}\ \bibnamefont
			{Chen}}, \bibinfo {author} {\bibfnamefont {Zheng-Cheng}\ \bibnamefont {Gu}},
		\ and\ \bibinfo {author} {\bibfnamefont {Xiao-Gang}\ \bibnamefont {Wen}},\
	}\bibfield  {title} {\enquote {\bibinfo {title} {Complete classification of
				one-dimensional gapped quantum phases in interacting spin systems},}\
	}\href@noop {} {\bibfield  {journal} {\bibinfo  {journal} {Physical Review
				B}\ }\textbf {\bibinfo {volume} {84}},\ \bibinfo {pages} {235128} (\bibinfo
		{year} {2011}{\natexlab{a}})}\BibitemShut {NoStop}%
	\bibitem [{\citenamefont {Chen}\ \emph
		{et~al.}(2011{\natexlab{b}})\citenamefont {Chen}, \citenamefont {Gu},\ and\
		\citenamefont {Wen}}]{chen2011classification}%
	\BibitemOpen
	\bibfield  {author} {\bibinfo {author} {\bibfnamefont {Xie}\ \bibnamefont
			{Chen}}, \bibinfo {author} {\bibfnamefont {Zheng-Cheng}\ \bibnamefont {Gu}},
		\ and\ \bibinfo {author} {\bibfnamefont {Xiao-Gang}\ \bibnamefont {Wen}},\
	}\bibfield  {title} {\enquote {\bibinfo {title} {Classification of gapped
				symmetric phases in one-dimensional spin systems},}\ }\href@noop {}
	{\bibfield  {journal} {\bibinfo  {journal} {Physical Review B}\ }\textbf
		{\bibinfo {volume} {83}},\ \bibinfo {pages} {035107} (\bibinfo {year}
		{2011}{\natexlab{b}})}\BibitemShut {NoStop}%
	\bibitem [{\citenamefont {You}\ \emph {et~al.}(2018)\citenamefont {You},
		\citenamefont {Devakul}, \citenamefont {Burnell},\ and\ \citenamefont
		{Sondhi}}]{you2018subsystem}%
	\BibitemOpen
	\bibfield  {author} {\bibinfo {author} {\bibfnamefont {Yizhi}\ \bibnamefont
			{You}}, \bibinfo {author} {\bibfnamefont {Trithep}\ \bibnamefont {Devakul}},
		\bibinfo {author} {\bibfnamefont {FJ}~\bibnamefont {Burnell}}, \ and\
		\bibinfo {author} {\bibfnamefont {SL}~\bibnamefont {Sondhi}},\ }\bibfield
	{title} {\enquote {\bibinfo {title} {Subsystem symmetry protected topological
				order},}\ }\href@noop {} {\bibfield  {journal} {\bibinfo  {journal} {Physical Review
				B}\ }\textbf {\bibinfo {volume} {98}},\ \bibinfo {pages} {035112}  (\bibinfo {year} {2018})}\BibitemShut {NoStop}%
	\bibitem [{\citenamefont {Mesaros}\ and\ \citenamefont
		{Ran}(2013)}]{Mesaros13}%
	\BibitemOpen
	\bibfield  {author} {\bibinfo {author} {\bibfnamefont {Andrej}\ \bibnamefont
			{Mesaros}}\ and\ \bibinfo {author} {\bibfnamefont {Ying}\ \bibnamefont
			{Ran}},\ }\bibfield  {title} {\enquote {\bibinfo {title} {Classification of
				symmetry enriched topological phases with exactly solvable models},}\
	}\href@noop {} {\bibfield  {journal} {\bibinfo  {journal} {Physical Review
				B}\ }\textbf {\bibinfo {volume} {87}},\ \bibinfo {pages} {155115} (\bibinfo
		{year} {2013})}\BibitemShut {NoStop}%
	\bibitem [{\citenamefont {Hermele}(2014)}]{Hermele14}%
	\BibitemOpen
	\bibfield  {author} {\bibinfo {author} {\bibfnamefont {Michael}\ \bibnamefont
			{Hermele}},\ }\bibfield  {title} {\enquote {\bibinfo {title} {String flux
				mechanism for fractionalization in topologically ordered phases},}\
	}\href@noop {} {\bibfield  {journal} {\bibinfo  {journal} {Physical Review
				B}\ }\textbf {\bibinfo {volume} {90}},\ \bibinfo {pages} {184418} (\bibinfo
		{year} {2014})}\BibitemShut {NoStop}%
	\bibitem [{\citenamefont {Barkeshli}\ \emph {et~al.}(2014)\citenamefont
		{Barkeshli}, \citenamefont {Bonderson}, \citenamefont {Cheng},\ and\
		\citenamefont {Wang}}]{DefectsGauging}%
	\BibitemOpen
	\bibfield  {author} {\bibinfo {author} {\bibfnamefont {Maissam}\ \bibnamefont
			{Barkeshli}}, \bibinfo {author} {\bibfnamefont {Parsa}\ \bibnamefont
			{Bonderson}}, \bibinfo {author} {\bibfnamefont {Meng}\ \bibnamefont {Cheng}},
		\ and\ \bibinfo {author} {\bibfnamefont {Zhenghan}\ \bibnamefont {Wang}},\
	}\bibfield  {title} {\enquote {\bibinfo {title} {Symmetry, defects, and
				gauging of topological phases},}\ }\href@noop {} {\bibfield  {journal}
		{\bibinfo  {journal} {Physical Review
				B}\ }\textbf {\bibinfo {volume} {100}},\ \bibinfo {pages} {115147}  (\bibinfo {year}
		{2019})}\BibitemShut {NoStop}%
	\bibitem [{\citenamefont {Teo}\ \emph {et~al.}(2015)\citenamefont {Teo},
		\citenamefont {Hughes},\ and\ \citenamefont {Fradkin}}]{Teo15}%
	\BibitemOpen
	\bibfield  {author} {\bibinfo {author} {\bibfnamefont {Jeffrey~CY}\
			\bibnamefont {Teo}}, \bibinfo {author} {\bibfnamefont {Taylor~L}\
			\bibnamefont {Hughes}}, \ and\ \bibinfo {author} {\bibfnamefont {Eduardo}\
			\bibnamefont {Fradkin}},\ }\bibfield  {title} {\enquote {\bibinfo {title}
			{Theory of twist liquids: gauging an anyonic symmetry},}\ }\href@noop {}
	{\bibfield  {journal} {\bibinfo  {journal} {Annals of Physics}\ }\textbf
		{\bibinfo {volume} {360}},\ \bibinfo {pages} {349--445} (\bibinfo {year}
		{2015})}\BibitemShut {NoStop}%
	\bibitem [{\citenamefont {Tarantino}\ \emph {et~al.}(2016)\citenamefont
		{Tarantino}, \citenamefont {Lindner},\ and\ \citenamefont
		{Fidkowski}}]{Tarantino16}%
	\BibitemOpen
	\bibfield  {author} {\bibinfo {author} {\bibfnamefont {Nicolas}\ \bibnamefont
			{Tarantino}}, \bibinfo {author} {\bibfnamefont {Netanel~H}\ \bibnamefont
			{Lindner}}, \ and\ \bibinfo {author} {\bibfnamefont {Lukasz}\ \bibnamefont
			{Fidkowski}},\ }\bibfield  {title} {\enquote {\bibinfo {title} {Symmetry
				fractionalization and twist defects},}\ }\href@noop {} {\bibfield  {journal}
		{\bibinfo  {journal} {New Journal of Physics}\ }\textbf {\bibinfo {volume}
			{18}},\ \bibinfo {pages} {035006} (\bibinfo {year} {2016})}\BibitemShut
	{NoStop}%
	\bibitem [{\citenamefont {Heinrich}\ \emph {et~al.}(2016)\citenamefont
		{Heinrich}, \citenamefont {Burnell}, \citenamefont {Fidkowski},\ and\
		\citenamefont {Levin}}]{SET1}%
	\BibitemOpen
	\bibfield  {author} {\bibinfo {author} {\bibfnamefont {Chris}\ \bibnamefont
			{Heinrich}}, \bibinfo {author} {\bibfnamefont {Fiona}\ \bibnamefont
			{Burnell}}, \bibinfo {author} {\bibfnamefont {Lukasz}\ \bibnamefont
			{Fidkowski}}, \ and\ \bibinfo {author} {\bibfnamefont {Michael}\ \bibnamefont
			{Levin}},\ }\bibfield  {title} {\enquote {\bibinfo {title} {Symmetry-enriched
				string nets: Exactly solvable models for {SET} phases},}\ }\href@noop {}
	{\bibfield  {journal} {\bibinfo  {journal} {Physical Review B}\ }\textbf
		{\bibinfo {volume} {94}},\ \bibinfo {pages} {235136} (\bibinfo {year}
		{2016})}\BibitemShut {NoStop}%
	\bibitem [{\citenamefont {Cheng}\ \emph {et~al.}(2017)\citenamefont {Cheng},
		\citenamefont {Gu}, \citenamefont {Jiang},\ and\ \citenamefont {Qi}}]{SET2}%
	\BibitemOpen
	\bibfield  {author} {\bibinfo {author} {\bibfnamefont {Meng}\ \bibnamefont
			{Cheng}}, \bibinfo {author} {\bibfnamefont {Zheng-Cheng}\ \bibnamefont {Gu}},
		\bibinfo {author} {\bibfnamefont {Shenghan}\ \bibnamefont {Jiang}}, \ and\
		\bibinfo {author} {\bibfnamefont {Yang}\ \bibnamefont {Qi}},\ }\bibfield
	{title} {\enquote {\bibinfo {title} {Exactly solvable models for
				symmetry-enriched topological phases},}\ }\href@noop {} {\bibfield  {journal}
		{\bibinfo  {journal} {Physical Review B}\ }\textbf {\bibinfo {volume} {96}},\
		\bibinfo {pages} {115107} (\bibinfo {year} {2017})}\BibitemShut {NoStop}%
	\bibitem [{\citenamefont {Wang}\ and\ \citenamefont
		{Senthil}(2013{\natexlab{a}})}]{Wang13}%
	\BibitemOpen
	\bibfield  {author} {\bibinfo {author} {\bibfnamefont {Chong}\ \bibnamefont
			{Wang}}\ and\ \bibinfo {author} {\bibfnamefont {T}~\bibnamefont {Senthil}},\
	}\bibfield  {title} {\enquote {\bibinfo {title} {Boson topological
				insulators: A window into highly entangled quantum phases},}\ }\href@noop {}
	{\bibfield  {journal} {\bibinfo  {journal} {Physical Review B}\ }\textbf
		{\bibinfo {volume} {87}},\ \bibinfo {pages} {235122} (\bibinfo {year}
		{2013}{\natexlab{a}})}\BibitemShut {NoStop}%
	\bibitem [{\citenamefont {Fidkowski}\ and\ \citenamefont
		{Vishwanath}(2017)}]{Fidkowski17}%
	\BibitemOpen
	\bibfield  {author} {\bibinfo {author} {\bibfnamefont {Lukasz}\ \bibnamefont
			{Fidkowski}}\ and\ \bibinfo {author} {\bibfnamefont {Ashvin}\ \bibnamefont
			{Vishwanath}},\ }\bibfield  {title} {\enquote {\bibinfo {title} {Realizing
				anomalous anyonic symmetries at the surfaces of three-dimensional gauge
				theories},}\ }\href@noop {} {\bibfield  {journal} {\bibinfo  {journal}
			{Physical Review B}\ }\textbf {\bibinfo {volume} {96}},\ \bibinfo {pages}
		{045131} (\bibinfo {year} {2017})}\BibitemShut {NoStop}%
	\bibitem [{\citenamefont {Burnell}\ \emph
		{et~al.}(2014{\natexlab{a}})\citenamefont {Burnell}, \citenamefont {Chen},
		\citenamefont {Fidkowski},\ and\ \citenamefont {Vishwanath}}]{Burnell14}%
	\BibitemOpen
	\bibfield  {author} {\bibinfo {author} {\bibfnamefont {Fiona~J}\ \bibnamefont
			{Burnell}}, \bibinfo {author} {\bibfnamefont {Xie}\ \bibnamefont {Chen}},
		\bibinfo {author} {\bibfnamefont {Lukasz}\ \bibnamefont {Fidkowski}}, \ and\
		\bibinfo {author} {\bibfnamefont {Ashvin}\ \bibnamefont {Vishwanath}},\
	}\bibfield  {title} {\enquote {\bibinfo {title} {Exactly soluble model of a
				three-dimensional symmetry-protected topological phase of bosons with surface
				topological order},}\ }\href@noop {} {\bibfield  {journal} {\bibinfo
			{journal} {Physical Review B}\ }\textbf {\bibinfo {volume} {90}},\ \bibinfo
		{pages} {245122} (\bibinfo {year} {2014}{\natexlab{a}})}\BibitemShut
	{NoStop}%
	\bibitem [{\citenamefont {Vishwanath}\ and\ \citenamefont
		{Senthil}(2013{\natexlab{a}})}]{Vishwanath13}%
	\BibitemOpen
	\bibfield  {author} {\bibinfo {author} {\bibfnamefont {Ashvin}\ \bibnamefont
			{Vishwanath}}\ and\ \bibinfo {author} {\bibfnamefont {T}~\bibnamefont
			{Senthil}},\ }\bibfield  {title} {\enquote {\bibinfo {title} {Physics of
				three-dimensional bosonic topological insulators: Surface-deconfined
				criticality and quantized magnetoelectric effect},}\ }\href@noop {}
	{\bibfield  {journal} {\bibinfo  {journal} {Physical Review X}\ }\textbf
		{\bibinfo {volume} {3}},\ \bibinfo {pages} {011016} (\bibinfo {year}
		{2013}{\natexlab{a}})}\BibitemShut {NoStop}%
	\bibitem [{\citenamefont {Wang}\ \emph {et~al.}(2017)\citenamefont {Wang},
		\citenamefont {Wen},\ and\ \citenamefont {Witten}}]{WittenSPT}%
	\BibitemOpen
	\bibfield  {author} {\bibinfo {author} {\bibfnamefont {Juven}\ \bibnamefont
			{Wang}}, \bibinfo {author} {\bibfnamefont {Xiao-Gang}\ \bibnamefont {Wen}}, \
		and\ \bibinfo {author} {\bibfnamefont {Edward}\ \bibnamefont {Witten}},\
	}\bibfield  {title} {\enquote {\bibinfo {title} {Symmetric gapped interfaces
				of {SPT} and {SET} states: Systematic constructions},}\ }\href@noop {}
	{\bibfield  {journal} {\bibinfo  {journal} {Physical Review X}\
		}\textbf
	{\bibinfo {volume} {8}},\ \bibinfo {pages} {031048} (\bibinfo {year} {2018})}\BibitemShut {NoStop}%
	\bibitem [{\citenamefont {Chen}\ \emph {et~al.}(2015)\citenamefont {Chen},
		\citenamefont {Burnell}, \citenamefont {Vishwanath},\ and\ \citenamefont
		{Fidkowski}}]{BulkBoundary1}%
	\BibitemOpen
	\bibfield  {author} {\bibinfo {author} {\bibfnamefont {Xie}\ \bibnamefont
			{Chen}}, \bibinfo {author} {\bibfnamefont {Fiona~J}\ \bibnamefont {Burnell}},
		\bibinfo {author} {\bibfnamefont {Ashvin}\ \bibnamefont {Vishwanath}}, \ and\
		\bibinfo {author} {\bibfnamefont {Lukasz}\ \bibnamefont {Fidkowski}},\
	}\bibfield  {title} {\enquote {\bibinfo {title} {Anomalous symmetry
				fractionalization and surface topological order},}\ }\href@noop {} {\bibfield
		{journal} {\bibinfo  {journal} {Physical Review X}\ }\textbf {\bibinfo
			{volume} {5}},\ \bibinfo {pages} {041013} (\bibinfo {year}
		{2015})}\BibitemShut {NoStop}%
	\bibitem [{\citenamefont {Wang}\ \emph
		{et~al.}(2016{\natexlab{a}})\citenamefont {Wang}, \citenamefont {Lin},\ and\
		\citenamefont {Levin}}]{BulkBoudnary2}%
	\BibitemOpen
	\bibfield  {author} {\bibinfo {author} {\bibfnamefont {Chenjie}\ \bibnamefont
			{Wang}}, \bibinfo {author} {\bibfnamefont {Chien-Hung}\ \bibnamefont {Lin}},
		\ and\ \bibinfo {author} {\bibfnamefont {Michael}\ \bibnamefont {Levin}},\
	}\bibfield  {title} {\enquote {\bibinfo {title} {Bulk-boundary correspondence
				for three-dimensional symmetry-protected topological phases},}\ }\href@noop
	{} {\bibfield  {journal} {\bibinfo  {journal} {Physical Review X}\ }\textbf
		{\bibinfo {volume} {6}},\ \bibinfo {pages} {021015} (\bibinfo {year}
		{2016}{\natexlab{a}})}\BibitemShut {NoStop}%
	\bibitem [{\citenamefont {Doherty}\ and\ \citenamefont
		{Bartlett}(2009)}]{DBcompPhases}%
	\BibitemOpen
	\bibfield  {author} {\bibinfo {author} {\bibfnamefont {Andrew~C.}\
			\bibnamefont {Doherty}}\ and\ \bibinfo {author} {\bibfnamefont {Stephen~D.}\
			\bibnamefont {Bartlett}},\ }\bibfield  {title} {\enquote {\bibinfo {title}
			{Identifying phases of quantum many-body systems that are universal for
				quantum computation},}\ }\href {\doibase 10.1103/PhysRevLett.103.020506}
	{\bibfield  {journal} {\bibinfo  {journal} {Physical Review Letters}\
		}\textbf {\bibinfo {volume} {103}},\ \bibinfo {pages} {020506} (\bibinfo
		{year} {2009})}\BibitemShut {NoStop}%
	\bibitem [{\citenamefont {Bartlett}\ \emph {et~al.}(2010)\citenamefont
		{Bartlett}, \citenamefont {Brennen}, \citenamefont {Miyake},\ and\
		\citenamefont {Renes}}]{bartlett2010haldanePRL}%
	\BibitemOpen
	\bibfield  {author} {\bibinfo {author} {\bibfnamefont {Stephen~D.}\
			\bibnamefont {Bartlett}}, \bibinfo {author} {\bibfnamefont {Gavin~K.}\
			\bibnamefont {Brennen}}, \bibinfo {author} {\bibfnamefont {Akimasa}\
			\bibnamefont {Miyake}}, \ and\ \bibinfo {author} {\bibfnamefont {Joseph~M.}\
			\bibnamefont {Renes}},\ }\bibfield  {title} {\enquote {\bibinfo {title}
			{Quantum computational renormalization in the haldane phase},}\ }\href
	{\doibase 10.1103/PhysRevLett.105.110502} {\bibfield  {journal} {\bibinfo
			{journal} {Physical Review Letters}\ }\textbf {\bibinfo {volume} {105}},\
		\bibinfo {pages} {110502} (\bibinfo {year} {2010})}\BibitemShut {NoStop}%
	\bibitem [{\citenamefont {Miyake}(2010)}]{Miy10}%
	\BibitemOpen
	\bibfield  {author} {\bibinfo {author} {\bibfnamefont {Akimasa}\ \bibnamefont
			{Miyake}},\ }\bibfield  {title} {\enquote {\bibinfo {title} {Quantum
				computation on the edge of a symmetry-protected topological order},}\ }\href
	{\doibase 10.1103/PhysRevLett.105.040501} {\bibfield  {journal} {\bibinfo
			{journal} {Physical Review Letters}\ }\textbf {\bibinfo {volume} {105}},\
		\bibinfo {pages} {040501} (\bibinfo {year} {2010})}\BibitemShut {NoStop}%
	\bibitem [{\citenamefont {Williamson}\ and\ \citenamefont
		{Bartlett}(2015)}]{williamson2015symmetry}%
	\BibitemOpen
	\bibfield  {author} {\bibinfo {author} {\bibfnamefont {Dominic~J}\
			\bibnamefont {Williamson}}\ and\ \bibinfo {author} {\bibfnamefont
			{Stephen~D}\ \bibnamefont {Bartlett}},\ }\bibfield  {title} {\enquote
		{\bibinfo {title} {Symmetry-protected adiabatic quantum transistors},}\
	}\href@noop {} {\bibfield  {journal} {\bibinfo  {journal} {New Journal of
				Physics}\ }\textbf {\bibinfo {volume} {17}},\ \bibinfo {pages} {053019}
		(\bibinfo {year} {2015})}\BibitemShut {NoStop}%
	\bibitem [{\citenamefont {Bombin}(2010)}]{bombin2010topological}%
	\BibitemOpen
	\bibfield  {author} {\bibinfo {author} {\bibfnamefont {Hector}\ \bibnamefont
			{Bombin}},\ }\bibfield  {title} {\enquote {\bibinfo {title} {Topological
				order with a twist: Ising anyons from an abelian model},}\ }\href@noop {}
	{\bibfield  {journal} {\bibinfo  {journal} {Physical Review Letters}\
		}\textbf {\bibinfo {volume} {105}},\ \bibinfo {pages} {030403} (\bibinfo
		{year} {2010})}\BibitemShut {NoStop}%
	\bibitem [{\citenamefont {Else}\ \emph
		{et~al.}(2012{\natexlab{a}})\citenamefont {Else}, \citenamefont {Schwarz},
		\citenamefont {Bartlett},\ and\ \citenamefont
		{Doherty}}]{else2012symmetryPRL}%
	\BibitemOpen
	\bibfield  {author} {\bibinfo {author} {\bibfnamefont {Dominic~V}\
			\bibnamefont {Else}}, \bibinfo {author} {\bibfnamefont {Ilai}\ \bibnamefont
			{Schwarz}}, \bibinfo {author} {\bibfnamefont {Stephen~D}\ \bibnamefont
			{Bartlett}}, \ and\ \bibinfo {author} {\bibfnamefont {Andrew~C}\ \bibnamefont
			{Doherty}},\ }\bibfield  {title} {\enquote {\bibinfo {title}
			{Symmetry-protected phases for measurement-based quantum computation},}\
	}\href@noop {} {\bibfield  {journal} {\bibinfo  {journal} {Physical Review
				Letters}\ }\textbf {\bibinfo {volume} {108}},\ \bibinfo {pages} {240505}
		(\bibinfo {year} {2012}{\natexlab{a}})}\BibitemShut {NoStop}%
	\bibitem [{\citenamefont {Else}\ \emph
		{et~al.}(2012{\natexlab{b}})\citenamefont {Else}, \citenamefont {Bartlett},\
		and\ \citenamefont {Doherty}}]{else2012symmetry}%
	\BibitemOpen
	\bibfield  {author} {\bibinfo {author} {\bibfnamefont {Dominic~V}\
			\bibnamefont {Else}}, \bibinfo {author} {\bibfnamefont {Stephen~D}\
			\bibnamefont {Bartlett}}, \ and\ \bibinfo {author} {\bibfnamefont {Andrew~C}\
			\bibnamefont {Doherty}},\ }\bibfield  {title} {\enquote {\bibinfo {title}
			{Symmetry protection of measurement-based quantum computation in ground
				states},}\ }\href@noop {} {\bibfield  {journal} {\bibinfo  {journal} {New
				Journal of Physics}\ }\textbf {\bibinfo {volume} {14}},\ \bibinfo {pages}
		{113016} (\bibinfo {year} {2012}{\natexlab{b}})}\BibitemShut {NoStop}%
	\bibitem [{\citenamefont {Darmawan}\ \emph {et~al.}(2012)\citenamefont
		{Darmawan}, \citenamefont {Brennen},\ and\ \citenamefont
		{Bartlett}}]{darmawan2012}%
	\BibitemOpen
	\bibfield  {author} {\bibinfo {author} {\bibfnamefont {Andrew~S}\
			\bibnamefont {Darmawan}}, \bibinfo {author} {\bibfnamefont {Gavin~K}\
			\bibnamefont {Brennen}}, \ and\ \bibinfo {author} {\bibfnamefont {Stephen~D}\
			\bibnamefont {Bartlett}},\ }\bibfield  {title} {\enquote {\bibinfo {title}
			{Measurement-based quantum computation in a two-dimensional phase of
				matter},}\ }\href@noop {} {\bibfield  {journal} {\bibinfo  {journal} {New
				Journal of Physics}\ }\textbf {\bibinfo {volume} {14}},\ \bibinfo {pages}
		{013023} (\bibinfo {year} {2012})}\BibitemShut {NoStop}%
	\bibitem [{\citenamefont {Barkeshli}\ \emph {et~al.}(2013)\citenamefont
		{Barkeshli}, \citenamefont {Jian},\ and\ \citenamefont
		{Qi}}]{barkeshli2013twist}%
	\BibitemOpen
	\bibfield  {author} {\bibinfo {author} {\bibfnamefont {Maissam}\ \bibnamefont
			{Barkeshli}}, \bibinfo {author} {\bibfnamefont {Chao-Ming}\ \bibnamefont
			{Jian}}, \ and\ \bibinfo {author} {\bibfnamefont {Xiao-Liang}\ \bibnamefont
			{Qi}},\ }\bibfield  {title} {\enquote {\bibinfo {title} {Twist defects and
				projective non-abelian braiding statistics},}\ }\href@noop {} {\bibfield
		{journal} {\bibinfo  {journal} {Physical Review B}\ }\textbf {\bibinfo
			{volume} {87}},\ \bibinfo {pages} {045130} (\bibinfo {year}
		{2013})}\BibitemShut {NoStop}%
	\bibitem [{\citenamefont {Webster}\ and\ \citenamefont
		{Bartlett}(2018)}]{webster2018locality}%
	\BibitemOpen
	\bibfield  {author} {\bibinfo {author} {\bibfnamefont {Paul}\ \bibnamefont
			{Webster}}\ and\ \bibinfo {author} {\bibfnamefont {Stephen~D}\ \bibnamefont
			{Bartlett}},\ }\bibfield  {title} {\enquote {\bibinfo {title}
			{Locality-preserving logical operators in topological stabilizer codes},}\
	}\href@noop {} {\bibfield  {journal} {\bibinfo  {journal} {Physical Review
				A}\ }\textbf {\bibinfo {volume} {97}},\ \bibinfo {pages} {012330} (\bibinfo
		{year} {2018})}\BibitemShut {NoStop}%
	\bibitem [{\citenamefont {Renes}\ \emph {et~al.}(2013)\citenamefont {Renes},
		\citenamefont {Miyake}, \citenamefont {Brennen},\ and\ \citenamefont
		{Bartlett}}]{renes2013}%
	\BibitemOpen
	\bibfield  {author} {\bibinfo {author} {\bibfnamefont {Joseph~M}\
			\bibnamefont {Renes}}, \bibinfo {author} {\bibfnamefont {Akimasa}\
			\bibnamefont {Miyake}}, \bibinfo {author} {\bibfnamefont {Gavin~K}\
			\bibnamefont {Brennen}}, \ and\ \bibinfo {author} {\bibfnamefont {Stephen~D}\
			\bibnamefont {Bartlett}},\ }\bibfield  {title} {\enquote {\bibinfo {title}
			{Holonomic quantum computing in symmetry-protected ground states of spin
				chains},}\ }\href@noop {} {\bibfield  {journal} {\bibinfo  {journal} {New
				Journal of Physics}\ }\textbf {\bibinfo {volume} {15}},\ \bibinfo {pages}
		{025020} (\bibinfo {year} {2013})}\BibitemShut {NoStop}%
	\bibitem [{\citenamefont {Miller}\ and\ \citenamefont
		{Miyake}(2015)}]{MillerMiyake15}%
	\BibitemOpen
	\bibfield  {author} {\bibinfo {author} {\bibfnamefont {Jacob}\ \bibnamefont
			{Miller}}\ and\ \bibinfo {author} {\bibfnamefont {Akimasa}\ \bibnamefont
			{Miyake}},\ }\bibfield  {title} {\enquote {\bibinfo {title} {Resource quality
				of a symmetry-protected topologically ordered phase for quantum
				computation},}\ }\href {\doibase 10.1103/PhysRevLett.114.120506} {\bibfield
		{journal} {\bibinfo  {journal} {Physical Review Letters}\ }\textbf {\bibinfo
			{volume} {114}},\ \bibinfo {pages} {120506} (\bibinfo {year}
		{2015})}\BibitemShut {NoStop}%
	\bibitem [{\citenamefont {Nautrup}\ and\ \citenamefont {Wei}(2015)}]{NWMBQC}%
	\BibitemOpen
	\bibfield  {author} {\bibinfo {author} {\bibfnamefont {Hendrik~Poulsen}\
			\bibnamefont {Nautrup}}\ and\ \bibinfo {author} {\bibfnamefont {Tzu-Chieh}\
			\bibnamefont {Wei}},\ }\bibfield  {title} {\enquote {\bibinfo {title}
			{Symmetry-protected topologically ordered states for universal quantum
				computation},}\ }\href {\doibase 10.1103/PhysRevA.92.052309} {\bibfield
		{journal} {\bibinfo  {journal} {Physical Review A}\ }\textbf {\bibinfo
			{volume} {92}},\ \bibinfo {pages} {052309} (\bibinfo {year}
		{2015})}\BibitemShut {NoStop}%
	\bibitem [{\citenamefont {Yoshida}(2015)}]{YoshidaCCSPT}%
	\BibitemOpen
	\bibfield  {author} {\bibinfo {author} {\bibfnamefont {Beni}\ \bibnamefont
			{Yoshida}},\ }\bibfield  {title} {\enquote {\bibinfo {title} {Topological
				color code and symmetry-protected topological phases},}\ }\href@noop {}
	{\bibfield  {journal} {\bibinfo  {journal} {Physical Review B}\ }\textbf
		{\bibinfo {volume} {91}},\ \bibinfo {pages} {245131} (\bibinfo {year}
		{2015})}\BibitemShut {NoStop}%
	\bibitem [{\citenamefont {Yoshida}(2016)}]{YoshidaHigher}%
	\BibitemOpen
	\bibfield  {author} {\bibinfo {author} {\bibfnamefont {Beni}\ \bibnamefont
			{Yoshida}},\ }\bibfield  {title} {\enquote {\bibinfo {title} {Topological
				phases with generalized global symmetries},}\ }\href@noop {} {\bibfield
		{journal} {\bibinfo  {journal} {Physical Review B}\ }\textbf {\bibinfo
			{volume} {93}},\ \bibinfo {pages} {155131} (\bibinfo {year}
		{2016})}\BibitemShut {NoStop}%
	\bibitem [{\citenamefont {Miller}\ and\ \citenamefont
		{Miyake}(2016)}]{miller2016}%
	\BibitemOpen
	\bibfield  {author} {\bibinfo {author} {\bibfnamefont {Jacob}\ \bibnamefont
			{Miller}}\ and\ \bibinfo {author} {\bibfnamefont {Akimasa}\ \bibnamefont
			{Miyake}},\ }\bibfield  {title} {\enquote {\bibinfo {title} {Hierarchy of
				universal entanglement in {2D} measurement-based quantum computation},}\
	}\href {\doibase 10.1038/npjqi.2016.36} {\bibfield  {journal} {\bibinfo
			{journal} {npj Quantum Information}\ }\textbf {\bibinfo {volume} {2}},\
		\bibinfo {pages} {16036} (\bibinfo {year} {2016})}\BibitemShut {NoStop}%
	\bibitem [{\citenamefont {Brown}\ \emph {et~al.}(2017)\citenamefont {Brown},
		\citenamefont {Laubscher}, \citenamefont {Kesselring},\ and\ \citenamefont
		{Wootton}}]{brown2017poking}%
	\BibitemOpen
	\bibfield  {author} {\bibinfo {author} {\bibfnamefont {Benjamin~J}\
			\bibnamefont {Brown}}, \bibinfo {author} {\bibfnamefont {Katharina}\
			\bibnamefont {Laubscher}}, \bibinfo {author} {\bibfnamefont {Markus~S}\
			\bibnamefont {Kesselring}}, \ and\ \bibinfo {author} {\bibfnamefont
			{James~R}\ \bibnamefont {Wootton}},\ }\bibfield  {title} {\enquote {\bibinfo
			{title} {Poking holes and cutting corners to achieve {C}lifford gates with
				the surface code},}\ }\href@noop {} {\bibfield  {journal} {\bibinfo
			{journal} {Physical Review X}\ }\textbf {\bibinfo {volume} {7}},\ \bibinfo
		{pages} {021029} (\bibinfo {year} {2017})}\BibitemShut {NoStop}%
	\bibitem [{\citenamefont {Cong}\ \emph {et~al.}(2017)\citenamefont {Cong},
		\citenamefont {Cheng},\ and\ \citenamefont {Wang}}]{cong2017universal}%
	\BibitemOpen
	\bibfield  {author} {\bibinfo {author} {\bibfnamefont {Iris}\ \bibnamefont
			{Cong}}, \bibinfo {author} {\bibfnamefont {Meng}\ \bibnamefont {Cheng}}, \
		and\ \bibinfo {author} {\bibfnamefont {Zhenghan}\ \bibnamefont {Wang}},\
	}\bibfield  {title} {\enquote {\bibinfo {title} {Universal quantum
				computation with gapped boundaries},}\ }\href@noop {} {\bibfield  {journal}
		{\bibinfo  {journal} {Physical Review Letters}\ }\textbf {\bibinfo {volume}
			{119}},\ \bibinfo {pages} {170504} (\bibinfo {year} {2017})}\BibitemShut
	{NoStop}%
	\bibitem [{\citenamefont {Roberts}\ \emph {et~al.}(2017)\citenamefont
		{Roberts}, \citenamefont {Yoshida}, \citenamefont {Kubica},\ and\
		\citenamefont {Bartlett}}]{ThermalSPT}%
	\BibitemOpen
	\bibfield  {author} {\bibinfo {author} {\bibfnamefont {Sam}\ \bibnamefont
			{Roberts}}, \bibinfo {author} {\bibfnamefont {Beni}\ \bibnamefont {Yoshida}},
		\bibinfo {author} {\bibfnamefont {Aleksander}\ \bibnamefont {Kubica}}, \ and\
		\bibinfo {author} {\bibfnamefont {Stephen~D}\ \bibnamefont {Bartlett}},\
	}\bibfield  {title} {\enquote {\bibinfo {title} {Symmetry-protected
				topological order at nonzero temperature},}\ }\href@noop {} {\bibfield
		{journal} {\bibinfo  {journal} {Physical Review A}\ }\textbf {\bibinfo
			{volume} {96}},\ \bibinfo {pages} {022306} (\bibinfo {year}
		{2017})}\BibitemShut {NoStop}%
	\bibitem [{\citenamefont {Zhu}\ \emph {et~al.}(2018)\citenamefont {Zhu},
		\citenamefont {Hafezi},\ and\ \citenamefont {Barkeshli}}]{zhu2018quantum}%
	\BibitemOpen
	\bibfield  {author} {\bibinfo {author} {\bibfnamefont {Guanyu}\ \bibnamefont
			{Zhu}}, \bibinfo {author} {\bibfnamefont {Mohammad}\ \bibnamefont {Hafezi}},
		\ and\ \bibinfo {author} {\bibfnamefont {Maissam}\ \bibnamefont
			{Barkeshli}},\ }\bibfield  {title} {\enquote {\bibinfo {title} {Quantum
				origami: Applying fault-tolerant transversal gates and measuring topological
				order},}\ }\href@noop {} {\bibfield  {journal} {\bibinfo  {journal} {Physical Review Research}\ }\textbf {\bibinfo
			{volume} {2}},\ \bibinfo {pages} {013285}  (\bibinfo {year} {2020})}\BibitemShut
	{NoStop}%
	\bibitem [{\citenamefont {Raussendorf}\ \emph {et~al.}(2018)\citenamefont
		{Raussendorf}, \citenamefont {Okay}, \citenamefont {Wang}, \citenamefont
		{Stephen},\ and\ \citenamefont {Nautrup}}]{raussendorf2018computationally}%
	\BibitemOpen
	\bibfield  {author} {\bibinfo {author} {\bibfnamefont {Robert}\ \bibnamefont
			{Raussendorf}}, \bibinfo {author} {\bibfnamefont {Cihan}\ \bibnamefont
			{Okay}}, \bibinfo {author} {\bibfnamefont {Dong-Sheng}\ \bibnamefont {Wang}},
		\bibinfo {author} {\bibfnamefont {David~T}\ \bibnamefont {Stephen}}, \ and\
		\bibinfo {author} {\bibfnamefont {Hendrik~Poulsen}\ \bibnamefont {Nautrup}},\
	}\bibfield  {title} {\enquote {\bibinfo {title} {A computationally universal
				phase of quantum matter},}\ }\href@noop {} {\bibfield  {journal} {\bibinfo
			{journal} {Physical Review Letters}\ }\textbf
		{\bibinfo {volume} {112}},\ \bibinfo {pages} {090501} (\bibinfo {year}
		{2019})}\BibitemShut {NoStop}%
	\bibitem [{\citenamefont {Bartlett}\ \emph {et~al.}(2018)\citenamefont
		{Bartlett}, \citenamefont {Brennen},\ and\ \citenamefont
		{Miyake}}]{bartlett2017robust}%
	\BibitemOpen
	\bibfield  {author} {\bibinfo {author} {\bibfnamefont {Stephen~D}\
			\bibnamefont {Bartlett}}, \bibinfo {author} {\bibfnamefont {Gavin~K}\
			\bibnamefont {Brennen}}, \ and\ \bibinfo {author} {\bibfnamefont {Akimasa}\
			\bibnamefont {Miyake}},\ }\bibfield  {title} {\enquote {\bibinfo {title}
			{Robust symmetry-protected metrology with the haldane phase},}\ }\href
	{http://stacks.iop.org/2058-9565/3/i=1/a=014010} {\bibfield  {journal}
		{\bibinfo  {journal} {Quantum Science and Technology}\ }\textbf {\bibinfo
			{volume} {3}},\ \bibinfo {pages} {014010} (\bibinfo {year}
		{2018})}\BibitemShut {NoStop}%
	\bibitem [{\citenamefont {Raussendorf}\ \emph {et~al.}(2005)\citenamefont
		{Raussendorf}, \citenamefont {Bravyi},\ and\ \citenamefont
		{Harrington}}]{RBH}%
	\BibitemOpen
	\bibfield  {author} {\bibinfo {author} {\bibfnamefont {Robert}\ \bibnamefont
			{Raussendorf}}, \bibinfo {author} {\bibfnamefont {Sergey}\ \bibnamefont
			{Bravyi}}, \ and\ \bibinfo {author} {\bibfnamefont {Jim}\ \bibnamefont
			{Harrington}},\ }\bibfield  {title} {\enquote {\bibinfo {title} {Long-range
				quantum entanglement in noisy cluster states},}\ }\href {\doibase
		10.1103/PhysRevA.71.062313} {\bibfield  {journal} {\bibinfo  {journal}
			{Physical Review A}\ }\textbf {\bibinfo {volume} {71}},\ \bibinfo {pages}
		{062313} (\bibinfo {year} {2005})}\BibitemShut {NoStop}%
	\bibitem [{\citenamefont {Bomb{\'\i}n}(2015{\natexlab{a}})}]{Bombin15}%
	\BibitemOpen
	\bibfield  {author} {\bibinfo {author} {\bibfnamefont {H{\'e}ctor}\
			\bibnamefont {Bomb{\'\i}n}},\ }\bibfield  {title} {\enquote {\bibinfo {title}
			{Gauge color codes: optimal transversal gates and gauge fixing in topological
				stabilizer codes},}\ }\href@noop {} {\bibfield  {journal} {\bibinfo
			{journal} {New Journal of Physics}\ }\textbf {\bibinfo {volume} {17}},\
		\bibinfo {pages} {083002} (\bibinfo {year} {2015}{\natexlab{a}})}\BibitemShut
	{NoStop}%
	\bibitem [{\citenamefont {Bomb{\'\i}n}(2015{\natexlab{b}})}]{SingleShot}%
	\BibitemOpen
	\bibfield  {author} {\bibinfo {author} {\bibfnamefont {H{\'e}ctor}\
			\bibnamefont {Bomb{\'\i}n}},\ }\bibfield  {title} {\enquote {\bibinfo {title}
			{Single-shot fault-tolerant quantum error correction},}\ }\href@noop {}
	{\bibfield  {journal} {\bibinfo  {journal} {Physical Review X}\ }\textbf
		{\bibinfo {volume} {5}},\ \bibinfo {pages} {031043} (\bibinfo {year}
		{2015}{\natexlab{b}})}\BibitemShut {NoStop}%
	\bibitem [{\citenamefont {Temme}(2017)}]{Temme2014}%
	\BibitemOpen
	\bibfield  {author} {\bibinfo {author} {\bibfnamefont {Kristan}\ \bibnamefont
			{Temme}},\ }\bibfield  {title} {\enquote {\bibinfo {title} {Thermalization
				time bounds for {P}auli stabilizer hamiltonians},}\ }\href@noop {} {\bibfield
		{journal} {\bibinfo  {journal} {Communications in Mathematical Physics}\
		}\textbf {\bibinfo {volume} {350}},\ \bibinfo {pages} {603--637} (\bibinfo
		{year} {2017})}\BibitemShut {NoStop}%
	\bibitem [{\citenamefont {Temme}\ and\ \citenamefont
		{Kastoryano}(2015)}]{Temme2015}%
	\BibitemOpen
	\bibfield  {author} {\bibinfo {author} {\bibfnamefont {Kristan}\ \bibnamefont
			{Temme}}\ and\ \bibinfo {author} {\bibfnamefont {Michael~J}\ \bibnamefont
			{Kastoryano}},\ }\bibfield  {title} {\enquote {\bibinfo {title} {How fast do
				stabilizer hamiltonians thermalize?}}\ }\href@noop {} {\bibfield  {journal}
		{\bibinfo  {journal} {arXiv preprint arXiv:1505.07811}\ } (\bibinfo {year}
		{2015})}\BibitemShut {NoStop}%
	\bibitem [{\citenamefont {Poulin}(2005)}]{PoulinSubsystem}%
	\BibitemOpen
	\bibfield  {author} {\bibinfo {author} {\bibfnamefont {David}\ \bibnamefont
			{Poulin}},\ }\bibfield  {title} {\enquote {\bibinfo {title} {Stabilizer
				formalism for operator quantum error correction},}\ }\href@noop {} {\bibfield
		{journal} {\bibinfo  {journal} {Physical Review Letters}\ }\textbf {\bibinfo
			{volume} {95}},\ \bibinfo {pages} {230504} (\bibinfo {year}
		{2005})}\BibitemShut {NoStop}%
	\bibitem [{\citenamefont {Bacon}(2006)}]{3DBaconShor}%
	\BibitemOpen
	\bibfield  {author} {\bibinfo {author} {\bibfnamefont {Dave}\ \bibnamefont
			{Bacon}},\ }\bibfield  {title} {\enquote {\bibinfo {title} {Operator quantum
				error-correcting subsystems for self-correcting quantum memories},}\
	}\href@noop {} {\bibfield  {journal} {\bibinfo  {journal} {Physical Review
				A}\ }\textbf {\bibinfo {volume} {73}},\ \bibinfo {pages} {012340} (\bibinfo
		{year} {2006})}\BibitemShut {NoStop}%
	\bibitem [{\citenamefont {Davies}(1976)}]{Davies1}%
	\BibitemOpen
	\bibfield  {author} {\bibinfo {author} {\bibfnamefont {Edward~Brian}\
			\bibnamefont {Davies}},\ }\href@noop {} {\emph {\bibinfo {title} {Quantum
				theory of open systems}}}\ (\bibinfo  {publisher} {London: Academic Press},\
	\bibinfo {year} {1976})\BibitemShut {NoStop}%
	\bibitem [{\citenamefont {Davies}(1979)}]{Davies2}%
	\BibitemOpen
	\bibfield  {author} {\bibinfo {author} {\bibfnamefont {E~Brian}\ \bibnamefont
			{Davies}},\ }\bibfield  {title} {\enquote {\bibinfo {title} {Generators of
				dynamical semigroups},}\ }\href@noop {} {\bibfield  {journal} {\bibinfo
			{journal} {Journal of Functional Analysis}\ }\textbf {\bibinfo {volume}
			{34}},\ \bibinfo {pages} {421--432} (\bibinfo {year} {1979})}\BibitemShut
	{NoStop}%
	\bibitem [{\citenamefont {Gaiotto}\ \emph {et~al.}(2015)\citenamefont
		{Gaiotto}, \citenamefont {Kapustin}, \citenamefont {Seiberg},\ and\
		\citenamefont {Willett}}]{HigherForm1}%
	\BibitemOpen
	\bibfield  {author} {\bibinfo {author} {\bibfnamefont {Davide}\ \bibnamefont
			{Gaiotto}}, \bibinfo {author} {\bibfnamefont {Anton}\ \bibnamefont
			{Kapustin}}, \bibinfo {author} {\bibfnamefont {Nathan}\ \bibnamefont
			{Seiberg}}, \ and\ \bibinfo {author} {\bibfnamefont {Brian}\ \bibnamefont
			{Willett}},\ }\bibfield  {title} {\enquote {\bibinfo {title} {Generalized
				global symmetries},}\ }\href@noop {} {\bibfield  {journal} {\bibinfo
			{journal} {Journal of High Energy Physics}\ }\textbf {\bibinfo {volume}
			{2015}},\ \bibinfo {pages} {172} (\bibinfo {year} {2015})}\BibitemShut
	{NoStop}%
	\bibitem [{\citenamefont {Thorngren}\ and\ \citenamefont {von
			Keyserlingk}(2015)}]{HigherForm2}%
	\BibitemOpen
	\bibfield  {author} {\bibinfo {author} {\bibfnamefont {Ryan}\ \bibnamefont
			{Thorngren}}\ and\ \bibinfo {author} {\bibfnamefont {Curt}\ \bibnamefont {von
				Keyserlingk}},\ }\bibfield  {title} {\enquote {\bibinfo {title} {Higher
				{SPT}'s and a generalization of anomaly in-flow},}\ }\href@noop {} {\bibfield
		{journal} {\bibinfo  {journal} {arXiv preprint arXiv:1511.02929}\ }
		(\bibinfo {year} {2015})}\BibitemShut {NoStop}%
	\bibitem [{\citenamefont {Kapustin}\ and\ \citenamefont
		{Thorngren}(2017)}]{kapustin2017higher}%
	\BibitemOpen
	\bibfield  {author} {\bibinfo {author} {\bibfnamefont {Anton}\ \bibnamefont
			{Kapustin}}\ and\ \bibinfo {author} {\bibfnamefont {Ryan}\ \bibnamefont
			{Thorngren}},\ }\bibfield  {title} {\enquote {\bibinfo {title} {Higher
				symmetry and gapped phases of gauge theories},}\ }in\ \href@noop {} {\emph
		{\bibinfo {booktitle} {Algebra, Geometry, and Physics in the 21st Century}}}\
	(\bibinfo  {publisher} {Springer},\ \bibinfo {year} {2017})\ pp.\ \bibinfo
	{pages} {177--202}\BibitemShut {NoStop}%
	\bibitem [{\citenamefont {Lake}(2018)}]{lake2018higher}%
	\BibitemOpen
	\bibfield  {author} {\bibinfo {author} {\bibfnamefont {Ethan}\ \bibnamefont
			{Lake}},\ }\bibfield  {title} {\enquote {\bibinfo {title} {Higher-form
				symmetries and spontaneous symmetry breaking},}\ }\href@noop {} {\bibfield
		{journal} {\bibinfo  {journal} {arXiv preprint arXiv:1802.07747}\ } (\bibinfo
		{year} {2018})}\BibitemShut {NoStop}%
	\bibitem [{\citenamefont {Wang}\ and\ \citenamefont
		{Senthil}(2013{\natexlab{b}})}]{wang2013boson}%
	\BibitemOpen
	\bibfield  {author} {\bibinfo {author} {\bibfnamefont {Chong}\ \bibnamefont
			{Wang}}\ and\ \bibinfo {author} {\bibfnamefont {T}~\bibnamefont {Senthil}},\
	}\bibfield  {title} {\enquote {\bibinfo {title} {Boson topological
				insulators: A window into highly entangled quantum phases},}\ }\href@noop {}
	{\bibfield  {journal} {\bibinfo  {journal} {Physical Review B}\ }\textbf
		{\bibinfo {volume} {87}},\ \bibinfo {pages} {235122} (\bibinfo {year}
		{2013}{\natexlab{b}})}\BibitemShut {NoStop}%
	\bibitem [{\citenamefont {Vishwanath}\ and\ \citenamefont
		{Senthil}(2013{\natexlab{b}})}]{vishwanath2013physics}%
	\BibitemOpen
	\bibfield  {author} {\bibinfo {author} {\bibfnamefont {Ashvin}\ \bibnamefont
			{Vishwanath}}\ and\ \bibinfo {author} {\bibfnamefont {T}~\bibnamefont
			{Senthil}},\ }\bibfield  {title} {\enquote {\bibinfo {title} {Physics of
				three-dimensional bosonic topological insulators: Surface-deconfined
				criticality and quantized magnetoelectric effect},}\ }\href@noop {}
	{\bibfield  {journal} {\bibinfo  {journal} {Physical Review X}\ }\textbf
		{\bibinfo {volume} {3}},\ \bibinfo {pages} {011016} (\bibinfo {year}
		{2013}{\natexlab{b}})}\BibitemShut {NoStop}%
	\bibitem [{\citenamefont {Burnell}\ \emph
		{et~al.}(2014{\natexlab{b}})\citenamefont {Burnell}, \citenamefont {Chen},
		\citenamefont {Fidkowski},\ and\ \citenamefont
		{Vishwanath}}]{burnell2014exactly}%
	\BibitemOpen
	\bibfield  {author} {\bibinfo {author} {\bibfnamefont {Fiona~J}\ \bibnamefont
			{Burnell}}, \bibinfo {author} {\bibfnamefont {Xie}\ \bibnamefont {Chen}},
		\bibinfo {author} {\bibfnamefont {Lukasz}\ \bibnamefont {Fidkowski}}, \ and\
		\bibinfo {author} {\bibfnamefont {Ashvin}\ \bibnamefont {Vishwanath}},\
	}\bibfield  {title} {\enquote {\bibinfo {title} {Exactly soluble model of a
				three-dimensional symmetry-protected topological phase of bosons with surface
				topological order},}\ }\href@noop {} {\bibfield  {journal} {\bibinfo
			{journal} {Physical Review B}\ }\textbf {\bibinfo {volume} {90}},\ \bibinfo
		{pages} {245122} (\bibinfo {year} {2014}{\natexlab{b}})}\BibitemShut
	{NoStop}%
	\bibitem [{\citenamefont {Kapustin}(2014)}]{kapustin2014symmetry}%
	\BibitemOpen
	\bibfield  {author} {\bibinfo {author} {\bibfnamefont {Anton}\ \bibnamefont
			{Kapustin}},\ }\bibfield  {title} {\enquote {\bibinfo {title} {Symmetry
				protected topological phases, anomalies, and cobordisms: beyond group
				cohomology},}\ }\href@noop {} {\bibfield  {journal} {\bibinfo  {journal}
			{arXiv preprint arXiv:1403.1467}\ } (\bibinfo {year} {2014})}\BibitemShut
	{NoStop}%
	\bibitem [{\citenamefont {Etingof}\ \emph {et~al.}(2009)\citenamefont
		{Etingof}, \citenamefont {Nikshych}, \citenamefont {Ostrik},\ and\
		\citenamefont {Meir}}]{H4Obsturction}%
	\BibitemOpen
	\bibfield  {author} {\bibinfo {author} {\bibfnamefont {Pavel}\ \bibnamefont
			{Etingof}}, \bibinfo {author} {\bibfnamefont {Dmitri}\ \bibnamefont
			{Nikshych}}, \bibinfo {author} {\bibfnamefont {Victor}\ \bibnamefont
			{Ostrik}}, \ and\ \bibinfo {author} {\bibfnamefont {Ehud}\ \bibnamefont
			{Meir}},\ }\bibfield  {title} {\enquote {\bibinfo {title} {Fusion categories
				and homotopy theory},}\ }\href@noop {} {\bibfield  {journal} {\bibinfo
			{journal} {arXiv preprint arXiv:0909.3140}\ } (\bibinfo {year}
		{2009})}\BibitemShut {NoStop}%
	\bibitem [{\citenamefont {Else}\ and\ \citenamefont
		{Nayak}(2014)}]{AnomalousBoundary}%
	\BibitemOpen
	\bibfield  {author} {\bibinfo {author} {\bibfnamefont {Dominic~V}\
			\bibnamefont {Else}}\ and\ \bibinfo {author} {\bibfnamefont {Chetan}\
			\bibnamefont {Nayak}},\ }\bibfield  {title} {\enquote {\bibinfo {title}
			{Classifying symmetry-protected topological phases through the anomalous
				action of the symmetry on the edge},}\ }\href@noop {} {\bibfield  {journal}
		{\bibinfo  {journal} {Physical Review B}\ }\textbf {\bibinfo {volume} {90}},\
		\bibinfo {pages} {235137} (\bibinfo {year} {2014})}\BibitemShut {NoStop}%
	\bibitem [{\citenamefont {Hastings}\ \emph {et~al.}(2014)\citenamefont
	{Hastings}, \citenamefont {Watson},\ and\ \citenamefont
	{Melko}}]{hastings2014self}%
	\BibitemOpen
	\bibfield  {author} {\bibinfo {author} {\bibfnamefont {Matthew~B}\
			\bibnamefont {Hastings}}, \bibinfo {author} {\bibfnamefont {Grant~H}\
			\bibnamefont {Watson}}, \ and\ \bibinfo {author} {\bibfnamefont {Roger~G}\
			\bibnamefont {Melko}},\ }\bibfield  {title} {\enquote {\bibinfo {title}
			{Self-correcting quantum memories beyond the percolation threshold},}\
	}\href@noop {} {\bibfield  {journal} {\bibinfo  {journal} {Physical review
				letters}\ }\textbf {\bibinfo {volume} {112}},\ \bibinfo {pages} {070501}
		(\bibinfo {year} {2014})}\BibitemShut {NoStop}%
	\bibitem [{\citenamefont {Castelnovo}\ and\ \citenamefont
		{Chamon}(2008)}]{castelnovo2008topological}%
	\BibitemOpen
	\bibfield  {author} {\bibinfo {author} {\bibfnamefont {Claudio}\ \bibnamefont
			{Castelnovo}}\ and\ \bibinfo {author} {\bibfnamefont {Claudio}\ \bibnamefont
			{Chamon}},\ }\bibfield  {title} {\enquote {\bibinfo {title} {Topological
				order in a three-dimensional toric code at finite temperature},}\ }\href@noop
	{} {\bibfield  {journal} {\bibinfo  {journal} {Physical Review B}\ }\textbf
		{\bibinfo {volume} {78}},\ \bibinfo {pages} {155120} (\bibinfo {year}
		{2008})}\BibitemShut {NoStop}%
	\bibitem [{\citenamefont {Hastings}(2011)}]{hastings2011topological}%
	\BibitemOpen
	\bibfield  {author} {\bibinfo {author} {\bibfnamefont {Matthew~B}\
			\bibnamefont {Hastings}},\ }\bibfield  {title} {\enquote {\bibinfo {title}
			{Topological order at nonzero temperature},}\ }\href@noop {} {\bibfield
		{journal} {\bibinfo  {journal} {Physical Review Letters}\ }\textbf {\bibinfo
			{volume} {107}},\ \bibinfo {pages} {210501} (\bibinfo {year}
		{2011})}\BibitemShut {NoStop}%
	\bibitem [{\citenamefont {Raussendorf}\ \emph {et~al.}(2007)\citenamefont
		{Raussendorf}, \citenamefont {Harrington},\ and\ \citenamefont
		{Goyal}}]{TopoClusterComp}%
	\BibitemOpen
	\bibfield  {author} {\bibinfo {author} {\bibfnamefont {Robert}\ \bibnamefont
			{Raussendorf}}, \bibinfo {author} {\bibfnamefont {Jim}\ \bibnamefont
			{Harrington}}, \ and\ \bibinfo {author} {\bibfnamefont {Kovid}\ \bibnamefont
			{Goyal}},\ }\bibfield  {title} {\enquote {\bibinfo {title} {Topological
				fault-tolerance in cluster state quantum computation},}\ }\href@noop {}
	{\bibfield  {journal} {\bibinfo  {journal} {New Journal of Physics}\ }\textbf
		{\bibinfo {volume} {9}},\ \bibinfo {pages} {199} (\bibinfo {year}
		{2007})}\BibitemShut {NoStop}%
	\bibitem [{\citenamefont {Raussendorf}\ \emph {et~al.}(2006)\citenamefont
		{Raussendorf}, \citenamefont {Harrington},\ and\ \citenamefont
		{Goyal}}]{Rau06}%
	\BibitemOpen
	\bibfield  {author} {\bibinfo {author} {\bibfnamefont {Robert}\ \bibnamefont
			{Raussendorf}}, \bibinfo {author} {\bibfnamefont {Jim}\ \bibnamefont
			{Harrington}}, \ and\ \bibinfo {author} {\bibfnamefont {Kovid}\ \bibnamefont
			{Goyal}},\ }\bibfield  {title} {\enquote {\bibinfo {title} {A fault-tolerant
				one-way quantum computer},}\ }\href@noop {} {\bibfield  {journal} {\bibinfo
			{journal} {Annals of Physics}\ }\textbf {\bibinfo {volume} {321}},\ \bibinfo
		{pages} {2242--2270} (\bibinfo {year} {2006})}\BibitemShut {NoStop}%
	\bibitem [{\citenamefont {Bravyi}\ and\ \citenamefont
		{Kitaev}(1998)}]{bravyi1998quantum}%
	\BibitemOpen
	\bibfield  {author} {\bibinfo {author} {\bibfnamefont {Sergey~B}\
			\bibnamefont {Bravyi}}\ and\ \bibinfo {author} {\bibfnamefont {A~Yu}\
			\bibnamefont {Kitaev}},\ }\bibfield  {title} {\enquote {\bibinfo {title}
			{Quantum codes on a lattice with boundary},}\ }\href@noop {} {\bibfield
		{journal} {\bibinfo  {journal} {arXiv preprint quant-ph/9811052}\ } (\bibinfo
		{year} {1998})}\BibitemShut {NoStop}%
	\bibitem [{\citenamefont {Bravyi}\ \emph {et~al.}(2010)\citenamefont {Bravyi},
		\citenamefont {Hastings},\ and\ \citenamefont {Michalakis}}]{TopoStability}%
	\BibitemOpen
	\bibfield  {author} {\bibinfo {author} {\bibfnamefont {Sergey}\ \bibnamefont
			{Bravyi}}, \bibinfo {author} {\bibfnamefont {Matthew~B}\ \bibnamefont
			{Hastings}}, \ and\ \bibinfo {author} {\bibfnamefont {Spyridon}\ \bibnamefont
			{Michalakis}},\ }\bibfield  {title} {\enquote {\bibinfo {title} {Topological
				quantum order: stability under local perturbations},}\ }\href@noop {}
	{\bibfield  {journal} {\bibinfo  {journal} {Journal of Mathematical Physics}\
		}\textbf {\bibinfo {volume} {51}},\ \bibinfo {pages} {093512} (\bibinfo
		{year} {2010})}\BibitemShut {NoStop}%
	\bibitem [{\citenamefont {Hatcher}(2000)}]{hatcher2000notes}%
	\BibitemOpen
	\bibfield  {author} {\bibinfo {author} {\bibfnamefont {A.}~\bibnamefont
	{Hatcher}},\ }\href@noop {} {\enquote {\bibinfo {title} {Notes on basic
		3-manifold topology},}\ } (\bibinfo {year} {2000})\BibitemShut {NoStop}%
	\bibitem [{\citenamefont {Chen}\ \emph
		{et~al.}(2011{\natexlab{c}})\citenamefont {Chen}, \citenamefont {Liu},\ and\
		\citenamefont {Wen}}]{SPTBdry1}%
	\BibitemOpen
	\bibfield  {author} {\bibinfo {author} {\bibfnamefont {Xie}\ \bibnamefont
			{Chen}}, \bibinfo {author} {\bibfnamefont {Zheng-Xin}\ \bibnamefont {Liu}}, \
		and\ \bibinfo {author} {\bibfnamefont {Xiao-Gang}\ \bibnamefont {Wen}},\
	}\bibfield  {title} {\enquote {\bibinfo {title} {Two-dimensional
				symmetry-protected topological orders and their protected gapless edge
				excitations},}\ }\href@noop {} {\bibfield  {journal} {\bibinfo  {journal}
			{Physical Review B}\ }\textbf {\bibinfo {volume} {84}},\ \bibinfo {pages}
		{235141} (\bibinfo {year} {2011}{\natexlab{c}})}\BibitemShut {NoStop}%
	\bibitem [{\citenamefont {Wang}\ \emph
		{et~al.}(2016{\natexlab{b}})\citenamefont {Wang}, \citenamefont {Lin},\ and\
		\citenamefont {Levin}}]{SPTBdry2}%
	\BibitemOpen
	\bibfield  {author} {\bibinfo {author} {\bibfnamefont {Chenjie}\ \bibnamefont
			{Wang}}, \bibinfo {author} {\bibfnamefont {Chien-Hung}\ \bibnamefont {Lin}},
		\ and\ \bibinfo {author} {\bibfnamefont {Michael}\ \bibnamefont {Levin}},\
	}\bibfield  {title} {\enquote {\bibinfo {title} {Bulk-boundary correspondence
				for three-dimensional symmetry-protected topological phases},}\ }\href@noop
	{} {\bibfield  {journal} {\bibinfo  {journal} {Physical Review X}\ }\textbf
		{\bibinfo {volume} {6}},\ \bibinfo {pages} {021015} (\bibinfo {year}
		{2016}{\natexlab{b}})}\BibitemShut {NoStop}%
	\bibitem [{\citenamefont {Bombin}\ and\ \citenamefont
		{Martin-Delgado}(2007)}]{BombinColexes1}%
	\BibitemOpen
	\bibfield  {author} {\bibinfo {author} {\bibfnamefont {H}~\bibnamefont
			{Bombin}}\ and\ \bibinfo {author} {\bibfnamefont {MA}~\bibnamefont
			{Martin-Delgado}},\ }\bibfield  {title} {\enquote {\bibinfo {title} {Exact
				topological quantum order in {D= 3} and beyond: Branyons and brane-net
				condensates},}\ }\href@noop {} {\bibfield  {journal} {\bibinfo  {journal}
			{Physical Review B}\ }\textbf {\bibinfo {volume} {75}},\ \bibinfo {pages}
		{075103} (\bibinfo {year} {2007})}\BibitemShut {NoStop}%
	\bibitem [{\citenamefont {Bomb{\'\i}n}(2016)}]{BombinJump}%
	\BibitemOpen
	\bibfield  {author} {\bibinfo {author} {\bibfnamefont {H{\'e}ctor}\
			\bibnamefont {Bomb{\'\i}n}},\ }\bibfield  {title} {\enquote {\bibinfo {title}
			{Dimensional jump in quantum error correction},}\ }\href@noop {} {\bibfield
		{journal} {\bibinfo  {journal} {New Journal of Physics}\ }\textbf {\bibinfo
			{volume} {18}},\ \bibinfo {pages} {043038} (\bibinfo {year}
		{2016})}\BibitemShut {NoStop}%
	\bibitem [{\citenamefont {Kubica}\ and\ \citenamefont
		{Yoshida}(2018)}]{KYgauging}%
	\BibitemOpen
	\bibfield  {author} {\bibinfo {author} {\bibfnamefont {Aleksander~Kubica}\
			\bibnamefont {Kubica}}\ and\ \bibinfo {author} {\bibfnamefont {Beni}\
			\bibnamefont {Yoshida}},\ }\bibfield  {title} {\enquote {\bibinfo {title}
			{Ungauging quantum error-correcting codes},}\ }\href@noop {} {\bibfield
		{journal} {\bibinfo  {journal} {arXiv preprint arXiv:1805.01836}\ } (\bibinfo
		{year} {2018})}\BibitemShut {NoStop}%
	\bibitem [{\citenamefont {Hastings}\ and\ \citenamefont
		{Wen}(2005)}]{QuasiCont}%
	\BibitemOpen
	\bibfield  {author} {\bibinfo {author} {\bibfnamefont {Matthew~B}\
			\bibnamefont {Hastings}}\ and\ \bibinfo {author} {\bibfnamefont {Xiao-Gang}\
			\bibnamefont {Wen}},\ }\bibfield  {title} {\enquote {\bibinfo {title}
			{Quasiadiabatic continuation of quantum states: The stability of topological
				ground-state degeneracy and emergent gauge invariance},}\ }\href@noop {}
	{\bibfield  {journal} {\bibinfo  {journal} {Physical Review B}\ }\textbf
		{\bibinfo {volume} {72}},\ \bibinfo {pages} {045141} (\bibinfo {year}
		{2005})}\BibitemShut {NoStop}%
	\bibitem [{\citenamefont {Levin}\ and\ \citenamefont
		{Gu}(2012)}]{levin2012braiding}%
	\BibitemOpen
	\bibfield  {author} {\bibinfo {author} {\bibfnamefont {Michael}\ \bibnamefont
			{Levin}}\ and\ \bibinfo {author} {\bibfnamefont {Zheng-Cheng}\ \bibnamefont
			{Gu}},\ }\bibfield  {title} {\enquote {\bibinfo {title} {Braiding statistics
				approach to symmetry-protected topological phases},}\ }\href@noop {}
	{\bibfield  {journal} {\bibinfo  {journal} {Physical Review B}\ }\textbf
		{\bibinfo {volume} {86}},\ \bibinfo {pages} {115109} (\bibinfo {year}
		{2012})}\BibitemShut {NoStop}%
	\bibitem [{\citenamefont {Haegeman}\ \emph {et~al.}(2015)\citenamefont
		{Haegeman}, \citenamefont {Van~Acoleyen}, \citenamefont {Schuch},
		\citenamefont {Cirac},\ and\ \citenamefont
		{Verstraete}}]{haegeman2015gauging}%
	\BibitemOpen
	\bibfield  {author} {\bibinfo {author} {\bibfnamefont {Jutho}\ \bibnamefont
			{Haegeman}}, \bibinfo {author} {\bibfnamefont {Karel}\ \bibnamefont
			{Van~Acoleyen}}, \bibinfo {author} {\bibfnamefont {Norbert}\ \bibnamefont
			{Schuch}}, \bibinfo {author} {\bibfnamefont {J~Ignacio}\ \bibnamefont
			{Cirac}}, \ and\ \bibinfo {author} {\bibfnamefont {Frank}\ \bibnamefont
			{Verstraete}},\ }\bibfield  {title} {\enquote {\bibinfo {title} {Gauging
				quantum states: from global to local symmetries in many-body systems},}\
	}\href@noop {} {\bibfield  {journal} {\bibinfo  {journal} {Physical Review
				X}\ }\textbf {\bibinfo {volume} {5}},\ \bibinfo {pages} {011024} (\bibinfo
		{year} {2015})}\BibitemShut {NoStop}%
	\bibitem [{\citenamefont {Hung}\ and\ \citenamefont
		{Wan}(2012)}]{hung2012string}%
	\BibitemOpen
	\bibfield  {author} {\bibinfo {author} {\bibfnamefont {Ling-Yan}\
			\bibnamefont {Hung}}\ and\ \bibinfo {author} {\bibfnamefont {Yidun}\
			\bibnamefont {Wan}},\ }\bibfield  {title} {\enquote {\bibinfo {title}
			{String-net models with {$Z_N$} fusion algebra},}\ }\href@noop {} {\bibfield
		{journal} {\bibinfo  {journal} {Physical Review B}\ }\textbf {\bibinfo
			{volume} {86}},\ \bibinfo {pages} {235132} (\bibinfo {year}
		{2012})}\BibitemShut {NoStop}%
	\bibitem [{\citenamefont {Williamson}\ \emph {et~al.}(2016)\citenamefont
		{Williamson}, \citenamefont {Bultinck}, \citenamefont {Mari{\"e}n},
		\citenamefont {{\c{S}}ahino{\u{g}}lu}, \citenamefont {Haegeman},\ and\
		\citenamefont {Verstraete}}]{williamson2016matrix}%
	\BibitemOpen
	\bibfield  {author} {\bibinfo {author} {\bibfnamefont {Dominic~J}\
			\bibnamefont {Williamson}}, \bibinfo {author} {\bibfnamefont {Nick}\
			\bibnamefont {Bultinck}}, \bibinfo {author} {\bibfnamefont {Michael}\
			\bibnamefont {Mari{\"e}n}}, \bibinfo {author} {\bibfnamefont {Mehmet~B}\
			\bibnamefont {{\c{S}}ahino{\u{g}}lu}}, \bibinfo {author} {\bibfnamefont
			{Jutho}\ \bibnamefont {Haegeman}}, \ and\ \bibinfo {author} {\bibfnamefont
			{Frank}\ \bibnamefont {Verstraete}},\ }\bibfield  {title} {\enquote {\bibinfo
			{title} {Matrix product operators for symmetry-protected topological phases:
				Gauging and edge theories},}\ }\href@noop {} {\bibfield  {journal} {\bibinfo
			{journal} {Physical Review B}\ }\textbf {\bibinfo {volume} {94}},\ \bibinfo
		{pages} {205150} (\bibinfo {year} {2016})}\BibitemShut {NoStop}%
	\bibitem [{\citenamefont {Williamson}\ \emph {et~al.}(2017)\citenamefont
		{Williamson}, \citenamefont {Bultinck},\ and\ \citenamefont
		{Verstraete}}]{williamson2017symmetry}%
	\BibitemOpen
	\bibfield  {author} {\bibinfo {author} {\bibfnamefont {Dominic~J}\
			\bibnamefont {Williamson}}, \bibinfo {author} {\bibfnamefont {Nick}\
			\bibnamefont {Bultinck}}, \ and\ \bibinfo {author} {\bibfnamefont {Frank}\
			\bibnamefont {Verstraete}},\ }\bibfield  {title} {\enquote {\bibinfo {title}
			{Symmetry-enriched topological order in tensor networks: Defects, gauging and
				anyon condensation},}\ }\href@noop {} {\bibfield  {journal} {\bibinfo
			{journal} {arXiv preprint arXiv:1711.07982}\ } (\bibinfo {year}
		{2017})}\BibitemShut {NoStop}%
	\bibitem [{\citenamefont {Vijay}\ \emph {et~al.}(2016)\citenamefont {Vijay},
		\citenamefont {Haah},\ and\ \citenamefont {Fu}}]{vijay2016fracton}%
	\BibitemOpen
	\bibfield  {author} {\bibinfo {author} {\bibfnamefont {Sagar}\ \bibnamefont
			{Vijay}}, \bibinfo {author} {\bibfnamefont {Jeongwan}\ \bibnamefont {Haah}},
		\ and\ \bibinfo {author} {\bibfnamefont {Liang}\ \bibnamefont {Fu}},\
	}\bibfield  {title} {\enquote {\bibinfo {title} {Fracton topological order,
				generalized lattice gauge theory, and duality},}\ }\href@noop {} {\bibfield
		{journal} {\bibinfo  {journal} {Physical Review B}\ }\textbf {\bibinfo
			{volume} {94}},\ \bibinfo {pages} {235157} (\bibinfo {year}
		{2016})}\BibitemShut {NoStop}%
	\bibitem [{\citenamefont {Williamson}(2016)}]{williamson2016fractal}%
	\BibitemOpen
	\bibfield  {author} {\bibinfo {author} {\bibfnamefont {Dominic~J}\
			\bibnamefont {Williamson}},\ }\bibfield  {title} {\enquote {\bibinfo {title}
			{Fractal symmetries: Ungauging the cubic code},}\ }\href@noop {} {\bibfield
		{journal} {\bibinfo  {journal} {Physical Review B}\ }\textbf {\bibinfo
			{volume} {94}},\ \bibinfo {pages} {155128} (\bibinfo {year}
		{2016})}\BibitemShut {NoStop}%
	\bibitem [{\citenamefont {Walker}\ and\ \citenamefont
		{Wang}(2012)}]{Walker2012}%
	\BibitemOpen
	\bibfield  {author} {\bibinfo {author} {\bibfnamefont {Kevin}\ \bibnamefont
			{Walker}}\ and\ \bibinfo {author} {\bibfnamefont {Zhenghan}\ \bibnamefont
			{Wang}},\ }\bibfield  {title} {\enquote {\bibinfo {title} {(3+1)-tqfts and
				topological insulators},}\ }\href {\doibase 10.1007/s11467-011-0194-z}
	{\bibfield  {journal} {\bibinfo  {journal} {Frontiers of Physics}\ }\textbf
		{\bibinfo {volume} {7}},\ \bibinfo {pages} {150--159} (\bibinfo {year}
		{2012})}\BibitemShut {NoStop}%
	\bibitem [{\citenamefont {von Keyserlingk}\ \emph {et~al.}(2013)\citenamefont
		{von Keyserlingk}, \citenamefont {Burnell},\ and\ \citenamefont
		{Simon}}]{vonKeyserlingk13}%
	\BibitemOpen
	\bibfield  {author} {\bibinfo {author} {\bibfnamefont {C.~W.}\ \bibnamefont
			{von Keyserlingk}}, \bibinfo {author} {\bibfnamefont {F.~J.}\ \bibnamefont
			{Burnell}}, \ and\ \bibinfo {author} {\bibfnamefont {S.~H.}\ \bibnamefont
			{Simon}},\ }\bibfield  {title} {\enquote {\bibinfo {title} {Three-dimensional
				topological lattice models with surface anyons},}\ }\href {\doibase
		10.1103/PhysRevB.87.045107} {\bibfield  {journal} {\bibinfo  {journal} {Phys.
				Rev. B}\ }\textbf {\bibinfo {volume} {87}},\ \bibinfo {pages} {045107}
		(\bibinfo {year} {2013})}\BibitemShut {NoStop}%
	\bibitem [{\citenamefont {Burnell}\ \emph {et~al.}(2013)\citenamefont
		{Burnell}, \citenamefont {von Keyserlingk},\ and\ \citenamefont
		{Simon}}]{Burnell2013}%
	\BibitemOpen
	\bibfield  {author} {\bibinfo {author} {\bibfnamefont {F.~J.}\ \bibnamefont
			{Burnell}}, \bibinfo {author} {\bibfnamefont {C.~W.}\ \bibnamefont {von
				Keyserlingk}}, \ and\ \bibinfo {author} {\bibfnamefont {S.~H.}\ \bibnamefont
			{Simon}},\ }\bibfield  {title} {\enquote {\bibinfo {title} {Phase transitions
				in three-dimensional topological lattice models with surface anyons},}\
	}\href {\doibase 10.1103/PhysRevB.88.235120} {\bibfield  {journal} {\bibinfo
			{journal} {Phys. Rev. B}\ }\textbf {\bibinfo {volume} {88}},\ \bibinfo
		{pages} {235120} (\bibinfo {year} {2013})}\BibitemShut {NoStop}%
	\bibitem [{\citenamefont {Wang}\ and\ \citenamefont {Chen}(2017)}]{Wang2017}%
	\BibitemOpen
	\bibfield  {author} {\bibinfo {author} {\bibfnamefont {Zitao}\ \bibnamefont
			{Wang}}\ and\ \bibinfo {author} {\bibfnamefont {Xie}\ \bibnamefont {Chen}},\
	}\bibfield  {title} {\enquote {\bibinfo {title} {Twisted gauge theories in
				three-dimensional walker-wang models},}\ }\href {\doibase
		10.1103/PhysRevB.95.115142} {\bibfield  {journal} {\bibinfo  {journal} {Phys.
				Rev. B}\ }\textbf {\bibinfo {volume} {95}},\ \bibinfo {pages} {115142}
		(\bibinfo {year} {2017})}\BibitemShut {NoStop}%
	\bibitem [{\citenamefont {Williamson}\ and\ \citenamefont
		{Wang}(2017)}]{williamson2017hamiltonian}%
	\BibitemOpen
	\bibfield  {author} {\bibinfo {author} {\bibfnamefont {Dominic~J}\
			\bibnamefont {Williamson}}\ and\ \bibinfo {author} {\bibfnamefont {Zhenghan}\
			\bibnamefont {Wang}},\ }\bibfield  {title} {\enquote {\bibinfo {title}
			{Hamiltonian models for topological phases of matter in three spatial
				dimensions},}\ }\href@noop {} {\bibfield  {journal} {\bibinfo  {journal}
			{Annals of Physics}\ }\textbf {\bibinfo {volume} {377}},\ \bibinfo {pages}
		{311--344} (\bibinfo {year} {2017})}\BibitemShut {NoStop}%
	\bibitem [{\citenamefont {Cui}(2016)}]{cui2016higher}%
	\BibitemOpen
	\bibfield  {author} {\bibinfo {author} {\bibfnamefont {Shawn~X}\ \bibnamefont
			{Cui}},\ }\bibfield  {title} {\enquote {\bibinfo {title} {Higher categories
				and topological quantum field theories},}\ }\href@noop {} {\bibfield
		{journal} {\bibinfo  {journal} {Quantum Topology}\ }\textbf
		{\bibinfo {volume} {10}},\ \bibinfo {pages} {593} (\bibinfo
		{year} {2019})}\BibitemShut {NoStop}%
	\bibitem [{\citenamefont {Bolt}\ \emph {et~al.}(2016)\citenamefont {Bolt},
		\citenamefont {Duclos-Cianci}, \citenamefont {Poulin},\ and\ \citenamefont
		{Stace}}]{bolt2016foliated}%
	\BibitemOpen
	\bibfield  {author} {\bibinfo {author} {\bibfnamefont {A}~\bibnamefont
			{Bolt}}, \bibinfo {author} {\bibfnamefont {G}~\bibnamefont {Duclos-Cianci}},
		\bibinfo {author} {\bibfnamefont {D}~\bibnamefont {Poulin}}, \ and\ \bibinfo
		{author} {\bibfnamefont {TM}~\bibnamefont {Stace}},\ }\bibfield  {title}
	{\enquote {\bibinfo {title} {Foliated quantum error-correcting codes},}\
	}\href@noop {} {\bibfield  {journal} {\bibinfo  {journal} {Physical Review
				Letters}\ }\textbf {\bibinfo {volume} {117}},\ \bibinfo {pages} {070501}
		(\bibinfo {year} {2016})}\BibitemShut {NoStop}%
	\bibitem [{\citenamefont {Brown}\ \emph
		{et~al.}(2018)\citenamefont {Brown}, \ and\ \citenamefont
		{Roberts}}]{brown2018universal}%
	\BibitemOpen
	\bibfield  {author} {\bibinfo {author} {\bibfnamefont {Benjamin~J}\ \bibnamefont
			{Brown}},\ and\ \bibinfo {author} {\bibfnamefont {Sam}\
			\bibnamefont {Roberts}},\ }\bibfield  {title} {\enquote {\bibinfo {title} {Universal fault-tolerant measurement-based quantum computation},}\ }\href@noop {} {\bibfield
		{journal} {\bibinfo  {journal} {arXiv preprint arXiv:1811.11780}\ } (\bibinfo
		{year} {2018})}\BibitemShut {NoStop}%
	\bibitem [{\citenamefont {Brown}\ \emph
		{et~al.}(2016{\natexlab{b}})\citenamefont {Brown}, \citenamefont
		{Nickerson},\ and\ \citenamefont {Browne}}]{brown2016fault}%
	\BibitemOpen
	\bibfield  {author} {\bibinfo {author} {\bibfnamefont {Benjamin~J}\
			\bibnamefont {Brown}}, \bibinfo {author} {\bibfnamefont {Naomi~H}\
			\bibnamefont {Nickerson}}, \ and\ \bibinfo {author} {\bibfnamefont {Dan~E}\
			\bibnamefont {Browne}},\ }\bibfield  {title} {\enquote {\bibinfo {title}
			{Fault-tolerant error correction with the gauge color code},}\ }\href@noop {}
	{\bibfield  {journal} {\bibinfo  {journal} {Nature Communications}\ }\textbf
		{\bibinfo {volume} {7}},\ \bibinfo {pages} {12302} (\bibinfo {year}
		{2016}{\natexlab{b}})}\BibitemShut {NoStop}%
	\bibitem [{\citenamefont {Kubica}\ \emph {et~al.}(2016)\citenamefont {Kubica},
		 and \citenamefont {Preskill}}]{KubicaToAppear}%
	\BibitemOpen
	\bibfield  {author} {\bibinfo {author} {\bibfnamefont {Aleksander}~\bibnamefont
			{Kubica}} and \bibinfo {author} {\bibfnamefont {John}~\bibnamefont {Preskill}},\ }\bibfield  {title}
	{\enquote {\bibinfo {title} {Cellular-Automaton Decoders with Provable Thresholds for Topological Codes},}\
	}\href@noop {} {\bibfield  {journal} {\bibinfo  {journal} {Physical Review
				Letters}\ }\textbf {\bibinfo {volume} {123}},\ \bibinfo {pages} {020501}
		(\bibinfo {year} {2019})}\BibitemShut {NoStop}%
	\bibitem [{\citenamefont {Else}\ and\ \citenamefont
		{Nayak}(2017)}]{else2017cheshire}%
	\BibitemOpen
	\bibfield  {journal} {  }\bibfield  {author} {\bibinfo {author} {\bibfnamefont
			{Dominic~V}\ \bibnamefont {Else}}\ and\ \bibinfo {author} {\bibfnamefont
			{Chetan}\ \bibnamefont {Nayak}},\ }\bibfield  {title} {\enquote {\bibinfo
			{title} {Cheshire charge in (3 + 1)-D topological phases},}\ }\href@noop {}
	{\bibfield  {journal} {\bibinfo  {journal} {Physical Review B}\
		}\textbf
	{\bibinfo {volume} {96}},\ \bibinfo {pages} {045136} (\bibinfo {year} {2017})}\BibitemShut {NoStop}%
	\bibitem [{\citenamefont {Else}\ \emph {et~al.}(2017)\citenamefont {Else},
		\citenamefont {Fendley}, \citenamefont {Kemp},\ and\ \citenamefont
		{Nayak}}]{else2017prethermal}%
	\BibitemOpen
	\bibfield  {author} {\bibinfo {author} {\bibfnamefont {Dominic~V}\
			\bibnamefont {Else}}, \bibinfo {author} {\bibfnamefont {Paul}\ \bibnamefont
			{Fendley}}, \bibinfo {author} {\bibfnamefont {Jack}\ \bibnamefont {Kemp}}, \
		and\ \bibinfo {author} {\bibfnamefont {Chetan}\ \bibnamefont {Nayak}},\
	}\bibfield  {title} {\enquote {\bibinfo {title} {Prethermal strong zero modes
				and topological qubits},}\ }\href@noop {} {\bibfield  {journal} {\bibinfo
			{journal} {Physical Review X}\ }\textbf {\bibinfo {volume} {7}},\ \bibinfo
		{pages} {041062} (\bibinfo {year} {2017})}\BibitemShut {NoStop}%
	\bibitem [{\citenamefont {Ma}\ \emph {et~al.}(2017)\citenamefont {Ma},
		\citenamefont {Lake}, \citenamefont {Chen},\ and\ \citenamefont
		{Hermele}}]{ma2017fracton}%
	\BibitemOpen
	\bibfield  {author} {\bibinfo {author} {\bibfnamefont {Han}\ \bibnamefont
			{Ma}}, \bibinfo {author} {\bibfnamefont {Ethan}\ \bibnamefont {Lake}},
		\bibinfo {author} {\bibfnamefont {Xie}\ \bibnamefont {Chen}}, \ and\ \bibinfo
		{author} {\bibfnamefont {Michael}\ \bibnamefont {Hermele}},\ }\bibfield
	{title} {\enquote {\bibinfo {title} {Fracton topological order via coupled
				layers},}\ }\href@noop {} {\bibfield  {journal} {\bibinfo  {journal}
			{Physical Review B}\ }\textbf {\bibinfo {volume} {95}},\ \bibinfo {pages}
		{245126} (\bibinfo {year} {2017})}\BibitemShut {NoStop}%
	\bibitem [{\citenamefont {Spohn}(1977)}]{Ergodic1}%
	\BibitemOpen
	\bibfield  {author} {\bibinfo {author} {\bibfnamefont {Herbert}\ \bibnamefont
			{Spohn}},\ }\bibfield  {title} {\enquote {\bibinfo {title} {An algebraic
				condition for the approach to equilibrium of an open n-level system},}\
	}\href@noop {} {\bibfield  {journal} {\bibinfo  {journal} {Letters in
				Mathematical Physics}\ }\textbf {\bibinfo {volume} {2}},\ \bibinfo {pages}
		{33--38} (\bibinfo {year} {1977})}\BibitemShut {NoStop}%
	\bibitem [{\citenamefont {Frigerio}(1978)}]{Ergodic2}%
	\BibitemOpen
	\bibfield  {author} {\bibinfo {author} {\bibfnamefont {Alberto}\ \bibnamefont
			{Frigerio}},\ }\bibfield  {title} {\enquote {\bibinfo {title} {Stationary
				states of quantum dynamical semigroups},}\ }\href@noop {} {\bibfield
		{journal} {\bibinfo  {journal} {Communications in Mathematical Physics}\
		}\textbf {\bibinfo {volume} {63}},\ \bibinfo {pages} {269--276} (\bibinfo
		{year} {1978})}\BibitemShut {NoStop}%
	\bibitem [{\citenamefont {Bombin}\ \emph {et~al.}(2012)\citenamefont {Bombin},
		\citenamefont {Duclos-Cianci},\ and\ \citenamefont
		{Poulin}}]{bombin2012universal}%
	\BibitemOpen
	\bibfield  {author} {\bibinfo {author} {\bibfnamefont {Hector}\ \bibnamefont
			{Bombin}}, \bibinfo {author} {\bibfnamefont {Guillaume}\ \bibnamefont
			{Duclos-Cianci}}, \ and\ \bibinfo {author} {\bibfnamefont {David}\
			\bibnamefont {Poulin}},\ }\bibfield  {title} {\enquote {\bibinfo {title}
			{Universal topological phase of two-dimensional stabilizer codes},}\
	}\href@noop {} {\bibfield  {journal} {\bibinfo  {journal} {New Journal of
				Physics}\ }\textbf {\bibinfo {volume} {14}},\ \bibinfo {pages} {073048}
		(\bibinfo {year} {2012})}\BibitemShut {NoStop}%
	\bibitem [{\citenamefont {Hastings}(2013)}]{hastings2013classifying}%
	\BibitemOpen
	\bibfield  {author} {\bibinfo {author} {\bibfnamefont {MB}~\bibnamefont
			{Hastings}},\ }\bibfield  {title} {\enquote {\bibinfo {title} {Classifying
				quantum phases with the kirby torus trick},}\ }\href@noop {} {\bibfield
		{journal} {\bibinfo  {journal} {Physical Review B}\ }\textbf {\bibinfo
			{volume} {88}},\ \bibinfo {pages} {165114} (\bibinfo {year}
		{2013})}\BibitemShut {NoStop}%
\end{thebibliography}
%

\newpage
\appendix

\section{Davies Formalism}\label{sec:Davies}
In this appendix we briefly review the Davies formalism. Recall the system-bath coupling
\begin{equation}
H_{\text{full}} = H_{\text{sys}} +H_{\text{bath}} +  \lambda \sum_{\alpha} S_{\alpha} \otimes B_{\alpha},
\end{equation}
where $S_{\alpha} \otimes B_{\alpha}$ describe the system-bath interaction for $S_{\alpha}$ a local operator acting on the system side, $B_{\alpha}$ is an operator acting on the bath side, and $\alpha$ is an arbitrary index. It is assumed that the coupling parameter is small, $|\lambda|\ll 1$.
Suppose that the state is initialized in a ground state $\rho(0)$ of $H_{\text{sys}}$, then the state evolves under a Markovian master equation 
\begin{equation}\label{eqLindblad}
\dot\rho(t) = -i[H_{\text{sys}}, \rho(t)] + \calL(\rho(t)),
\end{equation}
where $\calL$ is the Lindblad generator. Then the initial ground state $\rho(0)$ evolves under this master equation according to 
\begin{equation}
\rho(t) = e^{t\calL} (\rho(0)).
\end{equation}
Here, the Lindblad generator is given by
\begin{equation}\label{eqLindblad2}
\calL(\rho) = \sum_{\alpha, \omega} h(\alpha, \omega) \left(A_{\alpha, \omega} \rho A_{\alpha, \omega}^{\dagger} - \frac{1}{2}\{\rho, A_{\alpha, \omega}^{\dagger} A_{\alpha,\omega}\}.\right)
\end{equation}

In the above, $A_{\alpha,\omega}$ are the Fourier components of $A_{\alpha}(t) = e^{iH_{\text{sys}} t} A_{\alpha} e^{-i H_{\text{sys}} t}$, meaning they satisfy
\begin{equation}
\sum_{\omega}e^{-i \omega t} A_{\alpha, \omega} = e^{iH_{\text{sys}} t} A_{\alpha} e^{-i H_{\text{sys}} t}.
\end{equation}
One can think of $A_{\alpha, \omega}$ as the component of $A_{\alpha}$ that transfers energy $\omega$ from the system to the bath. Note that when the Hamiltonian $H_{\text{sys}}$ is comprised of commuting terms, the terms $A_{\alpha}(t)$ and therefore also $A_{\alpha, \omega}$ are local operators. The function $h(\alpha, \omega)$ can be thought of as determining the rate of quantum jumps induced by $A_{\alpha}$ that transfer energy $\omega$ from the system to the bath, and is the only part that depends on the bath temperature. It must satisfy the detailed balance condition $h(\alpha, -\omega) = e^{-\beta \omega} h(\alpha, \omega)$, which ensures that the Gibbs state
\begin{equation}
\rho_{\beta} = e^{-\beta H_{\text{sys}}}/ \Tr(e^{-\beta H_{\text{sys}}}),
\end{equation}
at inverse temperature $\beta$ is a fixed point of the dynamics of Eq.~(\ref{eqLindblad}). That is, $\rho_{\beta} = \lim_{t \rightarrow \infty} \rho(t)$. Moreover, under natural ergodicity conditions (see \cite{Ergodic1,Ergodic2} for more details), it is the unique fixed point. 

In the case that we have a symmetry ,
\begin{equation}\label{eqSBSym2}
[H_{\text{full}}, S(g)] = 0,
\end{equation}
then all of the errors that are introduced due to interactions with the bath must be from processes that conserve $S(g)$. In particular, only excitations that can be created by symmetric thermal errors will be allowed. Indeed, in the case that Eq.~(\ref{eqSBSym2}) holds, we will have that 
\begin{equation}
e^{\calL t} (S(g)^{\dagger} \rho_0 S(g)) = S(g)^{\dagger} e^{\calL t} (\rho_0) S(g)
\end{equation}
which justifies the consideration of the symmetric energy barrier in Eq.~(\ref{eqSymEnergyBarrier}).

We note that the assumptions of this formalism are satisfied for systems where the terms are comprised of commuting Paulis, as in this case the system Hamiltonian has a discrete spectrum with well separated eigenvalues. However the formalism will not necessarily work beyond this exact case, for instance, when perturbations are added and small energy splittings are introduced between previously degenerate eigenvalues. The study of thermalization times for many body stabilizer Hamiltonians in the presence of perturbations is an interesting problem.  

\section{Thermal instability of 0-form SPT ordered memories}\label{sec:ThermalOnsite}

In this appendix we argue that onsite symmetries are insufficient to promote a 2D topological quantum memory to be self-correcting, even if such a phase lives on the boundary of a 3D SPT model.  We restrict our discussion to the case where the boundary Hamiltonian is an abelian twisted quantum double.  The interesting case is where the boundary symmetry action is anomalous.  (However we don't allow this boundary symmetry action to permute the anyon types.)

We will argue that the boundary theory of a 3D SPT ordered bulk phase, if topologically ordered, will necessarily possess deconfined anyons.  That is, the boundary string operators corresponding to error chains can be deformed while still respecting the symmetry, even with anomaly.   We focus on (twisted) quantum doubles on the boundary of 3D group cohomology SPTs, and rather than going into the details of their construction, we focus on the key features. In particular, local degrees of freedom (of both bulk and boundary) for these models are labelled by group elements, as $\ket{g}$, $g \in G$. The symmetry action of these 2D (boundary) systems takes the form $S(g) = R(g) N(g)$, where $R(g) = \otimes_{i} u(g)$, with $u(g)= \sum_{h\in G} \ket{gh}\bra{h}$ and $N(g)$ is diagonal in the $\ket{g}$ basis and can be represented as a constant depth quantum circuit. One can think of $R(g)$ as the onsite action, and $N(g)$ as an anomaly.  This anomaly must be trivial in a strictly 2D system, or equivalently if the system is at the boundary of a trivial SPT phase.  

There are two types of excitation operators in the (twisted) quantum doubles. One type of excitation string operator for the boundary system is diagonal in the $\ket{g}$ basis (i.e., it is the same as in the untwisted theory), so it commutes with $N(g)$. This excitation string operator commutes with $u(g)$, up to a phase (that is a $k$th root of unity for some $k\in \mathbb{N}$), so to commute with $R(g)$ we need to consider excitation string operators of certain lengths. In particular, the process of creating an anyonic excitation at one boundary and dragging it to another boundary (or creating an anyon pair and dragging one around a nontrivial cycle before annihilating them again) can be done in a symmetric way. Since such an operation results in a logical error and only costs a constant amount of energy, we see that the boundary theory is unstable. 

Thus we see that the anomaly affords no extra stability, and the model has the same stability as a topological model with an extra onsite symmetry on top. That is, like genuine 2D topological models of this type, the model has a constant symmetric energy barrier. Note that this argument can break down in 4D, where the boundary is a 3D twisted quantum double.

Therefore we see that in the case of onsite (0-form) symmetries, the SPT ordered bulk offers no additional stability to the boundary theory. Indeed, the  symmetric energy barrier for the abelian twisted quantum double remains the same as the energy barrier without symmetry:  constant in the size of the system.  This motivates us to consider the boundaries of SPTs protected by 1-form (or other higher-form) symmetries.

\section{Short-range insensitivity of the boundary properties}\label{UVInsensitivity}

We show that boundaries of the 1-form symmetric RBH phase in Secs.~\ref{ToricBoundaries} and~\ref{secOtherBoundaries} are not sensitive to short-range (UV) details of the lattice.
	
Consider a lattice with boundary neighbourhood $R$ and Hamiltonian on this boundary $H_{R}$. We consider the case where $H_{R}$ is known, for example it may be the toric code boundary Hamiltonian of Sec.~\ref{ToricBoundaries}. Consider erasing a region $\mathcal{E} \subset R$, in the sense of removing all Hamiltonian terms with support intersecting $\mathcal{E}$, along with potentially adding or removing spins. We show that knowledge of the Hamiltonian restricted to $R\setminus \mathcal{E}$, which we denote $H_{R\setminus \mathcal{E}}$, is sufficient to determine the possible symmetric Hamiltonians $H_{\mathcal{E}}$ on $\mathcal{E}$ and that they must belong to the same phase of matter as  $H_{R\setminus \mathcal{E}}$. In other words, knowledge of $H_{R}$ within some region (with a potentially nice lattice geometry) is sufficient to canonically choose the Hamiltonian in neighbouring regions which may not enjoy the nice lattice geometry.

Our argument extends upon that of Sec.~\ref{SecSymBoundary}, where we determine the boundary theory by analysing the loop-like boundary action of 1-form symmetry operators. All 2D topological order can be understood in this way -- in terms of certain loop operators that preserve the ground space. In the case of stabilizer models, this property can be formulated in a precise way~\cite{bombin2012universal}. Indeed, for stabilizer models, the topological order is entirely characterised by the local commutation relations between loops that stabilize the ground space. 

Consider a pair of 1-form symmetry operators $S_1$, $S_2$ with some support on the boundary $R$. The action of these 1-form operators on the boundary degrees of freedom is given by loop-like operators $l_1$, $l_2$. Loop operators $l_1$ and $l_2$ must preserve the ground space of the Hamiltonian $H_{R}$ on the boundary in order for the boundary to be symmetric. We consider the case where the induced loop operators $l_1$ and $l_2$ intersect within two regions $a$ and $b$ with $a\subset R\setminus \mathcal{E}$ and $b\subseteq \mathcal{E}$. Since $[S_1, S_2] = 0$ we must also have $[l_1, l_2] = 0$. From this it follows $[l_1|_a, l_2|_a] = 0 $ $\iff $ $[l_1|_b, l_2|_b] = 0 $ and $\{l_1|_a, l_2|_a\} = 0 $ $\iff $ $\{l_1|_b, l_2|_b\} = 0 $, where the notation $|_{a}$ and $|_{b}$ denotes the restrictions of the operators to $a$ and $b$, respectively. That is, knowledge of local commutation relations of loops within $a$ is sufficient to determine the local commutation relations within $b$. As all 2D stabilizer models are characterised by the the local commutation relations of loops that stabilize their ground space~\cite{ bombin2012universal}, the phase in $R\setminus\mathcal{E}$ is sufficient to determine the phase within the erased region $\mathcal{E}$. 

Our discussion is simplified greatly by the fact that our model is a topological stabilizer model, but similar arguments can be made for more general 1-form symmetric models in 3D. We remark that our argument bears some resemblance with Hasting's ``healing the puncture" technique in Ref.~\cite{hastings2013classifying}.

\section{Energy barrier is sufficient}\label{PeierlsR}

In this appendix, we consider the timescale for logical faults in the 1-form symmetric RBH model.  We estimate the probability that an excitation loop $l$ of size $w$ emerges within the Gibbs ensemble at inverse temperature $\beta$. We show that large loop errors are quite rare if the temperature is below a critical temperature $T_c$, which we lower bound by $2/\log(5)$.
 
Recall the symmetric excitations are given by applying operators $Z(E', F') = \prod_{f \in F'}Z_f \prod_{e \in E'}Z_e$, where $E'$ is a cycle (i.e.,~has no boundary) and $F'$ is dual to a cycle on the dual lattice. We will refer to both such subsets $E'$ and $F'$ as cycles, $l = E' \cup F'$, and the resulting excitation $\ket{\psi(l)}$ as an excitation loop configuration. Moreover, we will refer to each connected component of $l$ as a loop (intuitively loops are minimal in that no proper subset of a loop can be a cycle). The energy $E(\gamma)$ of such an excitation configuration is given by $2|(E' \cup F') \cap \interior{\calL}| + 2|\partial (E' \cup F') \cap \partial {\calL}|$, i.e., it is proportional to the length of the bulk cycle plus the number of times a bulk cycle touches the boundary. Then the Gibbs state $\rho_{\beta}$ is given by the weighted mixture of all symmetric excitations, where the weights are given by
\begin{equation}
P_{\beta}(\gamma) = \frac{1}{\calZ}e^{{-\beta E{(\gamma)}}}, \quad \calZ = \sum_{\gamma} P_{\beta}(\gamma),
\end{equation}
and $\gamma = (E', F')$ represents a valid (i.e.,~symmetric) excitation. 

Define $d = \min \{d_{Z} , d_{X} , d_{\text{cond}} \}$ from Def.~\ref{defndist}. For a logical error to have occurred during the system-bath interaction, we must pass through an excited state $\ket{\psi(c)}$ such that $c$ contains a bulk loop with length $w \geq d - r$, for some constant $r$ independent of system size. (Here a bulk loop is one where at least half of its support is away from the boundary). Let us bound the probability that configurations containing such a loop occurs. Define $\calB_w$ to be the set of cycles containing a bulk loop with size at least $w$. Then 
\begin{align}
\sum_{c \in \calB_w} P_{\beta}(c) &\leq \sum_{\substack{\text{loops } l \\ |l| \geq w}} ~\sum_{\substack{\text{cycles } c \\ l \subset c}} P_{\beta}(c) \\ 
&\leq \sum_{\substack{\text{loops } l \\ |l| \geq w}} ~e^{-\beta E(l)}\sum_{\substack{\text{cycles } c \\ l \not\subset c}} P_{\beta}(c) \\ 
&\leq \sum_{\substack{\text{loops } l \\ |l| \geq w}} ~e^{-\beta E(l)}, \label{eqLoopBound} 
\end{align}
where from the first to the second line we have used that a configuration $c$ containing a loop $l$ differs in energy from the configuration $c \setminus l$ by $E(c) = e^{-\beta E(l)} E(c\setminus l)$. Now the last line can be rewritten to give  
\begin{align}
\sum_{c \in \calB_w} P_{\beta}(c) &\leq \sum_{k \geq w} N(k) e^{-2\beta k},
\end{align}
where we have ignored contributions to $E(l)$ due to the boundary (these will only decrease the right hand side of Eq.~(\ref{eqLoopBound})) and $N(k)$ counts the number of loops of size $k$. Since a loop $l$ resides on either the primal or dual sublattice, each of which has the structure of a cubic lattice, we can obtain a crude upper bound on $N(k)$ by considering a loop as a non-backtracking walk, where at each step one can move in 5 independent directions. This gives the bound $N(k) \leq p(d) 5^m=k$, where $p(d)$ is a polynomial in $d$, and is in particular proportional to the number of qubits. 

Then, provided $T \leq 2/ \log(5)$, we have 
\begin{align}
\sum_{c \in \calB_w} P_{\beta}(c) &\leq p(d) \sum_{k \geq w} e^{k(\log(5)- 2\beta)}\\
&= p(d)\frac{1}{(1-e^{\log(5)- 2\beta })} e^{k(\log(5)- 2\beta)}
\end{align}
which is exponentially decaying in $k$ (again provided $T \leq 2/ \log(5)$). Since errors can be achieved only if we pass through a configuration with a bulk loop of length $d-r$, we have the contribution of configurations that can cause a logical error is bounded by 
\begin{equation}
\text{poly}(d)\frac{1}{(1-e^{-\alpha})} e^{-\alpha d}
\end{equation}
where $\alpha = 2\beta - \log(5) \textgreater 0$ is satisfied when the temperature is small enough. One can show that the decay rate of the logical operators is exponentially long, and therefore the fidelity of the logical information is exponentially long in the system size (see Proposition 1 of Ref.~\cite{alicki2010thermal}). One could perform a more detailed calculation to show that, with a suitable decoder, error correction succeeds after an evolution time that grows exponentially in the system size (i.e., that logical faults are also not introduced during the decoding).

We also note that a similar argument can be made for the commuting gauge color code model of Sec.~\ref{sec:GCC}. A different critical temperature will be observed that depends on the choice of 3-colex.

\end{document}